 \def\BIBand{and}%
\def\<{{\langle}}
\def\>{{\rangle}}
\def\1Lip{1\text{-Lip}}
\def\l({\left(}
\def\r){\right)}
\def\dd{\mathrm{d}}
\DeclareMathOperator{\dom}{dom}
\DeclareMathOperator{\Pb}{P}
\DeclareMathOperator*{\esssup}{ess\,sup}
\DeclareMathOperator*{\argmax}{argmax}
\DeclareMathOperator*{\argmin}{argmin}
\def\EMAIL#1{\href{mailto:#1}{#1}}% When hyperref is used, otherwise outcomment 
\theoremstyle{plain}
\newtheorem{theorem}{Theorem}
\newtheorem{lemma}{Lemma}
\newtheorem{proposition}{Proposition}
\newtheorem{corollary}{Corollary}
\theoremstyle{definition}
\newtheorem{definition}{Definition}
\theoremstyle{remark}
\newtheorem{remark}{Remark}
\newenvironment{APPENDICES}{\appendix}{}
\title{Decarbonization of financial markets:\\ a mean-field game approach}
\author{
Pierre Lavigne\\
Univesrit\'e C\^ote d'Azur, Parc Valrose, 06108 Nice, France\\
\EMAIL{pierre.lavigne@unice.fr}
\and
Peter Tankov\\
CREST-ENSAE, Institut Polytechnique de Paris, \\5 avenue Henry Le Chatelier 91120 Palaiseau, France\\
\EMAIL{peter.tankov@ensae.fr}
}
\date{}
\begin{document}
\maketitle
\begin{abstract}
We develop a financial market model in which a large population of firms chooses dynamic emission strategies under climate transition risk, interacting with both environmentally concerned and neutral investors. Firms face a trade-off between financial returns and environmental performance, while their decisions are coupled through an equilibrium stochastic discount factor determined by investors’ portfolio allocations. The framework is formulated as a mean-field game, for which we establish existence and uniqueness of a Nash equilibrium among firms. We propose a convergent numerical scheme to compute the equilibrium and use it to study how climate transition risk and green-minded investors affect decarbonization dynamics and asset prices. Our results show that uncertainty about future climate risks and policies increases aggregate emissions and widens valuation spreads between green and brown firms. Although environmentally concerned investors can partially offset these effects by raising the cost of capital for high-emission firms and incentivizing emission reductions, policy uncertainty weakens their impact. Even a large share of green-minded investors is insufficient to reverse emission growth when future climate policies are unclear, highlighting the crucial role of credible and predictable climate policy in enabling financial markets to support decarbonization.
\end{abstract}

\paragraph*{Keywords}
Decarbonization; climate transition risk; large financial markets; equilibrium asset pricing; mean-field games
\paragraph*{MSC 2020}
Primary: 91A16; secondary: 91B76, 91G15
%%%%%%%%%%%%%%%%%%%%%%%%%%%%%%%%%%%%%%%%%%%%%%%%%%%%%%%%%%%%%%%%%%%%%%

\section{Introduction}

Decarbonization of industry is an essential ingredient for a successful environmental transition, and the financial sector has a key role to play in meeting the financing needs of green companies and directing the funds away from brown, carbon intensive projects. The amount of assets invested in climate-aware funds increased more than two-fold in each year between 2018 and 2021, reaching USD 408 billion at the end of 2021 \citep{morningstar}, and several authors aimed to quantify the impact of these additional funding flows on the emission reductions in the real economy. Such impact can be achieved only if green-minded investors target a sufficiently large proportion of companies \citep{berk2021impact}, and the environmental performance of each company depends on factors which are not directly controlled by investors, such as the general economic situation, financial health of the company, and future climate policies. The decarbonization of a financial market is therefore the result of interaction of a large number of companies, operating in an uncertain environment, and should be modeled as a dynamic stochastic game with a large number of players.  

Here we develop a dynamic model for the decarbonization of a large financial market, arising from an equilibrium dynamics involving companies and investors, and built using the analytical framework of mean-field games. %which allow to describe a dynamic stochastic game with a large number of players in a mathematically tractable way  
Mean-field games, introduced in \cite{huang2006large} and  \cite{lasry2007mean} provide a rigorous way to pass to the limit of a continuum of agents in stochastic dynamic games with a large number of identical agents and symmetric interactions. In the limit, the representative agent interacts with the average density of the other agents (the mean field) rather than with each individual agent. This limiting argument simplifies the problem, leading to explicit solutions or efficient numerical methods for computing the equilibrium dynamics.

The key ingredient of our framework is the notion of mean-field financial market, which describes a large financial market with a continuum of small firms, where the performance of each firm is driven by idiosyncratic noise and a finite number of market-wide risk factors (common noise). We assume that the investors in this market are 'large' meaning that in every investor's portfolio the idiosyncratic risk of small firms is completely diversified, and the portfolio value depends only on market-wide risk factors. Consequently, and consistently with the classical finance theories \citep{ang2007capm,jagannathan1995capm}, only market-wide risk factors are priced, and the stochastic discount factor depends only on the common noise and the 'mean-field'. 

We then consider a mean-field market where shares of a continuum of carbon-emitting firms are traded. Each firm determines its dynamic stochastic emission schedule based on its own information and on the market-wide risk factors and market-wide decarbonization dynamics, rather than on the individual decision of each other small firm, which it cannot observe. To fix its emission level, each firm optimizes a criterion depending on its financial and environmental performance. The financial performance is measured by the market value of the firm's shares and therefore depends on the stochastic discount factor, introducing an interaction between the firms. The environmental performance is measured by carbon emissions, which are penalized in the optimization functional of the firm. The strength of this emission penalty is stochastic, reflecting the uncertainty of climate transition risk. This ``stochastic carbon penalty" is a key feature of our model, allowing us to analyze the impact of climate policy uncertainty on market decarbonization and asset prices in a diffusion setting. We show that  higher uncertainty about future climate policies and
transition risks creates incentive for all companies to emit more carbon and leads to
higher share prices and higher spreads between share prices of carbon efficient and
carbon intensive companies, confirming the findings of \cite{de2022climate} in a more realistic setting with stochastic emission schedules. 

The second key ingredient of our model is the interaction between two large investors (or two classes of investors), with different views about the future: while the regular investor uses the real-world measure, the green-minded investor uses an alternative measure, which may, for example, overweight the probability of some environmental policies, making the costs of climate transition more material. In the presence of such green-minded investors, all companies will reduce their emissions and pay lower dividends, leading to lower share prices. However, carbon intensive companies are affected much stronger than climate-friendly carbon efficient companies. This pressure on share prices, in turn, spurs the polluting companies to decrease their emissions.

We summarize the interaction channels and the structure of the game of the present article in figure \ref{fig:structure} below. The interaction goes as follows:
\begin{itemize}
    \item On the one hand, given a stochastic discount factor $\xi$, the firms choose optimal emissions $\psi$, driving their economic values $V$ (the problem of the firms is presented on page \pageref{eq:criterion});
    \item On the other hand, the investors $i \in \{r,g\}$ optimize their wealth $W^i_T$ depending on their greenness (the investors' problems are presented on page \pageref{pb:investor-individual-r});
    \item All the players (the firms and the investors) are coupled through the terminal market clearing condition: the wealth of the investors equals the economic value of the firms (the market clearing condition is presented on page \pageref{clearing}).
\end{itemize}

\begin{figure}[H]
    \centering 
	    \begin{tikzpicture}[line/.style={shorten >=0.2cm,shorten <=0.2cm},thick,scale = 0.7, every node/.style={transform shape}]

	        %%% Nodes %%%
	        
	        \node (firm) [rectangle,draw,inner sep=0pt,minimum width = 4.5cm, minimum height = 2.5cm,text depth=5ex] at (-6.5,0) {Firm's problem};
            \node (J) at (-6.5,-0.4) {$\displaystyle  \sup_{\psi} J[\xi](\psi)$};
            
            \node (investor) [rectangle,draw,inner sep=0pt,minimum width = 4.5cm, minimum height = 2.5cm,text depth=5ex] at (6.5,0) {Investor's problem};
            \node (G) at (6.5,-0.4) {$\displaystyle \sup_{W^i_T} U^i[\xi](W^i_T), \; i \in \{r,g\}$};
            
            \node (MC) [rectangle,draw,inner sep=0pt,minimum width = 4.5cm, minimum height = 2.5cm,text depth=5ex] at (0,0) {Market Clearing};
            \node (eq) at (0,-0.4) {$\displaystyle \mathbb{E}[V_T\vert \mathcal{F}_T^0] = W^r_T + W^g_T$};

            \node (psi)  at (-3.25,2.5) {Emissions $\psi$ and firm value $V$};
            \node (W)  at (3.25,2.5) {Wealths $(W^r_T, W^g_T)$};
            \node (xi-firm)  at (-3.25,-2.5) {Stochastic discount factor $\xi$};
            \node (xi-investor)  at (3.25,-2.5) {Stochastic discount factor $\xi$};

            %%% paths %%%

            \path [black,bend left,line] [->]  (MC) edge (firm);
            \path [black,bend left,line] [<-]   (investor) edge (MC) ;

            \path [black,line,bend left] [->]  (firm) edge (MC);
            \path [black,bend left,line] [<-]  (MC) edge  (investor);
        \end{tikzpicture}
    \caption{Structure of the game.}
    \label{fig:structure}
\end{figure}

We rigorously prove the existence and uniqueness of the mean-field game Nash equilibrium for the contunuum of firms interacting through market prices of their shares, providing a robust solution to the stochastic ``decarbonization game" in a competitive environment. The equilibrium is materialized by the equilibrium stochastic discount factor, which can be used to compute share prices and emission strategies for each firm. We then develop a convergent numerical algorithm to compute the equilibrium and use it to study the impact of climate transition risk and green investors on the market decarbonization dynamics and share prices. %In the presence of green-minded investors, however, all firms will emit less carbon, pay lower dividends and thus have lower share prices. 

The paper is structured as follows. In Section \ref{mfg.sec}, we describe the mean-field game setting and define the optimization problem of individual firms and the notion of equilibrium used in this paper. In Section \ref{sec:main-result}, we state and prove the main existence and uniqueness result, propose a convergent numerical algorithm for computing the equilibrium, and prove its convergence to the solution of the mean field game problem is shown. Section \ref{examples.sec} illustrates the theory with numerical examples and discusses the implications of our results, and section \ref{conclusion.sec} concludes the paper.
In Appendix \ref{app:nfirm.sec} we link the mean field game model with a $n$-player game, and present a convergence result.

\paragraph{Related literature}

Our paper contributes to the emerging literature on impact investing, and on the role of climate risk and uncertainty in sustainable finance. \cite{pastor2021sustainable} build a one-period equilibrium model to describe the impact of investors' ESG preferences on the performance of green assets. \cite{avramov2022sustainable} analyze the asset pricing implications of the uncertainty of corporate ESG profiles, also in a one-period setting. The impact of sustainable investing on the cost of capital, firm behavior 
 and social outcomes is studied, in a one-period equilibrium model, in an early contribution of \cite{heinkel2001effect}, and more recently, theoretically and empirically, in \citep{berk2021impact,chowdhry2019investing,green2021allocation,landier2020esg,oehmke2022theory}. The impact of climate risk on investors' choices is also studied in a number of papers. \cite{bolton2021investors} provide empirical evidence that investors are requiring compensation for their exposure to carbon risk; \cite{ilhan2021carbon} argue that climate policy uncertainty is priced in the options market, and in particular that cost of downside protection is larger for firms with carbon intensive business models; \cite{bourgey2022bridging} quantify the impact of carbon emissions on a firm's credit risk in the context of shared socio-economic pathways, \cite{huang2018impact} show that firms adapt their financing choices to physical climate risks affecting their performance; and \cite{krueger2020importance} report the results of a survey showing that climate risks are considered to be material by institutional investors, while \cite{alekseev2022quantity}, \cite{andersson2016hedging} and  \cite{engle2020hedging}  develop investment strategies allowing to hedge these risks. 

Given that the vast majority of theoretical research on impact investing is done in a one-period setting,  \cite{de2022climate} develop a continuous-time model for the impact of green investment on company emission dynamics, allowing for time-dependent emission schedules. However, in their model the emissions schedules are deterministic (fixed at time $0$), and the arguments used to prove existence of equilibrium are partly heuristic, because the number of firms is assumed to be finite. In our model the emission schedules are stochastic and the firms can change them at any time depending on their financial performance of the individual company, the market dynamics, and the materialization of climate risk;  moreover, the existence and uniqueness of equilibrium is shown rigorously. 

Our model is based on the general framework of mean field games with common noise \citep{ahuja2016wellposedness,cardaliaguet2019master,carmona2018probabilistic,carmona2016mean,djete2021large}. The latter are often untracktable due to the lack of compactness of the space containing the random measures, which correspond to the conditional law of the state process given the common noise. A possible workaround is to discretize the common noise to recover compactness and apply Kakutani's or Schauder's fixed point theorem \citep{barrasso2022controlled,cardaliaguet2022mean,carmona2016mean}.
In this paper, instead of working on the space of random probability measures, we write our mean field game problem as a fixed point equation for the stochastic discount factor (interaction term). The proof strategy is to form a strictly convex minimization problem whose first order condition is equivalent to the fixed point problem. This allows us to show the existence and uniqueness of a strong solution to the mean field game problem and to apply the generalized conditional gradient algorithm \citep{bredies2009generalized} to provide a numerical solution.

We also contribute to the literature on equilibrium models based on mean-field games.  In mathematical finance  \cite{casgrain2020mean,feron2022price,fu2021mean,fujii2022mean,gomes2021mean}, among other authors, developed models of equilibrium price formation based on interaction of a large number of traders in the framework of mean-field games. In these models, although market clearing may be imposed, the agents are cost minimizers rather than utility optimizers so that Arrow-Debreu equilibrium is not constructed. A game of many utility maximizing agents is considered in \cite{lacker2020many}, but the agents are price takers in this reference. Instead, we build on the classical equilibrium approach, along the lines of \cite{duffie1986stochastic,duffie1985implementing,huang1987intertemporal} and many more recent papers \citep{anderson2008equilibrium,hugonnier2012endogenous,riedel2013existence}, where utility-optimizing agents determine prices in equilibrium. Although mean-field games and related notions have been used in economics, for example, in industry dynamics models \citep{luttmer2007selection} and for modeling income and wealth distributions \citep{achdou2022income}. They have also been used in equilibrium pricing theory, see \citep{fujii2022mean,fujii2022strong}.

\section{A mean-field game model of decarbonization}
\label{mfg.sec}

In this section we present the problem under study.  We start this section presenting the stochastic context of the model. We then present the problem of the representative firm. We continue with the problem of the green and brown investors. Finally, we state the mean field game equilibrium problem we solve in this article.
This problem can be understood as a limiting game for a $N$-firm problem, as $N$ tends to infinity. Such model is presented, and the connexion with the mean field limit is established, in Appendix \ref{app:nfirm.sec}.

\paragraph*{Stochastic context and notations}The following notations shall be used for the rest of this paper. Let $T>0$ be a time horizon. 
Let $(\Omega,\mathcal{F}, \mathbb{P})$ be a probability space, supporting a pair $(B,B^0)$ of independent (possibly multidimensional) standard Brownian motions and an independent random variable $X$ corresponding to the initial condition. We denote by $\mathbb F = (\mathcal F_t)_{0\leq t\leq T}$ the $\mathbb P$-complete natural filtration of $(X,B,B^0)$ and by $\mathbb F^0 = (\mathcal F^0_t)_{0\leq t\leq T}$ the $\mathbb P$-complete natural filtration of $B^0$. 
In our setting, $B$ is an idiosyncratic noise and $B^0$ is the common noise. Following standard mean field game theory, the common noise $B^0$ appears in the state equation of the representative firm and captures the intrinsic randomness of the market environment. The expectation with respect to the reference measure $\mathbb P$ will be denoted simply by $\mathbb E$, and the expectation with respect to any other measure $\mathbb Q$ will be denoted by $\mathbb E_{\mathbb Q}$. 
%The common noise $W^\star$ appears in the green investor criteria through a change of measure allowing him to take into account the climatic risk induced by its portofolio. 
%{\color{red}
%For any random variable $X$ and any probability measure $\mathbb{Q}  \colon \mathcal F \mapsto [0,1]$ we denote
%\begin{equation} \nonumber
%    \mathbb{E}[X] = \mathbb{E}_\mathbb{\mathbb{P}}[X], \quad \mathbb{E}_\mathbb{Q} \left[X\right] = \int_{\Omega }X(\omega) \dd \mathbb{Q}(\omega).
%\end{equation}}

In general, for a $\sigma$-field $\mathcal G$, we denote by $L^p(\mathcal G)$ the space of $\mathcal G$-measurable random variables $X$ satisfying
$$
\mathbb E[|X|^p]<+\infty,
$$
and for a filtration $\mathbb G = (\mathcal G_t)_{0\leq t\leq T}$ we denote by $L^p(\mathbb G)$ the space of $\mathbb G$-progressively measurable random processes $(X_t)_{0\leq t\leq T}$ satisfying
$$
\mathbb E\left[\int_0^T |X_t|^p \dd t\right]<+\infty.
$$
Similarly, we denote by $L^0(\mathcal G)$ and $L^0(\mathbb G)$ the spaces of measurable random variables and progressively measurable random processes without a specific integrability condition. Moreover, for a $\sigma$-field $\mathcal G$ we write
\begin{equation*} 
    L_+^p(\mathcal G) = \{X \in L^p(\mathcal G); X\geq 0\;\text{a.s.} \}.
\end{equation*}
Finally,  $\mathds{1} \in L^p(\mathcal{F}_T)$ for all $p$ denotes the constant variable equal to $1$ almost surely.

It is clear that $\mathbb F^0$ satisfies the \emph{immersion property} with respect to $\mathbb F$  \citep[page 5]{carmonadelaruev2}, which implies, in particular, that for all $\zeta\in L^1(\mathcal F_t)$, $\mathbb E[\zeta|\mathcal F^0_T] = \mathbb E[\zeta|\mathcal F^0_t]$.

\subsection{The representative firm problem}
\label{sec:representative-firm}
%We fix a measure $m \in \mathcal{P}_2(\mathbb{R},\mathcal{F}^B_T,\mathbb{P})$. We fix $\mu,\sigma, \sigma^0, c \in L^{+\infty}(\mathcal{F}_T,\mathbb{P})$ and $V \in L^2(\mathcal{F}_0)$ to be independent random variables, independent of the Brownian motions.

We consider an economy with a large number (continuum) of firms, whose shares are traded in the financial market.  
The value of the representative firm per share  is assumed to evolve according to the following stochastic differential equation
\begin{equation}
    \dd  V_t =  V_t (\mu_t \dd t + \sigma_t \dd B_t + \sigma^0_t  \dd B^0_t) + e_t \psi_t \dd t,\quad  V_0 = V, \label{eq:state-equation}
\end{equation}
where $V\in L^2(\mathcal F_0)$ and {$V>0$\;almost surely}. 
Here, the Brownian motion $B$ models the idiosyncratic risk of the company, while $B^0$ corresponds to market-wide and economy-wide risk factors. The process $\psi_t$ denotes the instantaneous emissions of the representative firm and the process $e_t$ is inversely proportional to the emission intensity of production and will be referred to as ``emission efficacy" in the sequel; the values of this process are low for carbon-intensive companies and high for green, carbon efficient firms which create a lot of wealth with little emissions. Thus, in the absence of emissions, the firm value follows an autonomous dynamics (with perhaps $\mu_t<0$), and to increase the firm value beyond this baseline, the firm must produce and therefore increase its emissions.

In accordance with the mean-field games paradigm, the process $V$ defines simultaneously the dynamics of a single firm value and the distribution of values of all firms in the market at any given time. For example, the law of $V_t$ conditional on the common noise, denoted by $m_t = \mathcal{L}(V_t|\mathcal F^0_t)$ defines the distribution of firm values at time $t$, given the realization of the common factors up to time $t$. The process $(m_t)_{0\leq t\leq T}$ is a random process, adapted to the filtration of the common noise $\mathbb F^0$ and taking values in the space of probability measures on $\mathbb R_+$. Since $m_t$ is a probability measure, we implicitly assume that the total quantity of all firms in the market (the market size) is normalized to $1$. 

The processes $(\mu_t)_{0\leq t\leq T}$, $(\sigma_t)_{0\leq t\leq T}$, $(\sigma^0_t)_{0\leq t\leq T}$ and $(e_t)_{0\leq t \leq T}$ are assumed to belong to $L^0(\mathbb F)$ and satisfy
$$
\int_0^T (|\mu_t| + |\sigma_t|^2 + |\sigma^0_t|^2 + |e_t|^2) \dd t<+\infty\quad \text{a.s.}
$$
The parameter $\mu$ is real-valued, $\sigma$ and $\sigma^0$ are vector-valued of the same dimension as the corresponding Brownian motions, and $e$ takes positive values almost surely. 

Each firm controls its value via the control $\psi$ which represents its emission flow.
We denote $V^\psi$ the solution to the state equation for a given control $\psi$.
We define the following stochastic exponential 
\begin{equation} \label{def:E-s-t}
    \mathcal{E}_t =  \exp\left(\int_0^t \sigma_s \dd B_s + \int_0^t \sigma^0_s \dd B_s^0  + \int_0^t (\mu_s - \frac{|\sigma_s|^2}{2} - \frac{|\sigma_s^0|^2}{2} )\dd s\right), \\
\end{equation}
and we denote $\mathcal{E}_{s,t} \coloneqq \mathcal{E}_{t} \mathcal{E}_{s}^{-1}$ for all $0 \leq s \leq t \leq T$.
\begin{lemma}
    For any $\psi$ such that $\int_0^T \psi_s^2 ds <+\infty$ a.s., the equation \eqref{eq:state-equation} has a unique solution  given by 
    \begin{equation} \label{eq:state-equation-explicit-form}
        V^\psi_t = \mathcal{E}_t V + \int_0^t \mathcal{E}_{s,t} e_s \psi_s \dd s.
        \end{equation}
\end{lemma}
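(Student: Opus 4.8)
The plan is to reduce the inhomogeneous linear equation \eqref{eq:state-equation} to the homogeneous equation solved by $\mathcal{E}$ via the classical variation-of-constants (integration-by-parts) argument, carried out pathwise since we only assume almost sure integrability of the coefficients.

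First I would record the elementary properties of $\mathcal{E}$. Under the hypothesis $\int_0^T(|\mu_t|+|\sigma_t|^2+|\sigma^0_t|^2)\dd t<+\infty$ a.s., the process inside the exponential in \eqref{def:E-s-t} is a well-defined continuous semimartingale, so $(\mathcal{E}_t)_{0\le t\le T}$ is continuous and strictly positive, and by It\^o's formula it solves $\dd\mathcal{E}_t=\mathcal{E}_t(\mu_t\dd t+\sigma_t\dd B_t+\sigma^0_t\dd B^0_t)$ with $\mathcal{E}_0=1$. Applying It\^o's formula to $x\mapsto 1/x$ then shows that $\mathcal{E}^{-1}$ is continuous, strictly positive and satisfies
\[
\dd\mathcal{E}_t^{-1}=\mathcal{E}_t^{-1}\big(-\mu_t\dd t-\sigma_t\dd B_t-\sigma^0_t\dd B^0_t+(|\sigma_t|^2+|\sigma^0_t|^2)\dd t\big).
\]
Since $\mathcal{E}$ and $\mathcal{E}^{-1}$ have continuous paths on the compact interval $[0,T]$ they are pathwise bounded, and by Cauchy--Schwarz $\int_0^T|c_s\psi_s|\dd s\le\big(\int_0^Tc_s^2\dd s\big)^{1/2}\big(\int_0^T\psi_s^2\dd s\big)^{1/2}<+\infty$ a.s., so the integral on the right-hand side of \eqref{eq:state-equation-explicit-form} is a.s.\ absolutely convergent and defines a continuous process.

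For uniqueness I would take an arbitrary solution $V$ of \eqref{eq:state-equation} and apply the integration-by-parts formula to the product $V_t\mathcal{E}_t^{-1}$. Using the two stochastic differentials above, the drift terms in $\mu_t$ cancel, the martingale terms in $\sigma_t$ and $\sigma^0_t$ cancel, and the covariation term $\dd\langle V,\mathcal{E}^{-1}\rangle_t=-V_t\mathcal{E}_t^{-1}(|\sigma_t|^2+|\sigma^0_t|^2)\dd t$ cancels the remaining $+(|\sigma_t|^2+|\sigma^0_t|^2)$ drift contribution, leaving $\dd(V_t\mathcal{E}_t^{-1})=\mathcal{E}_t^{-1}c_t\psi_t\dd t$; integrating from $0$ to $t$ and multiplying by $\mathcal{E}_t$ yields exactly \eqref{eq:state-equation-explicit-form}, so any solution coincides with the stated formula. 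For existence I would check conversely that the process defined by \eqref{eq:state-equation-explicit-form}, written as $V^\psi_t=\mathcal{E}_t\big(V+\int_0^t\mathcal{E}_s^{-1}c_s\psi_s\dd s\big)$, indeed solves \eqref{eq:state-equation}: integration by parts applied to the product of the semimartingale $\mathcal{E}_t$ and the finite-variation process $V+\int_0^t\mathcal{E}_s^{-1}c_s\psi_s\dd s$ (which contributes no covariation term) makes the factors $\mathcal{E}_s^{-1}$ and $\mathcal{E}_s$ telescope and recovers $\dd V^\psi_t=V^\psi_t(\mu_t\dd t+\sigma_t\dd B_t+\sigma^0_t\dd B^0_t)+c_t\psi_t\dd t$ with $V^\psi_0=V$.

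The only real subtlety, rather than a genuine obstacle, is that we work under almost sure integrability of the coefficients rather than $L^p$ bounds, so the stochastic integrals are merely local martingales and no expectation/martingale argument is available; every manipulation above must therefore be justified pathwise, using continuity of $\mathcal{E}$ and $\mathcal{E}^{-1}$ on $[0,T]$ together with a standard localization by stopping times such as $\tau_n=\inf\{t\ge 0:\int_0^t(|\mu_s|+|\sigma_s|^2+|\sigma^0_s|^2+c_s^2+\psi_s^2)\dd s\ge n\}\wedge T$, applying It\^o's formula and integration by parts on $[0,\tau_n]$ and letting $n\to\infty$.
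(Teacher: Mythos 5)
Your argument is correct and is precisely the classical variation-of-constants proof for linear SDEs that underlies the two results of Protter (Chapter V, Theorems 7 and 52) which the paper simply cites in lieu of a written-out proof; the It\^o computations for $\mathcal{E}$ and $\mathcal{E}^{-1}$, the cancellation in the integration by parts, the Cauchy--Schwarz bound ensuring the Lebesgue integral is finite, and the pathwise localization are all sound. In short, you have supplied the details that the paper outsources to the reference, with no gap.
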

\begin{proof}
This result follows from \cite[Theorem 7, Chapter V]{protter2005stochastic} and \cite[Theorem 52, Chapter V]{protter2005stochastic}.
\end{proof}

%    In integral form the state equation \eqref{eq:state-equation} writes
%\begin{align*}
%    V_t & = H_t + \int_0^t V_s \dd Z_s, \\
%    H_t & = V_0 + \int_0^t c_s \psi_s \dd s, \\
%    Z_t & = 1 + \int_{0}^T \mu_s \dd s + \int_{0}^t \sigma_s \dd B_s + \int_{0}^t \sigma^0_s  \dd B^0_s.
%\end{align*}
%for all $t \in [0,T]$. It is clear that the process $(H_t)_{t \in [0,T]}$ is cadlag and adapted and the process $(Z_t)_{t \in [0,T]}$ is a semi-continuous martingale. Then by \cite[Theorem 7, Section V]{protter2005stochastic} and \cite[Theorem 52, Section V]{protter2005stochastic}, for each $\psi \in L^2(\mathcal{F}_T)$, the state equation \eqref{eq:state-equation} has a unique cadlag and adapted solution given in explicit form by
%\begin{align*}
%   V^\psi_t & = \mathcal{E}_t \left( H_0 + \int_{0}^t \mathcal{E}^{-1}_s \dd (H_s - [H,Z]_{s}) \right) \\
%    & =  \mathcal{E}_t \left( V_0 + \int_{0}^t \mathcal{E}^{-1}_s c_s \psi_s \dd s \right).
%\end{align*}
%Then \eqref{eq:state-equation-explicit-form} follows since $\mathcal{E}_{s,t} = \mathcal{E}_{t} \mathcal{E}_{s}^{-1}$ and $V_0 = V$ which concludes the proof.

To compute share prices, we assume that there exists a \emph{stochastic discount factor} process $(\xi_t)_{0\leq t\leq T}$ with $\xi_0 =1$ and $\xi_t > 0$ for all $t\in[0,T]$ a.s., such that
\begin{enumerate}
\item  $(\xi_t)_{0\leq t\leq T}$ is a $\mathbb F^0$-martingale;
\item  $(S_t\xi_t)_{0\leq t\leq T}$ is a $\mathbb F$-martingale, where $(S_t)$ is the share price of the representative firm. 
\end{enumerate}
The stochastic discount factor process is adapted to the filtration of the common noise $\mathbb F^0$, because, in accordance with the financial theory, we assume that \emph{idiosyncratic risks} are not priced in the market since they are fully diversifiable.

As in many papers on equilibrium asset pricing, we assume that at terminal time $T$, the firm pays to its shareholders a terminal liquidating dividend per share equal to the  value of the firm.
The share price of the firm is therefore given by
\begin{equation} \label{eq:price-equation}
    S^{\xi,\psi}_t = \frac{1}{\xi_t} \mathbb{E} \left[ \left. \xi_T V^\psi_T\right \vert \mathcal{F}_t \right],
\end{equation}
where we make explicit the dependence of the share price on the discount factor and the emission strategy of the firm. The stochastic discount factor process will be determined endogenously in equilibrium. In view of the martingale property, we will use interchangeably the stochastic discount factor process $(\xi_t)_{0\leq t\leq T}$ and the stochastic discount factor $\xi\in L^1_+(\mathcal F^0_T)$, with $\xi_t = \mathbb E[\xi|\mathcal F^0_t]$.

%Under the martingality assumption, one can also define the equivalent martingale measure $\mathbb Q$ by taking $\mathbb Q(A) = \mathbb E[\xi_T\mathds 1_A]$ for all measurable sets $A$. 

 We assume that carbon emissions are associated with direct or indirect penalty for the firm, which affects the decision-making by the firm but does not appear in its share price.  This penalty may correspond, for example, to (i) reputational damages for the firm, (ii) potential climate risks, which are not yet priced in the markets but are taken into account by the management, (iii) managerial incentives linked to environmental performance. The total penalty accrued between $0$ and $T$ is given by
$$
\int_0^T \alpha_t c(\psi_t) \dd t,
$$
where $c \colon \mathbb{R} \to \mathbb{R}$ is a cost function.  We assume that $c$ is a strictly convex even differentiable function, which satisfies
$$
c(0) = 0\quad \text{and}\quad \frac{c(x)}{x}\to+\infty\quad \text{as}\quad x\to \infty.
$$
The stochastic process $(\alpha_t)_{0\leq t\leq T} \in L^0(\mathbb F)$ is strictly positive models the future intensity of such costs.
We define $c^* \colon \mathbb{R} \to \mathbb{R}$ the Fenchel transform of $c$ and the function $I \colon \mathbb{R} \to \mathbb{R}$
$$
c^*(x^*) = \sup_{x} \{xx^* - c(x)\}, \qquad I(x^*) \coloneqq \argmax_{x} \{x x^* - c(x)\}.
$$
We assume that $c^*$ is strongly convex and admits a bounded second order derivative.
%{\color{red}and there exist two constants $k>1$ and $K<\infty$ such that 
%$$
%c^*(kx) \leq K c^*(x),\quad \forall x,
%$$}
%where $c^*(x):= \max_y (xy - c(y))$ is the convex conjugate of $c$.
As cost function, we may consider for example the quadratic cost 
$$
c(x) = \frac{|x|^2}{2},\qquad c^*(x^*) = \frac{|x^*|^2}{2},\qquad I(x^*) = x.
$$
%or the proportional cost on a bounded interval:
%$$
%c(x) = \left\{\begin{aligned}&|x|,&|x|\leq y,\\ +&\infty, %&|x|>y,\end{aligned}\right.\qquad c^*(x^*) = \left\{\begin{aligned}&0,&|x^*|\leq %1,\\ &y(|x^*|-1), &|x^*|>1,\end{aligned}\right.\qquad I(x^*) = 
%\begin{cases}
%0, & |x^*|\leq 1, \\
%y, & x^*>1, \\
%- y, & x^*<-1,
%\end{cases}
%$$
%for some $y>0$. 

The representative firm objective is to maximize its share price net of emission penalty. We define the criterion of a representative firm
\begin{equation} \label{eq:criterion}
    J[\xi](\psi) = \mathbb{E}\left[S^{\xi,\psi}_0 - \int_0^T \alpha_t c(\psi_t) \dd t \right] =\mathbb{E} \left[ \xi V^\psi_T - \int_0^T \alpha_t c(\psi_t) \dd t \right].
\end{equation}

Each company therefore aims to achieving a trade-off between maximizing financial performance (share price) and minimizing the negative environmental impact, as measured by the emission penalty. Using equilibrium share price as the optimization objective of the firm is consistent with the literature \citep{heinkel2001effect}. In \cite{de2022climate}, the authors suggest to use the expected sum of discounted future prices, but since the price at time $t=0$ already accounts for all future cash flows, it seems sufficient and more natural to use the initial price. On the other hand, convex emission penalty similar to ours was used, for example in the context of a single-firm optimization problem in \cite{bourgey2022bridging}. 
The optimization problem of the firm is defined as follows
\begin{equation} \label{pb:representativ-firm} \tag{$\mathrm{P_f}$}
    \esssup_{\psi \in \mathcal{C}} J[\xi](\psi),\quad \mathcal{C} \coloneqq \left\{\psi \in L^0(\mathbb{F}), \; \; \mathbb E\left[\int_0^T \alpha_t c(\psi_t)dt\right] < + \infty \right\}.
\end{equation}

\begin{lemma}\label{firm.lm}
Assume that $\mathbb E[\xi V \mathcal E_T]<+\infty$ and 
$$
\mathbb E\left[\int_0^T \alpha_t c^*\left(\frac{e_t}{\alpha_t} \mathbb E[\xi \mathcal E_{t,T}|\mathcal F_t]\right) \dd t\right]<+\infty.
$$
Then, the unique optimal solution of the representative firm problem is given by 
\begin{align} 
\psi^\xi_t := I\left(\frac{e_t}{\alpha_t}  \mathbb{E}[\xi\mathcal{E}_{t,T}|\mathcal F_t]\right).\label{psiopt}
\end{align}
\end{lemma}

\begin{proof}
Using the formula \eqref{eq:price-equation} for the share price and the formula \eqref{eq:state-equation-explicit-form} for the terminal firm value,  we see that for any $\psi \in \mathcal C$, the optimization functional of the representative firm satisfies
\begin{align}
    J[\xi](\psi) &= \mathbb E[\xi V \mathcal E_T] + \mathbb{E}\left[   \int_0^T  \xi\mathcal{E}_{t,T} e_t  \psi_t \dd t -   \int_0^T \alpha_t c(\psi_t)  \dd t \right]. \label{Jfunc.eq}
\end{align}
By convex duality, using that $\alpha$ is strictly positive a.s.,
\begin{align*}
    \int_0^T \mathbb E[\left|\xi\mathcal{E}_{t,T} e_t  \psi_t\right|] \dd t & = \int_0^T \mathbb E[ \mathbb E[\xi\mathcal{E}_{t,T}|\mathcal F_t] e_t  |\psi_t|] \dd t \\ & \leq \int_0^T\mathbb E\left[ \alpha_t c(\psi_t)\right] \dd t +\int_0^T \mathbb E\left[\alpha_t c^*\left(\frac{e_t}{\alpha_t} \mathbb E[\xi\mathcal{E}_{t,T}|\mathcal F_t] \right)\right] \dd t <+ \infty
    & 
\end{align*}
Thus, by Fubini's theorem,
\begin{align*}
    J[\xi](\psi) &= \mathbb E[\xi V \mathcal E_T] + \mathbb{E}\left[   \int_0^T  \mathbb E[\xi\mathcal{E}_{t,T}|\mathcal F_t] e_t  \psi_t \dd t -   \int_0^T \alpha_t c(\psi_t)  \dd t \right].
\end{align*}
Then, once again by convex duality,
\begin{align*}
    J[\xi](\psi) &\leq \mathbb E[\xi V \mathcal E_T]+\mathbb E\left[\int_0^T \alpha_t c^*\left(\frac{e_t}{\alpha_t} \mathbb E[\xi \mathcal E_{t,T}|\mathcal F_t]\right) \dd t\right] ,
\end{align*}
and the equality is achieved for the optimal value $\psi^\xi$ given in \eqref{psiopt}, concluding the proof.

\end{proof}

\begin{remark}\label{emission.rem}
The optimal emissions level \eqref{psiopt} is given by the transformation function $I$, which depends on cost function of emissions, applied to a ratio of 
$$
e_t \mathbb{E}[\xi\mathcal{E}_{t,T}|\mathcal F_t],
$$
which corresponds to the net present value at time $t$ of the wealth generated at time $T$ from one unit of emissions at time $t$ and $\alpha_t$ which determines emission penalty. The firm will therefore emit more if the wealth it generates for investors has more value in the market at current prices  and less if the emissions are more strongly penalized.
 
%The optimal emissions level is proportional to the coefficient $c_t$ (which is itself inversely proportional to the emission intensity of production), and inversely proportional to the coefficient $\alpha_t$ which determines emission penalty. Therefore, the firm emits more if the emission intensity of production decreases, and thus one unit of emissions allows to create more value; it emits less if the emissions are more strongly penalized. The quantity $c_t\mathbb{E}[\xi\mathcal{E}_{t,T}|\mathcal F_t]$ corresponds to the market price at time $t$ of 
\end{remark}

\subsection{The investors problem}
\label{investors.sec}

%The firms are competing for the increase of there financial valorisation. Such valorisation depends on the state price vectors $\xi$ which is determined 
We assume that in the market there are two competing investors: a regular and a green investor. The two investors may have different views on the future evolution of the firm values and emission penalties. For example, the green investor may have a different view on the future distribution of the emission cost parameter $\alpha$ because this investor considers future climate risks to be more material. 

Each investor holds a large diversified portfolio, where the idiosyncratic risk of individual stocks is fully diversified, so that 
$\mathcal W$, the space of admissible values for the terminal portfolio of each investor is a subset of $\mathcal F^0_T$, the common noise $\sigma$-field. Moreover, we assume that the market is complete, meaning that every terminal wealth value $W\in \mathcal W$ can be replicated by each investor using the available traded instruments.

Given a stochastic discount factor $\xi\in L_+^1 (\mathcal F^0_T)$ with $\mathbb E[\xi]=1$ we therefore define the space of admissible values for the terminal wealth of each investor by
$$
\mathcal W = \{W \in \mathcal F^0_T : \mathbb E[\xi |W|]<+\infty\}. 
$$

%The implications of this assumption are 
%discussed on page \pageref{completeness}.
Following the standard macroeconomic equilibrium construction, we assume that the investors dispose of initial endowements with pay-offs $X^r \in \mathcal W$ for the regular investor and $X^g\in \mathcal W$ for the green investor.  
Each investor maximizes CARA utility function of terminal wealth under the budget constraint
\begin{align} \label{pb:investor-individual-r} \tag{$\mathrm{P_r}$}
 \sup_{W \in \mathcal W} \, 1 - \mathbb{E} \left[ e^{-\gamma^r  W} \right], \quad \mathrm{s.t.} \quad \mathbb{E} \left[ \xi W\right] \leq \mathbb{E}\left[ \xi X^r\right], \\ 
 \sup_{W \in \mathcal W} \, 1 - \mathbb{E}_{\mathbb{P}^g} \left[e^{-\gamma^g  W} \right], \quad \mathrm{s.t.} \quad \mathbb{E} \left[ \xi W \right] \leq \mathbb{E}\left[ \xi X^g\right], \label{pb:investor-individual-g} \tag{$\mathrm{P_g}$}
\end{align}
where $\gamma^r$ and $\gamma^g$ are risk aversion coefficients of, respectively, the regular and the green investor, and $\mathbb P^g$ is the subjective probability measure of the green investor. We assume that $\mathbb P^g$ is equivalent to $\mathbb P$ and that the Radon-Nikodym density
\begin{equation*}
    Z:= \frac{\dd \mathbb P^g}{\dd \mathbb P}
\end{equation*}
satisfies $Z\in \mathcal F^0_T$. This is justified by the fact that the green investor holds a fully diversified portfolio and therefore does not care about the idiosyncratic risk of individual stocks. We make no further assumptions about the structure of $Z$ at this point but refer the reader to section  \ref{examples.sec} for an explicit construction.

%The green investor thus solves the portfolio optimization problem under a different probability measure. For example, under this probability measure, the emission cost parameter $\alpha$ may follow a different distribution, making the climate transition risks more material.

%The subjective probability measure of the green investor is defined as follows:
%\begin{equation} \nonumber
%    \dd \mathbb{P}^g = Z \dd \mathbb{P}, \quad 
%     Z = \exp\left(-\int_0^T \lambda_s \dd B^0_s  - 
%   \frac{1}{2} \int_0^T |\lambda_s|^2 \dd s \right),
%\end{equation}
%for some $\lambda \in L^0(\mathbb{F}^0)$, such that $\mathbb E[Z] = 1$. 

%satisfying the Novikov condition
%$$
%\mathbb E\left[\exp\left(\frac{1}{2}\int_0^T |\lambda_t|^2 \dd t\right)\right]<+\infty. 
%$$\todo{Maybe we don't need the Novikov condition, and even that $P^g$ is really a probability}
%The optimal wealth values are denoted $W^r$ and $W^g$ and both depend on the prescribed stochastic discount factor $\xi$.

\begin{lemma}\label{investor.lm}
Assume that the stochastic discount factor satisfies $\mathbb E[\xi \ln \xi ]<\infty$ and $\mathbb E[\xi \ln \frac{\xi}{Z}]<+\infty$.
Then the optimal terminal wealth values for the regular and the green investor are given by
$$
W^r = \mathbb{E}\left[ \xi X^r\right] + \frac{1}{\gamma^r} \mathbb E[\xi \ln \xi]- \frac{1}{\gamma^r} \ln \xi ,\qquad W^g = \mathbb{E}\left[ \xi X^g\right]+ \frac{1}{\gamma^g} \mathbb E[\xi \ln (\xi/Z)] - \frac{1}{\gamma^g} \ln (\xi/Z) .
$$
\end{lemma}

\begin{proof}
Recall that by Fenchel duality, for any $x> 0$ and $y\in \mathbb R$, we have 
\begin{equation} \label{ineq:exponential}
    e^y \geq xy - x \ln (x) + x,
\end{equation} and equality holds at $x = e^y$. Consider the optimization problem of the green investor \eqref{pb:investor-individual-g}. Let $\kappa \geq 0$. Using inequality \eqref{ineq:exponential} for $x = \kappa \xi Z^{-1}(\omega)$ and $y = -\gamma^g W(\omega)$ for all $W \in \mathcal W$ and $\omega \in \Omega$, yields
$$
\mathbb E \left[Z e^{-\gamma^g W} \right] \geq \mathbb E\left[-\kappa \xi \gamma^g W - \kappa \xi \ln (\kappa \xi) + \kappa \xi + \kappa \xi \ln Z \right].
$$
Then using the budget constraint $\mathbb{E} \left[ \xi W \right] \leq \mathbb{E}\left[ \xi X^g\right]$ and the fact that $\mathbb E[\xi] = 1$ yields
\begin{align*}
\mathbb E \left[Z e^{-\gamma^g W} \right] \geq \left(-\gamma^g \mathbb{E}\left[ \xi X^r\right] - \mathbb E[\xi \ln (\xi/Z) ]\right)\kappa -\kappa \ln \kappa + \kappa.
\end{align*}
Applying Fenchel duality once again, we obtain
$$
\mathbb E\left[Z e^{-\gamma^g W}\right] \geq \sup_{\kappa>0} \{(-\gamma^g \mathbb{E}\left[ \xi X^r\right] - \mathbb E[\xi \ln(\xi/Z)])\kappa -\kappa \ln \kappa + \kappa\} = e^{-\gamma^g \mathbb{E}\left[ \xi X^r\right] - \mathbb E[\xi \ln (\xi/Z)]},
$$
where the supremum is attained for $\kappa^* = e^{-\gamma^g \mathbb{E}\left[ \xi X^r\right] - \mathbb E[\xi \ln (\xi/Z)]}$. 
Now, consider $W^g = -\frac{1}{\gamma^g} \ln (\kappa^* \xi/Z)$. Since $x\ln x$ is bounded for $x<1$, it follows that $W \in \mathcal W$. Since $\kappa^* \xi/Z = e^{-\gamma^g W^g}$, we get that 
\begin{align*}
\mathbb E[Ze^{-\gamma^g W^g}]  &= \mathbb E[-\kappa^* \xi \gamma^g W^g - \kappa^* \xi \ln (\kappa \xi) + \kappa^*\xi + \kappa^* \xi \ln Z]\\
& = \mathbb E[\kappa^* \xi] = \kappa^* = e^{-\gamma^g \mathbb{E}\left[ \xi X^r\right] - \mathbb E[\xi \ln \xi/Z]},
\end{align*}
and we conclude that $W^g$ is the optimal solution for the green investor. The solution for the regular investor then follows by taking $Z = \mathds{1}$. 
\end{proof}

\subsection{Market equilibrium} \label{sec:MC}
We assume that the initial endowements of the agents $X^r$ and $X^g$  satisfy the market clearing condition:
$$
X^g + X^r = \mathbb{E} \left[ V^\psi_T \vert \mathcal{F}^0_T \right],
$$
in other words, together the two investors hold the entire market. 
In the right-hand side, recalling that the law of $V^\psi_T$ conditional on the common noise defines the distribution of firm values at time $T$, it is clear that $\mathbb{E} \left[ V^\psi_T \vert \mathcal{F}^0_T \right]$ is the aggregate value of all firms in the market at time $T$.

Following the standard economic theory we say that the market is in equilibrium if the stochastic discount factor $\xi$ is such that the optimal holdings of the agents also satisfy the market clearing condition, that is, 
\begin{equation} \nonumber
    W^g + W^r = \mathbb{E} \left[ V^\psi_T \vert \mathcal{F}^0_T \right].
\end{equation}

Substituting the expressions of Lemma \ref{investor.lm} into this equation, and rearranging the terms, we get: 
$$
%&\mathbb{E}\left[ \xi V^\psi_T\right] + \frac{1}{\gamma^r} \mathbb E[\xi \ln \xi]- \frac{1}{\gamma^r} \ln \xi + \frac{1}{\gamma^g} \mathbb E[\xi \ln (\xi/Z)] - \frac{1}{\gamma^g} \ln (\xi/Z)  = \mathbb{E} \left[ V^\psi_T \vert \mathcal{F}^0_T \right]\\
\mathbb E\left[\xi\left(\gamma^*\mathbb E\left[ V^\psi_T \vert \mathcal{F}^0_T \right]+ \ln \xi - \rho\ln Z\right)\right] = \gamma^*\mathbb E\left[ V^\psi_T \vert \mathcal{F}^0_T \right]+ \ln \xi - \rho\ln Z
$$
with $\frac{1}{\gamma^*} = \frac{1}{\gamma^r} + \frac{1}{\gamma^g}$. 
%\begin{align*}
%&\Rightarrow\quad \ln\xi   =c+\rho\ln Z- \gamma^*\mathbb{E} \left[ V^\psi_T \vert \mathcal{F}^0_T \right]\\
%&c = \mathbb E\left[\exp\left(\rho\ln (Z)-\gamma^* \mathbb{E} \left[ V^\psi_T \vert \mathcal{F}^0_T \right]\right)\right].
%\end{align*}
Solving for $\xi$, we finally obtain: 
\begin{equation}
    \label{clearing}
    \xi  = \frac{\exp\left(\rho \ln (Z)-\gamma^* \mathbb{E} \left[ V^\psi_T \vert \mathcal{F}^0_T \right]\right)}{\mathbb E\left[\exp\left(\rho\ln (Z)-\gamma^* \mathbb{E} \left[ V^\psi_T \vert \mathcal{F}^0_T \right]\right)\right]},
\end{equation}
In the above expression, $\rho \coloneqq \frac{\gamma^r}{\gamma^g + \gamma^r}\in(0,1)$ can be interpreted as the proportion of green investors in the market. Indeed, following \cite{de2022climate}, to simplify the interpretation of the impact of green and regular investors' wealth on the variables in equilibrium, we may assume that green and regular investors have equal relative risk aversions; that is, $\gamma^{R}=\gamma^g \mathbb E[\xi X^g]=\gamma^r\mathbb E[\xi X^r]$, where $\gamma^R$ denotes the relative risk aversion. In this case, $\rho$ is the proportion of the green investors' initial endowement, and $1-\rho$ is that of the regular investors; that is, $\rho=\frac{\mathbb E[\xi X^g]}{\mathbb E[\xi X^g]+\mathbb E[\xi X^r]}$ and $1-\rho=\frac{\mathbb E[\xi X^r]}{\mathbb E[\xi X^g]+\mathbb E[\xi X^r]}$.

\subsection{Mean-field game equilibrium of companies} \label{nash.sec}
We define $\Xi$ as the subset of all  $\xi \in L^1_+(\mathcal F^0_T)$ satisfying the assumptions of both Lemma \ref{firm.lm} and Lemma \ref{investor.lm}, in other words,
\begin{equation*}
\begin{array}{lll}
     \mathbb{E}[\xi] = 1, &  \mathbb{E}[\xi \ln (\xi)]< +\infty, &\mathbb{E}[\xi \ln \frac{\xi}{Z}] < +\infty,\\
      \mathbb E[V\xi \mathcal E_T] < +\infty, & 
\mathbb E\left[\int_0^T \alpha_t c^*\left(\frac{e_t}{\alpha_t} \mathbb E[\xi \mathcal E_{t,T}|\mathcal F_t]\right) \dd t\right]<+\infty.
\end{array}
\end{equation*}

\begin{definition}
A mean-field game equilibrium is a triple $(\bar\psi,\bar\xi,\bar m)$, where $\bar \psi \in L^2_\alpha(\mathbb F)$, $\bar \xi \in \Xi$ and $\bar m$ is a $\mathcal F^0_T$ measurable random variable valued in $\mathcal{P}(\mathbb R)$, satisfying the following conditions: 
$$
\bar{\psi} = \argmin_{\psi \in L^2_\alpha(\mathbb{F})}  J[\bar{\xi}](\psi),\qquad \bar \xi  = \frac{\exp\left(\rho \ln (Z)-\gamma^* \int x \bar m(dx)\right)}{\mathbb E\left[\exp\left(\rho\ln (Z)-\gamma^* \int x \bar m(dx)\right)\right]},
$$
and $\bar m$ is the law of $V^{\bar\psi}_T$ conditional to the $\sigma$-field $\mathcal F^0_T$ generated by the common noise, where 
\begin{equation} \label{eq:psi-V}
     V^{\bar\psi}_T = V \mathcal{E}_T +  \int_0^T e_t \mathcal{E}_{t,T} \bar\psi_t \dd t, \quad \text{and}\quad \bar\psi_t = I\left(\frac{e_t}{\alpha_t} \mathbb{E}[\bar\xi\mathcal{E}_{t,T}|\mathcal F_t]\right).
\end{equation}

\end{definition}
The first condition means that each firm solves its optimization problem subject to a given stochastic discount factor. The second condition defines the stochastic discount factor through the market clearing condition. The third condition means that this distribution of firms is obtained when every firm implements its optimal control. 

The usual approach in mean field game would be to find a fixed point in the space of $\mathcal F^0_T$-measurable random probability measures on $\mathbb R$. Due to the presence of the common noise, this space lacks compactness and the existence of equilibrium cannot be ensured. 
In our setting, the mean field game problem simplifies due to the specific form of the interaction mapping given by the market clearing equation. While the latter is fully non-linear, it only involves conditional expectation of the state process with respect to the $\sigma$-field $\mathcal F^0_T$ generated by the common noise.
It is an important simplification since we do not require the full knowledge of the law $\bar m$ to compute the stochastic discount factor. 

Then, the Nash equilibrium problem reduces to the following fixed point problem: find $\xi \in \Xi$, such that
\begin{align} \tag{FP} \label{pb:fixed-point}
\xi & =  \frac{\exp\left(\rho \ln (Z) -\gamma^* \mathbb{E}[V^\xi_T|\mathcal F^0_T] \right)}{\mathbb{E}\left[\exp\left(\rho \ln (Z) -\gamma^* \mathbb{E}[ V^\xi_T|\mathcal F^0_T] \right)\right]}.
\end{align}

Theorem \ref{main.thm}, formulated in the  next section, establishes the existence and uniqueness of solution to \eqref{pb:fixed-point}. 
It is clear by \eqref{eq:psi-V} that the uniqueness of the state price vector implies the uniqueness of the strategy and the state process. The uniqueness of the law $m^{\psi} = \mathcal L(V^\psi_T|\mathcal F^0_T)$ then holds almost surely and following the terminology of \cite[Definition 2.1]{carmona2016mean}, we obtain a unique strong mean field game solution $(\psi^\xi,\xi,m^{\psi})$.

\paragraph{Market completeness}\label{completeness} A key assumption of Section \ref{investors.sec}, required to close the loop of Nash equilibrium discussed above, is the \emph{dynamic completeness}, or, in our setting, weak completeness of the market resulting from computing stock prices 
using the formula \eqref{eq:price-equation}, with $\xi$ and $\psi$ given by the solution of the fixed-point equation \eqref{pb:fixed-point}. Dynamic completeness is in general difficult to prove in continuous-time equilibrium models, and has only been shown in specific settings \citep{anderson2008equilibrium,hugonnier2012endogenous}. Due to the complexity of our setting, a direct proof of this property seems out of reach. We therefore follow a large part of the literature by assuming \emph{a posteriori} that this property is satisfied. In practice, this means that in addition to shares of firms, in the market there are also traded options allowing to replicate any claim measurable with respect to the $\sigma$-field of the common noise $\mathcal F^0_T$. 

\section{Solution to the mean field game problem}
\label{sec:main-result}

In this section we state and prove our main results. Existence and uniqueness of the Nash equilibrium are established in Theorem \ref{main.thm}. A fixed point algorithm (Algorithm \ref{algorithm}) is provided to compute the solution of the mean field game problem and its convergence is established in Theorem \ref{thm:convergence-result}. The proof of these results is postponed to dedicated sections \ref{sec:proof-of-main-thm} and \ref{sec:proof-of-convergence} and relies on the existence and uniqueness of the solution to a convex minimization problem.  

\paragraph{Additional notations}
We define the negative entropy $h \colon \mathbb{R} \to \mathbb{R} \cup \{+\infty\}$ and the indicator function $\chi \colon \mathbb{R} \to \mathbb{R} \cup \{+\infty\}$ of the set $\{0\}$,
\begin{equation*}
    h(x) = \left\{ \begin{array}{ll}
        x (\ln (x) - 1), & \text{if} \; x > 0, \\
        0, & \text{if} \; x = 0, 
   \end{array} \right. \quad \chi(x) = \left\{ \begin{array}{ll}
       0 , & \text{if} \; x = 0, \\
       +\infty, & \text{otherwise.} 
  \end{array} \right.
\end{equation*}
Further, we define an entropic functional $H \colon L^1_+(\mathcal F^0_T) \to \mathbb{R} \cup \{+\infty\}$ and a convex functional $L \colon L^1_+(\mathcal F^0_T) \to \mathbb{R} \cup \{+\infty\}$,
\begin{equation}
    H(\xi) = \mathbb E[Z^{\rho} h(\xi/Z^{\rho})], \quad L(\xi) = \gamma^{\star}  \mathbb{E}\left[\xi V\mathcal{E}_T +\int_0^T \alpha_t c^*\left(\frac{e_t}{\alpha_t} \mathbb E[\xi \mathcal E_{t,T}|\mathcal F_t]\right) dt   \right].\label{functionals.eq}
\end{equation}
We finally define the convex functional $G \colon L^1_+(\mathcal F^0_T) \to \mathbb{R} \cup \{+\infty\}$, which will play the role of ``potential" of the fixed-point problem \eqref{pb:fixed-point}:
\begin{equation*}
    G(\xi) = H(\xi) + L(\xi) + \chi(\mathbb{E} \left[\xi \right] - 1).
\end{equation*}

\paragraph{Main results}
We state here our main results. The first theorem characterizes the solution to the fixed-point problem \eqref{pb:fixed-point}.
This results implies the existence and uniqueness of a solution to the mean field game problem. The second result establishes the convergence of the Frank-Wolfe algorithm proposed below to find the fixed point of \eqref{pb:fixed-point}.

\begin{theorem}\label{main.thm}
Assume that the parameters are such that
$L(\mathds{1})<+\infty.$
Then there is a unique solution $\bar{\xi} \in \Xi$ to the fixed-point problem \eqref{pb:fixed-point}. This solution satisfies $\bar \xi>0$ a.s.~and
\begin{equation*}
    \bar{\xi} = \argmin_{\xi \in L^1_+(\mathcal F^0_T)} G(\xi).
\end{equation*}
\end{theorem}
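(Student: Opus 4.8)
The plan is to prove the theorem by the direct method of the calculus of variations applied to the functional $G$, and then to verify that the first-order optimality condition for the minimizer is exactly the fixed-point equation \eqref{pb:fixed-point}. First I would establish that $G$ is well-defined, proper, convex and lower semicontinuous on $L^1_+(\mathcal F^0_T)$. Properness follows from the assumption $L(\mathds{1})<+\infty$ together with $H(\mathds{1}) = \mathbb{E}[Z^\rho h(1/Z^\rho)] = \mathbb{E}[h(Z^{-\rho})Z^\rho]$, which is finite because $h(x)/x = \ln x - 1$ has at most logarithmic growth and $Z$ satisfies the Novikov condition (so both $\mathbb{E}[\xi\ln\xi]$-type terms and $\mathbb{E}[\xi|\ln Z|]$-type terms are controlled). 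Convexity is the crucial structural point: $H$ is convex because $h$ is convex and $\xi\mapsto Z^\rho h(\xi/Z^\rho)$ is a perspective-type transformation; the linear part of $L$ (the term $\gamma^\star\mathbb{E}[\xi V\mathcal E_T]$) is trivially linear; and the quadratic part $\gamma^\star\mathbb{E}[\int_0^T \tfrac{\alpha_t}{2}|\psi^\xi_t|^2\dd t]$ is a convex quadratic form in $\xi$ since $\psi^\xi_t = \tfrac{c_t}{\alpha_t}\mathbb{E}[\xi\mathcal E_{t,T}|\mathcal F_t]$ depends \emph{linearly} on $\xi$; finally $\chi(\mathbb{E}[\xi]-1)$ is the convex indicator of an affine constraint. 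In fact $H$ is \emph{strictly} convex (strict convexity of $h$), which will give uniqueness of the minimizer.

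Next I would establish coercivity / existence of a minimizer. The entropic term $H$ provides the coercivity: on the affine slice $\{\mathbb{E}[\xi]=1\}$, $H(\xi)$ controls $\mathbb{E}[\xi\ln\xi]$ from below (up to terms linear in $\xi$ coming from the $Z^\rho$ weighting), so sublevel sets of $G$ are uniformly integrable, hence weakly sequentially compact in $L^1$ by the Dunford–Pettis theorem. Combined with weak lower semicontinuity of $G$ (convex + strongly l.s.c. $\Rightarrow$ weakly l.s.c.), the direct method yields a minimizer $\bar\xi\in L^1_+(\mathcal F^0_T)$; strict convexity of $H$ forces it to be unique. One also needs $\bar\xi>0$ a.s.: this follows because $h$ has infinite slope at $0^+$ (i.e. $h'(x)\to-\infty$ as $x\downarrow 0$), so perturbing $\bar\xi$ upward on a set where it vanishes would strictly decrease $H$ while changing $L$ and the constraint only to first order — a standard argument. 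It should also be checked that $\bar\xi\in\Xi$, i.e.\ that the minimizer automatically satisfies the integrability conditions in \eqref{set:xi}; the finiteness $G(\bar\xi)<+\infty$ gives $\mathbb{E}[\bar\xi\ln\bar\xi]<\infty$, $\mathbb{E}[\bar\xi|\ln Z|]<\infty$, $\mathbb{E}[\bar\xi V\mathcal E_T]<\infty$ and the $L^2_\alpha$-type bound on $\psi^{\bar\xi}$ directly.

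The final step is to derive the Euler–Lagrange condition and identify it with \eqref{pb:fixed-point}. I would compute the directional derivative of $G$ at $\bar\xi$ along admissible perturbations $\xi\mapsto \bar\xi + \varepsilon(\zeta-\bar\xi)$ with $\zeta\in L^1_+(\mathcal F^0_T)$, $\mathbb{E}[\zeta]=1$, using that $h'(x)=\ln x$. The derivative of $H$ produces $\mathbb{E}[(\ln\bar\xi - \rho\ln Z)(\zeta-\bar\xi)]$; the derivative of the linear part of $L$ produces $\gamma^\star\mathbb{E}[V\mathcal E_T(\zeta-\bar\xi)]$; and the derivative of the quadratic part, after using the linearity $\psi^\xi_t = \tfrac{c_t}{\alpha_t}\mathbb{E}[\xi\mathcal E_{t,T}|\mathcal F_t]$ and Fubini, produces $\gamma^\star\mathbb{E}[(\zeta-\bar\xi)\int_0^T c_t\mathcal E_{t,T}\psi^{\bar\xi}_t\,\dd t]$; adding the two $L$-terms and recognizing \eqref{eq:psi-V} gives $\gamma^\star\mathbb{E}[(\zeta-\bar\xi)\,V^{\bar\xi}_T]$, and then the tower property with $\mathcal F^0_T$ (since $\zeta,\bar\xi\in\mathcal F^0_T$) collapses this to $\gamma^\star\mathbb{E}[(\zeta-\bar\xi)\,\mathbb{E}[V^{\bar\xi}_T|\mathcal F^0_T]]$. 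Optimality gives $\mathbb{E}[(\ln\bar\xi - \rho\ln Z + \gamma^\star\mathbb{E}[V^{\bar\xi}_T|\mathcal F^0_T])(\zeta-\bar\xi)]\ge 0$ for all admissible $\zeta$; since the only constraint active is $\mathbb{E}[\cdot]=1$ and $\bar\xi>0$, this means $\ln\bar\xi - \rho\ln Z + \gamma^\star\mathbb{E}[V^{\bar\xi}_T|\mathcal F^0_T]$ is a.s.\ constant, i.e.\ $\bar\xi \propto \exp(\rho\ln Z - \gamma^\star\mathbb{E}[V^{\bar\xi}_T|\mathcal F^0_T])$, and normalizing by $\mathbb{E}[\bar\xi]=1$ is exactly \eqref{pb:fixed-point}. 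Conversely any solution of \eqref{pb:fixed-point} in $\Xi$ satisfies this stationarity condition, hence by convexity is the global minimizer, giving uniqueness of the fixed point.

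The main obstacle I expect is the rigorous justification of differentiating $G$ under the integral sign and the handling of the quadratic term's derivative: one must control the integrability of $\psi^\xi$ and of the cross terms $\mathbb{E}[\bar\xi\mathcal E_{t,T}|\mathcal F_t]\cdot\mathbb{E}[(\zeta-\bar\xi)\mathcal E_{t,T}|\mathcal F_t]$ uniformly enough to pass to the limit $\varepsilon\downarrow 0$, and to ensure the perturbed $\bar\xi+\varepsilon(\zeta-\bar\xi)$ stays in the effective domain of $G$ — which is why one typically restricts to bounded perturbations $\zeta$ (or $\zeta$ bounded away from $0$ and $\infty$) first and then extends by a density/monotone class argument. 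The $h'(x)\to-\infty$ behavior at the boundary also requires care: it is helpful for proving $\bar\xi>0$ but means the directional derivative of $H$ can be $-\infty$ along some directions, so the variational inequality must be set up carefully (e.g.\ only increasing $\bar\xi$ where it is small). The boundedness and strict positivity hypotheses bundled into the definition of $\Xi$ and the Novikov condition on $Z$ are exactly what make these estimates go through.
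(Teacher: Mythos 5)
Your proposal is correct and follows essentially the same route as the paper: recasting \eqref{pb:fixed-point} as the strictly convex minimization of $G$, obtaining existence via entropy bounds, de la Vall\'ee Poussin/Dunford--Pettis weak compactness and weak lower semicontinuity, strict positivity via a perturbation exploiting the infinite slope of $h$ at $0^+$, and identifying the first-order condition (computed along perturbations kept inside the domain, exactly as you anticipate) with the fixed-point equation, with the converse following from convexity. The technical obstacles you flag --- restricting to perturbations comparable to $\bar\xi$ and controlling the quadratic term of $L$ --- are precisely the ones the paper's Lemmas handle.
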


We now define an algorithm to solve the fixed point problem (\ref{pb:fixed-point}) based on the generalized conditional gradient algorithm. The latter is a generalization of the Frank-Wolfe algorithm \citep{frank1956algorithm} and is introduced in \cite{bredies2009generalized}. The idea is to find the minimum of the functionnal $G$ by solving a semilinearized version of $G$ and considering a suitable convex combination of the minimizers at each step. 
We show, under suitable assumptions, that this procedure is equivalent to Algorithm \ref{algorithm} which consists in iterating the fixed point for a certain sequence of weights $(\alpha_k)_{k \in \mathbb{N}}$. To describe the algorithm we define the following set: 
$$
\mathcal{C} \coloneqq \left\{ X \in L^1_+(\mathcal{F}^0_T), \, \bar{G} \leq G(X) \leq G(\mathds{1})\right\},
$$
where $\bar{G} \coloneqq \inf_{\xi \in L^1_+(\mathcal{F}_T^0)} G (\xi)$.

\begin{algorithm}[H]
    \caption{Frank-Wolfe} \label{algorithm}
    \begin{algorithmic}
    \STATE Let $(\alpha_k)_{k \in \mathbb{N}}$ be a sequence in $(0,1)$ and choose $\xi_{0} \in \mathcal C$.\;
    \FOR{$0\leq k < N$} 
    \STATE {
    1. Compute \begin{equation*}
        \eta_k = \frac{\exp\left(-\gamma^* \mathbb{E}[  V^{\xi_k}_T|\mathcal F^0_T]+ \rho \ln Z\right)}{\mathbb{E}\left[\exp\left(-\gamma^* \mathbb{E}[  V^{\xi_k}_T|\mathcal F^0_T] + \rho \ln Z\right)\right]},
    \end{equation*} \\
    2. Compute \begin{equation*}
        \xi_{k+1} = \alpha_k \eta_k + (1-\alpha_k) \xi_{k}. 
    \end{equation*}
    }
    \ENDFOR
    \RETURN {$(\eta_N, \xi_N)$.}
    \end{algorithmic}
\end{algorithm}

 \begin{theorem} \label{thm:convergence-result} 
Assume that the parameters are such that
$$
L(\mathds{1})<+\infty, \quad \mathbb{E} \left[ \int_0^T \frac{e_t^2}{\alpha_t} \mathbb{E} \left[ Z^\rho \mathcal{E}_{t,T}|\mathcal F_t \right]^2  \dd t\right]<+\infty.
$$
Let $p\geq 2$ be such that $\frac{2C}{p}\leq G(\mathds{1})- G(\bar \xi)$, where $C$ is the constant introduced in Lemma \ref{1iter.lm}. For all $k\in \mathbb{N}$, let $\alpha_k = \frac{2}{k+p}$  and let $(\xi_k)_{k\in \mathbb{N}}$ be the sequence defined by Algorithm \ref{algorithm}. Then,  
$$
G(\xi_k)-G(\bar \xi) \leq \frac{1}{k+p} \left(4C + \frac{K}{(k+p)} \right), \quad K = \left(G(\xi_0)-G(\bar \xi) - \frac{4C}{p} \right)^+.
$$
 \end{theorem}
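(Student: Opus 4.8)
The plan is to recognize Algorithm~\ref{algorithm} as the generalized conditional gradient method of \cite{bredies2009generalized} applied to the splitting $G = L + \Phi$ with $\Phi(\xi) := H(\xi) + \chi(\mathbb{E}[\xi]-1)$, in which the ``smooth'' convex quadratic part $L$ is linearized at the current iterate while the nonsmooth convex part $\Phi$ is kept exactly. Once the algorithm is identified with this scheme, the convergence rate reduces to a one-step descent estimate followed by an elementary scalar recursion. The two standing assumptions of the theorem are used for two distinct purposes: $L(\mathds 1)<+\infty$ guarantees $G(\mathds 1)<+\infty$ so that $\mathcal{C}$ is a well-defined sublevel set containing $\mathds 1$ (and, by Theorem~\ref{main.thm}, $\bar\xi$), while the second integrability condition is what produces a finite curvature constant $C$; the condition $\tfrac{2C}{p}\le G(\mathds 1)-G(\bar\xi)$ is what keeps the iterates inside $\mathcal{C}$.

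First I would show that $\eta_k$ is exactly the minimizer over $L^1_+(\mathcal F^0_T)$ of the partially linearized functional $\xi\mapsto \mathbb{E}[L'(\xi_k)\,\xi] + H(\xi) + \chi(\mathbb{E}[\xi]-1)$. A direct computation, using that $\psi^\xi$ is linear in $\xi$ and that test directions $\delta$ are $\mathcal F^0_T$-measurable, identifies the G\^ateaux derivative $L'(\xi)=\gamma^\star\,\mathbb{E}[V^\xi_T\mid\mathcal F^0_T]$; writing the first-order conditions of the linearized problem, with the negative entropy $h$ and the linear constraint $\mathbb{E}[\xi]=1$, by exactly the Fenchel-duality argument used in the proof of Lemma~\ref{investor.lm}, yields $\eta_k\propto Z^{\rho}\exp(-\gamma^\star\mathbb{E}[V^{\xi_k}_T\mid\mathcal F^0_T])$, which after normalization is the formula in Algorithm~\ref{algorithm}. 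Thus $\xi_{k+1}=\xi_k+\alpha_k(\eta_k-\xi_k)$ is the generalized conditional gradient step.

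Next comes the one-step descent estimate, which is Lemma~\ref{1iter.lm}. Since $L$ is a convex quadratic, the expansion along the segment is exact,
\begin{equation*}
L(\xi_{k+1})=L(\xi_k)+\alpha_k\,\mathbb{E}\!\left[L'(\xi_k)(\eta_k-\xi_k)\right]+\tfrac{\alpha_k^2}{2}\,\gamma^\star\,\mathbb{E}\!\left[\int_0^T\alpha_t\,|\psi^{\eta_k}_t-\psi^{\xi_k}_t|^2\,\dd t\right],
\end{equation*}
and combining it with convexity of $\Phi$ along the segment, with the optimality of $\eta_k$ for the linearized problem (which gives $\mathbb{E}[L'(\xi_k)(\eta_k-\xi_k)]+\Phi(\eta_k)-\Phi(\xi_k)\le \mathbb{E}[L'(\xi_k)(\bar\xi-\xi_k)]+\Phi(\bar\xi)-\Phi(\xi_k)$), and with convexity of $L$ (which gives $\mathbb{E}[L'(\xi_k)(\bar\xi-\xi_k)]\le L(\bar\xi)-L(\xi_k)$), yields
\begin{equation*}
G(\xi_{k+1})\le G(\xi_k)-\alpha_k\bigl(G(\xi_k)-G(\bar\xi)\bigr)+C\alpha_k^2,
\end{equation*}
where $2C$ bounds the curvature term $\gamma^\star\mathbb{E}[\int_0^T\alpha_t|\psi^{\eta_k}_t-\psi^{\xi_k}_t|^2\dd t]$ along the trajectory: the bound on $\psi^{\xi_k}$ follows from $\xi_k\in\mathcal C$ (as $h\ge -1$ makes $H$, hence $L$, hence $\mathbb{E}[\int\alpha_t|\psi^{\xi_k}_t|^2\dd t]$, bounded on $\mathcal C$), and the bound on $\psi^{\eta_k}$ follows from the pointwise inequality $\eta_k\le Z^{\rho}/\mathbb{E}[Z^{\rho}e^{-\gamma^\star\mathbb{E}[V^{\xi_k}_T\mid\mathcal F^0_T]}]$ together with the hypothesis $\mathbb{E}[\int_0^T\frac{c_t^2}{\alpha_t}\mathbb{E}[Z^{\rho}\mathcal E_{t,T}\mid\mathcal F_t]^2\dd t]<+\infty$. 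The same lemma must also show $\xi_{k+1}\in\mathcal C$: if $G(\xi_k)-G(\bar\xi)\ge C\alpha_k$ the estimate already gives $G(\xi_{k+1})\le G(\xi_k)\le G(\mathds 1)$, whereas if $G(\xi_k)-G(\bar\xi)< C\alpha_k$ then $G(\xi_{k+1})\le G(\bar\xi)+C\alpha_k^2\le G(\bar\xi)+C\alpha_0=G(\bar\xi)+\tfrac{2C}{p}\le G(\mathds 1)$ by hypothesis; since $\xi_0\in\mathcal C$, induction on $k$ keeps every iterate in $\mathcal C$ so that the estimate applies at every step.

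Finally, with $r_k:=G(\xi_k)-G(\bar\xi)$ and $\alpha_k=\tfrac{2}{k+p}$ the one-step estimate is the scalar recursion $r_{k+1}\le\tfrac{k+p-2}{k+p}\,r_k+\tfrac{4C}{(k+p)^2}$, and I would prove the announced inequality $r_k\le\tfrac{1}{k+p}\bigl(4C+\tfrac{K}{k+p}\bigr)$ by induction on $k$: the constant $K$ is precisely the excess needed for the estimate to hold at $k=0$, and the inductive step is the elementary algebra $\tfrac{k+p-2}{k+p}\bigl(\tfrac{4C}{k+p}+\tfrac{K}{(k+p)^2}\bigr)+\tfrac{4C}{(k+p)^2}=\tfrac{4C(k+p-1)}{(k+p)^2}+\tfrac{K(k+p-2)}{(k+p)^3}\le\tfrac{4C}{k+p+1}+\tfrac{K}{(k+p+1)^2}$, where the last step uses $\tfrac{k+p-1}{(k+p)^2}\le\tfrac{1}{k+p+1}$ and $\tfrac{k+p-2}{(k+p)^3}\le\tfrac{1}{(k+p+1)^2}$. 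I expect the main obstacle to lie in Lemma~\ref{1iter.lm}, specifically in making the curvature constant $C$ finite and uniform along the whole trajectory --- which forces the control of $\psi^{\eta_k}$ and hence the second integrability hypothesis --- and in simultaneously verifying that the iterates never leave $\mathcal C$, which is exactly the purpose of the condition $\tfrac{2C}{p}\le G(\mathds 1)-G(\bar\xi)$; the rest of the argument (the identification of $\eta_k$ and the scalar induction) is routine once these are in place.
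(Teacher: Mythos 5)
Your proposal follows essentially the same route as the paper: identification of $\eta_k$ as the minimizer of the semilinearized functional $\eta\mapsto\mathbb{E}[\eta\,\delta L(\xi_k)]+H(\eta)+\chi(\mathbb{E}[\eta]-1)$, an exact quadratic expansion of $L$ combined with convexity of $H+\chi$ and the optimality of $\eta_k$ tested against $\bar\xi$ to obtain the one-step estimate $G(\xi_{k+1})-G(\bar\xi)\le(1-\alpha_k)\bigl(G(\xi_k)-G(\bar\xi)\bigr)+C\alpha_k^2$ with the curvature constant controlled exactly as in Lemma \ref{1iter.lm} (the bound $\eta_k\le c^{-1}Z^{\rho}$ plus the second integrability hypothesis), an induction keeping the iterates in $\mathcal C$ via the condition $\tfrac{2C}{p}\le G(\mathds 1)-G(\bar\xi)$, and finally the scalar recursion. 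The one point to repair is the base case of your closing induction: with $K=(r_0-\tfrac{4C}{p})^+$ the required inequality $r_0\le\tfrac{4C}{p}+\tfrac{K}{p^2}$ fails whenever $r_0>\tfrac{4C}{p}$ and $p\ge 2$, so $K$ should be inflated to $p^2\bigl(r_0-\tfrac{4C}{p}\bigr)^+$ --- the same factor of $p^2$ that the paper's own telescoping argument silently drops when it writes $\int_0^k\tfrac{\dd x}{x+p}=\log(k+p)$ instead of $\log\tfrac{k+p}{p}$, so this is a shared constant-tracking issue rather than a flaw specific to your route.
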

 
\begin{remark}
The theorem shows that the algorithm converges when using step size $\alpha_k = \frac{2}{k+p}$ with $p$ large enough. It implies that $(\xi_k)$ is a minimizing sequence and hence, similarly to the proof of Lemma \ref{lemma:optG}, it follows that $(\xi_k)$ weakly converges to $\bar\xi$.  In practice, $G(\bar\xi)$ is not known but one can try increasing values of $p$ until convergence is obtained. 
\end{remark}

\subsection{Proof of Theorem \ref{main.thm}}
\label{sec:proof-of-main-thm}

To prove Theorem \ref{main.thm} we show that the equilibrium stochastic discount factor is uniquely defined as the solution of the convex minimization problem,
\begin{equation} \label{pb:optim-xi} \tag{$\Pb$}
\inf_{\xi \in  L^1_+(\mathcal F^0_T)} G(\xi).
\end{equation}
The proof is organized as follows: we first show that the mapping $G$ is proper, strictly convex and weakly lower semicontinuous (Lemma \ref{lemma:G}). Next we show that every element of the set $\mathcal C$ belongs to $\Xi$ (Lemma \ref{lemma:bounded-entropy}). Then we show that $G$ admits a unique minimizer $\bar \xi\in \mathcal C$ (Lemma \ref{lemma:optG}), and hence $\bar\xi$ belongs to $\Xi$.  This minimizer satisfies $\bar \xi >0$ a.s. (Lemma \ref{lemma:posG}). Finally we show that every minimizer of $G$ is a solution to the fixed point problem and every solution to the fixed point problem is a minimizer of $G$ (Lemma \ref{iff.lm}).

\begin{lemma} \label{lemma:G}
    The mapping $G$ is proper, strictly convex and weakly lower semicontinuous.
\end{lemma}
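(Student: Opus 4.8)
The plan is to verify the three asserted properties of $G = H + L + \chi(\mathbb E[\,\cdot\,]-1)$ summand by summand, leaning on the explicit linear dependence \eqref{eq:psi-V} of $\psi^\xi$ on $\xi$.

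\emph{Properness.} For a lower bound I would use that $h\ge -1$ on $[0,\infty)$, so $Z^\rho h(\xi/Z^\rho)\ge -Z^\rho$ pointwise; since $\rho\in(0,1)$ and $\mathbb E[Z]=1$, Jensen's inequality gives $\mathbb E[Z^\rho]\le 1$, hence $H(\xi)\ge -1$ for every $\xi\in L^1_+(\mathcal F^0_T)$. As $L\ge 0$ and $\chi\ge 0$, this yields $G\ge -1>-\infty$. To see $G\not\equiv+\infty$, evaluate at $\mathds 1$: a direct computation gives $H(\mathds 1)=\mathbb E[-\rho\ln Z-1]=\frac{\rho}{2}\,\mathbb E[\int_0^T|\lambda_t|^2\dd t]-1$, which is finite because the Novikov condition forces $\mathbb E[\int_0^T|\lambda_t|^2\dd t]<\infty$ (so $\int_0^{\cdot}\lambda_s\dd B^0_s$ is a true martingale). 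Together with the standing hypothesis $L(\mathds 1)<\infty$ and $\chi(\mathbb E[\mathds 1]-1)=\chi(0)=0$, this gives $G(\mathds 1)<\infty$, so $G$ is proper.

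\emph{Strict convexity.} The term $\chi(\mathbb E[\,\cdot\,]-1)$ is convex, being the composition of the indicator of the convex set $\{0\}$ with the affine map $\xi\mapsto\mathbb E[\xi]-1$. The term $L$ is convex: $\xi\mapsto\mathbb E[\xi V\mathcal E_T]$ is linear and, by \eqref{eq:psi-V}, $\xi\mapsto\psi^\xi$ is linear, so $\xi\mapsto\mathbb E[\int_0^T\frac{\alpha_t}{2}|\psi^\xi_t|^2\dd t]$ is convex. It then remains to check that $H$ is \emph{strictly} convex. The function $h$ is strictly convex on $[0,\infty)$ (its derivative is strictly increasing on $(0,\infty)$, and a short computation shows the boundary point $x=0$ does not break strict midpoint convexity), and $Z^\rho>0$ a.s., so for a.e.\ $\omega$ the map $x\mapsto Z^\rho(\omega)h(x/Z^\rho(\omega))$ is strictly convex on $[0,\infty)$; integrating the corresponding strict pointwise inequality over the positive-measure set on which two distinct elements of $L^1_+$ disagree shows $H$, and hence $G$, is strictly convex.

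\emph{Weak lower semicontinuity.} Since $G$ is convex, its sublevel sets are convex, hence weakly ($\sigma(L^1,L^\infty)$) closed as soon as they are norm-closed (Mazur's theorem); it therefore suffices to prove $G$ is lower semicontinuous for the $L^1$-norm, and by the usual subsequence argument it is enough to show $\liminf_n G(\xi_n)\ge G(\xi)$ for $\xi_n\to\xi$ in $L^1$ with, in addition, $\xi_n\to\xi$ a.s. The $\chi$-term is l.s.c.\ because $\xi\mapsto\mathbb E[\xi]-1$ is norm-continuous and the indicator of $\{0\}$ is l.s.c. For the entropic term, $Z^\rho h(\xi_n/Z^\rho)+Z^\rho\to Z^\rho h(\xi/Z^\rho)+Z^\rho$ a.s.\ by continuity of $h$ on $[0,\infty)$, these variables are nonnegative, and $\mathbb E[Z^\rho]<\infty$, so Fatou's lemma yields $\liminf_n H(\xi_n)\ge H(\xi)$. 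For $L$, the linear part $\xi\mapsto\gamma^\star\mathbb E[\xi V\mathcal E_T]$ is l.s.c.\ by Fatou applied to the nonnegative integrands $\xi_n V\mathcal E_T$; for the quadratic part, conditional Fatou applied to the nonnegative variables $\xi_n\mathcal E_{t,T}$ gives $\liminf_n\mathbb E[\xi_n\mathcal E_{t,T}\mid\mathcal F_t]\ge\mathbb E[\xi\mathcal E_{t,T}\mid\mathcal F_t]$ a.s., and since $x\mapsto x^2$ is nondecreasing on $[0,\infty)$, a further application of Fatou in $(\omega,t)$ to $\frac{c_t^2}{\alpha_t}\mathbb E[\xi_n\mathcal E_{t,T}\mid\mathcal F_t]^2\ge 0$ yields $\liminf_n\mathbb E[\int_0^T\alpha_t|\psi^{\xi_n}_t|^2\dd t]\ge\mathbb E[\int_0^T\alpha_t|\psi^{\xi}_t|^2\dd t]$. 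Summing these three lower bounds gives $\liminf_n G(\xi_n)\ge G(\xi)$. The one point requiring care is exactly this last estimate: because $\mathcal E_{t,T}$ is unbounded, $\xi\mapsto\psi^\xi$ is not continuous from $L^1$ into any $L^2$-type space, so one cannot pass to the limit directly inside the conditional expectation, and the argument instead combines \emph{conditional} Fatou with monotonicity of the square on $[0,\infty)$ — which is where the sign constraint $\xi\ge 0$ is indispensable; everything else is a routine assembly of convex-analytic facts and Fatou's lemma.
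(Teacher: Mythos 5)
Your proof is correct and follows essentially the same route as the paper's: the lower bound $H(\xi)\ge -\mathbb E[Z^\rho]\ge -1$ via Jensen, strict convexity carried by the entropy term, and reduction of weak to strong lower semicontinuity by convexity (Mazur), with Fatou-type arguments along an a.s.\ convergent subsequence. The only cosmetic difference is in the quadratic part of $L$, where you use conditional Fatou plus monotonicity of the square while the paper uses the running infimum $\eta^n=\inf_{k\ge n}\xi^k$ with monotone convergence — these are interchangeable — and your explicit check that $H(\mathds 1)$ is finite under Novikov is in fact slightly more careful than the paper's shortcut $G(\mathds 1)=L(\mathds 1)$.
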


\begin{proof}
  \textit{Step 1: $G$ is proper and strictly convex.}
By assumption $G(\mathds{1})= L(\mathds{1}) < +\infty$. For all $\xi \in L^1_+(\mathcal{F}^0_T)$ such that $\mathbb{E} \left[\xi \right] = 1$, we have 
\begin{equation*}
    G(\xi) \geq H(\xi),
\end{equation*}
since $L(\xi) \geq 0$.  Using that $\inf_{x > 0} y^{\rho}h(x/y^{\rho}) \geq - y^{\rho}$, for all $y \in \mathbb{R}_+$, we have
\begin{equation*}
     Z^{\rho} h(\xi/Z^{\rho}) \geq -Z^\rho,
\end{equation*}
almost surely, for all $\xi \in L^1_+(\mathcal{F}^0_T)$. Then 
\begin{equation*} %\label{eq:ineq-H}
H(\xi) \geq \mathbb E\left[\inf_{\xi > 0} Z^{\rho} h(\xi/Z^{\rho})\right] \geq -\mathbb E\left[Z^\rho\right] \geq -\mathbb E\left[Z\right]^\rho= -1,
\end{equation*}
where the last inequality follows by Jensen's inequality, since $\rho<1$. Then $G$ is lower bounded and is thus proper.
The strict convexity of $G$ follows directly from the strict convexity of $H$, the convexity of $L$ as an upper bound of a family of linear mappings and the convexity of the mapping $\xi \mapsto \chi(\mathbb{E} \left[ \xi \right] - 1)$.

\noindent \textit{Step 2: $G$ is weakly lower semicontinuous.} We first prove that $G$ is strongly lower semicontinuous.
Since $G$ is lower bounded by step 1 we have that $\bar{G} = \inf_{\xi \in L^1_+(\mathcal{F}^0_T)} G(\xi)$ is finite.
Let $(\xi^n)_{n\in \mathbb{N}} \in \dom(G)$ be a sequence  stronlgy converging to $\xi$. Since strong $L^1$ convergence implies the convergence in probability, and convergence in probability implies almost sure convergence along a subsequence, up to choosing a subsequence we may and will assume that $\xi^n\to \xi$ almost surely. On the one hand, we have
\begin{equation*}
    Z^{\rho} h(\xi^n /Z^{\rho}) \underset{n \to +\infty}{\to} Z^{\rho} h(\xi /Z^{\rho}),
\end{equation*}
almost surely.  Since, as we have shown above, $Z^{\rho} h(\xi^n /Z^{\rho}) \geq - Z^{\rho}$, Fatou's Lemma yields
\begin{equation*}
\mathbb E\left[H(\xi^n) \right] \\\leq \liminf_{n \to +\infty} \mathbb E\left[H(\xi^n)\right].
\end{equation*}
Since $H$ is convex, it is also weakly lower semicontinuous by \cite[Corollary 3.9]{brezis}. 

On the other hand, consider the functional $L$ defined in \eqref{functionals.eq}. The first term in $L$ is (strongly) lower semicontinuous by Fatou's lemma. In the second term, by a conditional version of the same lemma
$$
\mathbb E[\xi\mathcal E_{t,T}] \leq \liminf_n \mathbb E[\xi^n\mathcal E_{t,T}]
$$
a.s.~for all $t\in[0,T]$. Since $\xi_t>0$ a.s.~for all $t\in[0,T]$ and $c^*$ is lower semicontinuous, by another application of Fatou's lemma we get the weak lower semicontinuity of the mapping $L$. 
%\begin{align*}
%    \psi^{\xi}_t = \frac{c_t}{\alpha_t} \mathbb{E} \left[\xi \mathcal{E}_{t,T} \vert \mathcal F_t \right] = \frac{c_t}{\alpha_t}\mathbb{E} \left[\lim_{n \to +\infty} \eta^n \mathcal{E}_{t,T}|\mathcal F_t \right] & \leq \liminf_{n \to +\infty} \frac{c_t}{\alpha_t}\mathbb{E} \left[ \eta^n \mathcal{E}_{t,T}|\mathcal F_t \right] \\
%    & = \liminf_{n \to +\infty} \psi^{\eta^n }_t.
%\end{align*}
%The monotonicity of $\psi^\xi$ with respect to $\xi$ together with the definition of $\eta^n $ yields
%\begin{align*}
%    L(\xi) \leq 
%    \liminf_{n \to +\infty} L(\eta^n) \leq \liminf_{n \to +\infty} L(\xi^n).
%\end{align*}
Since the sum of lower semicontinuous functions is lower semicontinuous, it follows that $G$ is strongly lower semicontinuous. 
%Since $G$ is convex by step 1, $G$ is weakly lower-semicontinuous by \cite[Corollary 3.9]{brezis}, which concludes the step and the proof.
\end{proof}

\begin{lemma} \label{lemma:bounded-entropy}
    There is a constant $C>0$ such that for any $\xi \in \mathcal{C}$,
        \begin{align*} %\label{eq:entropy-bounded}
            &\mathbb{E} \left[ \xi \ln(\xi) \right]\leq C,  &&\mathbb{E} \left[ \xi \ln\frac{\xi}{Z} \right]\leq C,\\
            &\mathbb E[V\xi \mathcal E_T]\leq C,  && \mathbb E\left[\int_0^T \alpha_t c^*\left(\frac{e_t}{\alpha_t} \mathbb E[\xi \mathcal E_{t,T}|\mathcal F_t]\right) \dd t\right]\leq C .
        \end{align*}
\end{lemma}
\begin{proof}
     We start with the two first inequalities. 
     Since $\xi \in \mathcal{C}$, we have $ H(\xi) + L(\xi) \leq G(\mathds{1})$.  Using that $L(\xi) \geq 0$ and the definition of $H$, we have
    \begin{equation}  \label{eq:h-lnZ}
        \mathbb{E}[\xi \ln(\xi)]  \leq  \mathbb{E}\left[ \rho \xi \ln(Z) \right] +1 + G(\mathds{1}).
    \end{equation}
    By convex duality we also have that 
    \begin{equation} \label{eq:lnZ-lnxi}
        \xi \ln(Z) \leq Z+ \xi \ln \xi - \xi.
    \end{equation}
    Then combining \eqref{eq:h-lnZ} and \eqref{eq:lnZ-lnxi}, we have 
    $$\mathbb{E} \left[ \xi \ln(\xi) \right]\leq \frac{ 1 + G(\mathds{1})}{1-\rho}.$$
    Now since 
    $$
    H(\xi) = (1-\rho)\mathbb E[\xi \ln \xi - \xi] + \rho \mathbb E[\xi \ln \frac{\xi}{Z}] - \rho
    $$
    and $\mathbb E[\xi \ln \xi - \xi]\geq -1$, 
    we conclude that 
    $$
      \mathbb E[\xi \ln \frac{\xi}{Z}] \leq \frac{G(\mathds 1)+ 1}{\rho}
    $$
   Finally, as in the proof of Lemma \ref{lemma:G}, we have that $H(\xi) \geq - \mathbb{E}[Z^\rho]$. Then
        $$
        L(\xi) \leq G(\mathds{1}) + \mathbb E[Z^\rho] \leq G(\mathds{1}) + 1.
        $$
        Since both terms in $L$ are positive, the last two inequalities follow. 
\end{proof}

\begin{lemma}
    There is a unique solution $\bar{\xi}\in \mathcal C$ of the problem \eqref{pb:optim-xi}.
    \label{lemma:optG}
\end{lemma}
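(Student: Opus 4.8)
The plan is to apply the direct method of the calculus of variations in the Banach space $L^1(\mathcal F^0_T)$, extracting compactness from the entropy estimate of Lemma~\ref{lemma:bounded-entropy} and closing the argument with the weak lower semicontinuity and strict convexity established in Lemma~\ref{lemma:G}. First I would observe that $\bar G = \inf_{\xi \in L^1_+(\mathcal F^0_T)} G(\xi)$ is finite: it is bounded above by $G(\mathds 1) = L(\mathds 1) < +\infty$ and bounded below by the estimate \eqref{eq:ineq-H}. Choosing a minimizing sequence $(\xi^n)_n$ and discarding finitely many terms, I may assume $G(\xi^n)\leq G(\mathds 1)$ for every $n$, i.e.\ $\xi^n \in \mathcal C$; in particular $\mathbb E[\xi^n] = 1$ (the indicator term in $G$ being finite), and Lemma~\ref{lemma:bounded-entropy} gives $\sup_n \mathbb E[\xi^n \ln(\xi^n)] \leq C$.

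The compactness step is the heart of the argument: the superlinear growth of $t \mapsto t\ln t$ together with the uniform entropy bound shows, via the de la Vallée-Poussin criterion, that $(\xi^n)_n$ is uniformly integrable, hence relatively weakly compact in $L^1(\mathcal F^0_T)$ by the Dunford--Pettis theorem. I would then pass to a subsequence (not relabeled) with $\xi^n \rightharpoonup \bar\xi$ weakly in $L^1(\mathcal F^0_T)$. Since $L^1_+(\mathcal F^0_T)$ is convex and strongly closed it is weakly closed, so $\bar\xi \in L^1_+(\mathcal F^0_T)$; and weak convergence preserves the mean, so $\mathbb E[\bar\xi] = 1$.

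To conclude existence, I would invoke the weak lower semicontinuity of $G$ from Lemma~\ref{lemma:G}: $G(\bar\xi) \leq \liminf_n G(\xi^n) = \bar G$, hence $G(\bar\xi) = \bar G$ and $\bar\xi$ is a minimizer. Since $\bar G \leq G(\mathds 1)$, we get $\bar G \leq G(\bar\xi) \leq G(\mathds 1)$, i.e.\ $\bar\xi \in \mathcal C$. For uniqueness, suppose $\xi_1 \neq \xi_2$ were two minimizers; then $\tfrac12(\xi_1+\xi_2) \in L^1_+(\mathcal F^0_T)$ and, by the strict convexity of $G$ (Lemma~\ref{lemma:G}), $G\big(\tfrac12(\xi_1+\xi_2)\big) < \tfrac12 G(\xi_1) + \tfrac12 G(\xi_2) = \bar G$, contradicting the definition of $\bar G$.

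The main obstacle is precisely the compactness step: weak $L^1$ sublevel sets of a generic convex functional need not be compact, and it is the entropic term $H$ in $G$ — controlled uniformly on $\mathcal C$ by Lemma~\ref{lemma:bounded-entropy} — that upgrades the minimizing sequence to a uniformly integrable, hence weakly relatively compact, family. Everything else (weak closedness of the positive cone, preservation of the mean, the strict-convexity uniqueness argument) is routine once this is in place.
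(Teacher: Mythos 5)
Your proposal is correct and follows essentially the same route as the paper: the direct method, with uniform integrability of the minimizing sequence obtained from the entropy bound of Lemma \ref{lemma:bounded-entropy} via de la Vall\'ee Poussin and Dunford--Pettis, weak lower semicontinuity from Lemma \ref{lemma:G} to pass to the limit, and strict convexity for uniqueness. You in fact supply a few details the paper leaves implicit (weak closedness of the positive cone, preservation of the mean under weak convergence, and the explicit midpoint contradiction for uniqueness), while the paper additionally cites Eberlein--\v{S}mulian to pass from weak relative compactness to weak sequential relative compactness.
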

The proof follows the direct method of variation calculus.

\begin{proof}
    For all $\xi \in \mathcal{C}$, by Lemma \ref{lemma:bounded-entropy} there exists $C>0$ such that $\mathbb{E} \left[ \xi \ln(\xi) \right] \leq C$. Then by de la Vall{\'e}e Poussin theorem
    \cite[Theorem VI]{poussin1915integrale} the set is uniformly integrable. Now by the Dunford-Pettis \cite[Theorem 4.30]{brezis} the set $\mathcal{C}$ is weakly relatively compact and by the Eberlein-\v{S}mulian Theorem \cite[Section V.6.1, p.430]{dunford1988linear}, the set $\mathcal{C}$ is weakly sequentialy relatively compact. Now let $(\xi^n)_{n \in \mathbb{N}} \in \mathcal{C}$ be a minimizing sequence, that is to say such that $\lim_{n \to \infty} G(\xi^n) = \bar{G}$. By weak compactness of $\mathcal{C}$, there exists $\bar{\xi}\in L^1_+(\mathcal F^0_T)$ such that the minimizing sequence weakly converges to $\bar{\xi}$. Using that $G$ is weakly lower-semicontinuous by Lemma \ref{lemma:G}, yields that
    \begin{equation*}
        \bar{G} \leq G(\bar{\xi}) \leq \liminf_{n \to \infty} G(\xi^n) \leq \lim_{n \to \infty} G(\xi^n)  = \bar{G},
    \end{equation*}
    which shows that $\bar{\xi}$ is a minimizer of $G$. Finally the uniqueness follows by strict convexity of the criterion $G$.
\end{proof}

\begin{lemma} The minimizer $\bar \xi$ of $G$ satisfies $\xi >0$ a.s.\label{lemma:posG}
\end{lemma}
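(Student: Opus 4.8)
The plan is to argue by contradiction: suppose the event $A = \{\bar\xi = 0\}$ has strictly positive probability, and construct a competitor $\xi_\varepsilon$ that strictly decreases $G$, contradicting minimality. The natural perturbation is $\xi_\varepsilon = \bar\xi + \varepsilon(\mathds{1}_A - \mathbb{P}(A))$ for small $\varepsilon > 0$, which stays in $L^1_+(\mathcal F^0_T)$ (at least on $A$, where it equals $\varepsilon(1-\mathbb{P}(A)) > 0$; off $A$ one needs $\bar\xi \geq \varepsilon \mathbb{P}(A)$, so more carefully one should perturb only on a sub-event where $\bar\xi$ is bounded below, e.g.\ take $\xi_\varepsilon = \bar\xi + \varepsilon(\mathds{1}_A - \mathds{1}_B \mathbb{P}(A)/\mathbb{P}(B))$ for $B \subset \{\bar\xi \geq \delta\}$ of the right measure) and preserves the constraint $\mathbb{E}[\xi_\varepsilon] = 1$. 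The point is that the entropic term $H$ has infinite slope at $0$: moving mass onto $A$ where $\bar\xi = 0$ decreases $h$ at rate $-\infty$ (since $h'(x) = \ln x \to -\infty$ as $x \downarrow 0$), while the linear-quadratic term $L$ and the compensating change on $B$ contribute only an $O(\varepsilon)$ cost. Hence $G(\xi_\varepsilon) < G(\bar\xi)$ for $\varepsilon$ small, the desired contradiction.

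Concretely, first I would reduce to the case $\mathbb{P}(A) \in (0,1)$ (if $\mathbb{P}(A) = 1$ then $\mathbb{E}[\bar\xi] = 0 \neq 1$, impossible). Since $\mathbb{E}[\bar\xi]=1$, there is $\delta > 0$ with $\mathbb{P}(B_\delta) > 0$ where $B_\delta = \{\bar\xi \geq \delta\}$; shrink $B := B_\delta$ if needed so that $\mathbb{P}(B) = \mathbb{P}(A)$ is not required — instead just keep the constraint by subtracting the right constant. Define, for $\varepsilon \in (0,\delta)$,
\begin{equation*}
\xi_\varepsilon = \bar\xi + \varepsilon\Big(\mathds{1}_A - \frac{\mathbb{P}(A)}{\mathbb{P}(B)}\mathds{1}_B\Big),
\end{equation*}
which lies in $L^1_+(\mathcal F^0_T)$ and satisfies $\mathbb{E}[\xi_\varepsilon] = 1$. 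Then estimate $G(\xi_\varepsilon) - G(\bar\xi)$ term by term: the change in $L$ is $O(\varepsilon)$ because $L$ is linear-quadratic in $\xi$ and the perturbation is bounded (monotonicity and the explicit form of $\psi^\xi$ in terms of conditional expectations give a linear bound in $\varepsilon$); the change in $H$ restricted to $B$ is $O(\varepsilon)$ by the mean value theorem since $h$ is $C^1$ on $[\delta/2, \infty)$; but the change in $H$ restricted to $A$ equals $\mathbb{E}[Z^\rho h(\varepsilon \mathds{1}_A / Z^\rho)\,;\,A]$ minus $\mathbb{E}[Z^\rho h(0)\,;\,A] = 0$, which is $\varepsilon\,\mathbb{E}[\mathds{1}_A(\ln(\varepsilon/Z^\rho) - 1)]$ — a quantity behaving like $\varepsilon \ln \varepsilon \to 0^-$ and dominating all the $O(\varepsilon)$ terms in sign. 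Therefore $G(\xi_\varepsilon) < G(\bar\xi)$ for $\varepsilon$ small enough.

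The main obstacle is handling the term $L(\xi_\varepsilon) - L(\bar\xi)$ cleanly: although $L$ is convex and quadratic in the controls $\psi^\xi_t = (c_t/\alpha_t)\mathbb{E}[\xi\mathcal E_{t,T}\mid\mathcal F_t]$, one must check that the perturbation — which is $\mathcal F^0_T$-measurable but multiplied by $\mathcal E_{t,T}$ which is only $\mathcal F_T$-measurable — produces a change that is genuinely $O(\varepsilon)$ and not worse, and that all the relevant expectations are finite. Here the integrability afforded by $\bar\xi \in \mathcal C \subset \Xi$ (Lemma \ref{lemma:bounded-entropy}) together with the boundedness of the perturbation $\mathds{1}_A - (\mathbb{P}(A)/\mathbb{P}(B))\mathds{1}_B$ should suffice: writing $L(\xi_\varepsilon) = L(\bar\xi) + \varepsilon \langle L'(\bar\xi), \text{(perturbation)}\rangle + \tfrac{\varepsilon^2}{2}(\text{quadratic form})$ exactly (since $L$ is an affine function plus a quadratic), the linear term is finite by the Cauchy--Schwarz-type bounds already established and the quadratic term is finite and positive, so indeed $L(\xi_\varepsilon) - L(\bar\xi) = O(\varepsilon)$. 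A minor technical point to dispatch is the case where $Z^\rho$ is very large on part of $A$; but $\mathbb{E}[Z^\rho] \leq 1$ by Jensen (as used in Lemma \ref{lemma:G}), so $\mathbb{E}[\mathds{1}_A Z^\rho] < \infty$ and the $\varepsilon \ln \varepsilon$ asymptotics go through.
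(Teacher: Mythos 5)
Your proposal is correct and follows essentially the same strategy as the paper: perturb the putative minimizer by placing $\varepsilon$ mass on the null set $A$, so that the entropic term contributes $\varepsilon\ln\varepsilon\,\mathbb{P}(A)$ while all other changes are $O(\varepsilon)$, contradicting minimality. The only (immaterial) difference is that the paper preserves the constraint $\mathbb{E}[\xi_\varepsilon]=1$ via the convex combination $\xi_\varepsilon=(1-\varepsilon)\bar\xi+\frac{\varepsilon}{p_A}\mathbf 1_A$, which sidesteps your need for an auxiliary set $B$ where $\bar\xi$ is bounded below.
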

\begin{proof}
Assume by way of contradiction that there is a set $A\in \mathcal F^0_T$ with $p_A:=\mathbb P[A]>0$ and $\bar\xi = 0$ on $A$. For $\varepsilon \in (0,1)$, define 
$$
\xi_\varepsilon = (1-\varepsilon)\bar \xi + \frac{\varepsilon}{p_A} \mathds 1_{A}.  
$$
It is easy to check that as $\varepsilon \downarrow 0$, 
$$
G(\xi_\varepsilon)-G(\bar \xi) = \varepsilon \ln(\varepsilon) + O(\varepsilon). 
$$
Since the right-hand side is negative for $\varepsilon$ small enough, this contradicts the fact that $\bar \xi$ is a minimizer of $G$. 
\end{proof}

\begin{lemma} \label{lemma:deriv}
    {
    Let $\xi \in \mathcal{C}$. For any $\eta \in L^1(\mathcal{F}^0_T)$ such that $\mathbb{E}\left[ \eta \right] = 0$ and $|\eta| \leq \frac{1}{2} \xi$ almost surely, we have
    \begin{equation} \label{eq:deriv-G}
       \lim_{\lambda \to 0} \lambda^{-1} \left(G(\xi + \lambda \eta)- G(\xi) \right) = \mathbb{E}\left[ \eta (\delta H(\xi) + \delta L(\xi))\right],
    \end{equation}
    with $\delta L(\xi) \coloneqq  \mathbb{E}[V_T^\xi \vert \mathcal{F}_T^0]$ and $\delta H(\xi) \coloneqq  \ln(\xi/Z^{\rho})$.}
\end{lemma}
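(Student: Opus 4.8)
The plan is to split $G=H+L+\chi(\mathbb E[\cdot]-1)$ and differentiate each piece along the ray $\lambda\mapsto\xi+\lambda\eta$ at $\lambda=0$. Since $\xi\in\mathcal C$ forces $\mathbb E[\xi]=1$ and, by hypothesis, $\mathbb E[\eta]=0$, we have $\mathbb E[\xi+\lambda\eta]=1$ for every $\lambda$, so $\chi(\mathbb E[\xi+\lambda\eta]-1)\equiv 0$ and the indicator term contributes nothing. Moreover $|\eta|\le\tfrac12\xi$ gives $\tfrac12\xi\le\xi+\lambda\eta\le\tfrac32\xi$ for $|\lambda|\le 1$, so $\xi+\lambda\eta\in L^1_+(\mathcal F^0_T)$; in particular $\eta=0$ on $\{\xi=0\}$, which lets us bypass the singularity of $h$ at the origin. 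It therefore suffices to differentiate $H$ and $L$ and add the results.

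For $L$ the computation is algebraic. The map $\xi\mapsto\psi^\xi$ is linear, so $\psi^{\xi+\lambda\eta}=\psi^\xi+\lambda\psi^\eta$ with $\psi^\eta_t:=\frac{c_t}{\alpha_t}\mathbb E[\eta\mathcal E_{t,T}\vert\mathcal F_t]$; since $c_t>0$ and $\mathcal E_{t,T}>0$, the bound $|\eta|\le\tfrac12\xi$ propagates to $|\psi^\eta_t|\le\tfrac12\psi^\xi_t$, hence $\psi^\eta\in L^2_\alpha(\mathbb F)$ by Lemma \ref{lemma:bounded-entropy}. Consequently $\lambda\mapsto L(\xi+\lambda\eta)$ is a quadratic polynomial in $\lambda$ with finite coefficients (using $\mathbb E[\xi V\mathcal E_T]<\infty$ and $\psi^\xi\in L^2_\alpha$, both from Lemma \ref{lemma:bounded-entropy}), and its derivative at $0$ equals $\gamma^\star\mathbb E[\eta V\mathcal E_T]+\gamma^\star\mathbb E\big[\int_0^T\alpha_t\psi^\xi_t\psi^\eta_t\,\dd t\big]$. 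Writing $\alpha_t\psi^\eta_t=c_t\mathbb E[\eta\mathcal E_{t,T}\vert\mathcal F_t]$ and using that $c_t\psi^\xi_t$ is $\mathcal F_t$-measurable, the tower property and Fubini turn the second term into $\gamma^\star\mathbb E\big[\eta\int_0^T c_t\mathcal E_{t,T}\psi^\xi_t\,\dd t\big]$; adding the first term and recalling \eqref{eq:psi-V} gives $\gamma^\star\mathbb E[\eta V^\xi_T]=\gamma^\star\mathbb E\big[\eta\,\mathbb E[V^\xi_T\vert\mathcal F^0_T]\big]=\mathbb E[\eta\,\delta L(\xi)]$, the conditioning being legitimate because $V^\xi_T\ge 0$ and $\mathbb E[|\eta|V^\xi_T]<\infty$ (bound $|\eta|V^\xi_T$ by $\tfrac12\xi V^\xi_T$ and expand, using the same estimates).

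For $H=\mathbb E[Z^\rho h(\xi/Z^\rho)]$ we invoke dominated convergence. On $\{\xi>0\}$ the integrand is differentiable in $\lambda$ with $\partial_\lambda\big\{Z^\rho h\big((\xi+\lambda\eta)/Z^\rho\big)\big\}=\eta\,h'\big((\xi+\lambda\eta)/Z^\rho\big)=\eta\ln\big((\xi+\lambda\eta)/Z^\rho\big)$ (recall $h'=\ln$), while the integrand vanishes identically on $\{\xi=0\}$; at $\lambda=0$ the derivative is $\eta\ln(\xi/Z^\rho)=\eta\,\delta H(\xi)$. By the mean value theorem the difference quotient $\lambda^{-1}\big(Z^\rho h((\xi+\lambda\eta)/Z^\rho)-Z^\rho h(\xi/Z^\rho)\big)$ equals $\eta\ln\big((\xi+\lambda_\omega\eta)/Z^\rho\big)$ for some $\lambda_\omega$ between $0$ and $\lambda$; since $\tfrac12\xi\le\xi+\lambda_\omega\eta\le\tfrac32\xi$ this is dominated in absolute value by $\tfrac12\xi\big(|\ln\xi|+\rho|\ln Z|+\ln 2\big)$, which is integrable because $\mathbb E[\xi|\ln\xi|]<\infty$ (as $\mathbb E[\xi\ln\xi]<\infty$ by Lemma \ref{lemma:bounded-entropy} and $x\ln x\ge -e^{-1}$), $\mathbb E[\xi|\ln Z|]<\infty$, and $\mathbb E[\xi]=1$. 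Dominated convergence then gives $\lim_{\lambda\to 0}\lambda^{-1}(H(\xi+\lambda\eta)-H(\xi))=\mathbb E[\eta\,\delta H(\xi)]$, and summing the $H$- and $L$-limits yields \eqref{eq:deriv-G}.

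The only delicate point is the $H$-term: one must tame $h'=\ln(\cdot)$ near the origin, and this is precisely where the hypothesis $|\eta|\le\tfrac12\xi$ is needed — it confines $\xi+\lambda\eta$ to $[\tfrac12\xi,\tfrac32\xi]$, so the logarithm is controlled by $|\ln\xi|$ up to an additive constant, and the entropy estimate of Lemma \ref{lemma:bounded-entropy} then furnishes the integrable dominating function. The $L$-term is routine once one records that $\psi^{\cdot}$ is linear and order-preserving, which also yields $\psi^\eta\in L^2_\alpha(\mathbb F)$ and the finiteness of all coefficients in the quadratic expansion.
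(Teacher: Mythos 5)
Your proof is correct and follows essentially the same route as the paper: the indicator term drops out because $\mathbb{E}[\xi+\lambda\eta]=1$, the $L$-part is an exact quadratic expansion in $\lambda$ whose linear coefficient is rewritten via the tower property as $\gamma^\star\mathbb{E}[\eta V^\xi_T]=\mathbb{E}[\eta\,\delta L(\xi)]$, and the $H$-part is handled by dominated convergence with the bound $\tfrac12\xi\le\xi+\lambda\eta\le\tfrac32\xi$ taming the logarithm and Lemma \ref{lemma:bounded-entropy} supplying integrability. The only cosmetic difference is that you control the difference quotient of $h$ via the mean value theorem where the paper uses the integral form $y\int_0^1\ln(x+t\lambda y)\,\dd t$; both yield the same dominating function.
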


\begin{proof}
    For any $\lambda \in (0,1)$ we have
\begin{equation*}
\lambda^{-1} \left(G(\xi + \lambda \eta)-G(\xi) \right) = \lambda^{-1} \left((H+L)(\xi + \lambda \eta) - (H+L)(\xi) \right), 
\end{equation*}
since $\mathbb{E}\left[\xi + \lambda \eta \right] = \mathbb{E}\left[ \xi \right] = 1$.
We consider separately the functionals $H$ and $L$. For the functional $H$, we have
\begin{equation} \label{deriv}
    \lambda^{-1} (H(\xi + \lambda \eta) - H( \xi)) = \lambda^{-1} \mathbb{E} \left[Z^{\rho} \left(h((\xi + \lambda \eta)/Z^{\rho}) - h(\xi/Z^{\rho})\right)\right].
\end{equation}
Let $(x,y) \in \mathbb{R}_+ \times \mathbb{R}$, and assume that $\lambda |y| \leq x$. By the fundamental theorem of calculus we have
\begin{equation*}
    \lambda^{-1}(h(x + \lambda y) - h(y)) = y \int_{0}^{1} \ln (x +  t \lambda y)\, \dd t.
\end{equation*}
By monotonicity and concavity of the logarithm we have
\begin{equation*}
    \ln(x) + \ln(1-t)  \leq  \ln(x + \lambda t \eta) \leq \ln(x) + \ln(2),
\end{equation*}
for any $t\in [0,1]$. Then we have that
\begin{equation*}
    y \ln(x) + y \int_0^1 \ln(1-t) \dd t \leq \lambda^{-1}(h(x + \lambda y) - h(y)) \leq y (\ln(\xi) + \ln(2)).
\end{equation*}
Combining the above estimates and using both that $\mathbb{E}[\eta] = 0$ and $|\eta|\leq \frac{1}{2}\xi$ yields
\begin{equation*}
   \lambda^{-1}  \vert \mathbb{E} \left[Z^{\rho} h((\xi + \lambda \eta)/Z^{\rho}) - h(\xi/Z^{\rho})\right]\vert = \mathbb{E} \left[|\eta \ln(\xi/Z^{\rho})|\right] \leq \frac{1}{2} \vert H(\xi) \vert \leq C.
\end{equation*}
where the last inequality follows by Lemma \ref{lemma:bounded-entropy}.
We conclude that \eqref{deriv} admits an integrable bound and so by dominated convergence
$$
\lim_{\lambda\downarrow 0} \lambda^{-1}(H(\xi + \lambda \eta)-H(\xi)) = \mathbb E[\eta \ln (\xi)]. 
$$
We now consider the functional $L$. Using that $c^*$ is assumed to be twice differentiable yields
\begin{align*}
\lambda^{-1}(L(\xi + \lambda \eta)-L(\xi))  = \gamma^\star   \mathbb{E} \left[\eta \left( V\mathcal{E}_T + \int_0^T \alpha_t (c^*)'\left(\psi^\xi_t\right) \dd t \right) \right]  + \gamma^\star\frac{\lambda}{2}  \mathbb{E} \left[\int_0^T \alpha_t  (c^*)''\left(\psi^\xi_t\right)|\psi^{\eta}_t|^2 \dd t \right].
\end{align*}
Since $(c^*)''$ is bounded by assumption, it is clear that the last term converges to zero as $\lambda \downarrow 0$. Using that $\eta \in L^1(\mathcal{F}_T^0)$, the conclusion follows applying the law of iterated expectations and using that $V_T^\xi$ is solution to equation \eqref{eq:psi-V}.
\end{proof}

\begin{lemma}\label{iff.lm}
$\bar \xi$ is a minimizer of $G$ if and only if it is a solution to the fixed point problem \eqref{pb:fixed-point}.
\end{lemma}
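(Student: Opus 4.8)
The plan is to prove the equivalence by showing that $\bar\xi \in \mathcal C$ minimizes $G$ over $L^1_+(\mathcal F^0_T)$ if and only if the first-order optimality condition holds, and then to identify this first-order condition with the fixed-point equation \eqref{pb:fixed-point}. First I would note that by Lemma~\ref{lemma:optG} a minimizer $\bar\xi$ of $G$ exists, lies in $\mathcal C$, and by Lemma~\ref{lemma:posG} satisfies $\bar\xi>0$ a.s.; conversely, one should check that any solution $\xi$ of \eqref{pb:fixed-point} automatically lies in $\mathcal C$ (it satisfies $\mathbb E[\xi]=1$ by construction, and the remaining integrability bounds defining $\mathcal C$, hence $\Xi$, follow from the assumption $L(\mathds 1)<+\infty$ together with the explicit exponential form, exactly as in Lemma~\ref{lemma:bounded-entropy}), so that the derivative formula of Lemma~\ref{lemma:deriv} is available at such a point as well.

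The core of the argument is the following. Suppose $\bar\xi$ minimizes $G$. For any $\eta\in L^1(\mathcal F^0_T)$ with $\mathbb E[\eta]=0$ and $|\eta|\le\frac12\bar\xi$ a.s., the perturbation $\bar\xi+\lambda\eta$ is admissible (nonnegative, integrates to $1$) for $\lambda\in(-1,1)$, so $\lambda\mapsto G(\bar\xi+\lambda\eta)$ has a minimum at $\lambda=0$, and Lemma~\ref{lemma:deriv} gives
\begin{equation*}
\mathbb E\!\left[\eta\bigl(\ln(\bar\xi/Z^\rho) + \gamma^\star\,\mathbb E[V^{\bar\xi}_T\mid\mathcal F^0_T]\bigr)\right] = 0 .
\end{equation*}
Because this holds for every such bounded, mean-zero $\eta$, a standard localization argument (test against $\eta = \mathds 1_A\,\mathbf 1_{\{\bar\xi\ge\varepsilon\}} - \mathbb E[\cdots]$ and let $\varepsilon\downarrow0$, using $\bar\xi>0$ a.s.) forces the bracket to be a.s.\ constant:
\begin{equation*}
\ln(\bar\xi/Z^\rho) + \gamma^\star\,\mathbb E[V^{\bar\xi}_T\mid\mathcal F^0_T] = \kappa \quad\text{a.s.}
\end{equation*}
Exponentiating yields $\bar\xi = e^{\kappa}\,Z^\rho\exp(-\gamma^\star\mathbb E[V^{\bar\xi}_T\mid\mathcal F^0_T])$; the constant $e^\kappa$ is then pinned down by $\mathbb E[\bar\xi]=1$, which is precisely \eqref{pb:fixed-point} after writing $Z^\rho = \exp(\rho\ln Z)$. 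Conversely, if $\xi$ solves \eqref{pb:fixed-point}, then $\ln(\xi/Z^\rho)+\gamma^\star\mathbb E[V^\xi_T\mid\mathcal F^0_T]$ is the constant $-\ln\mathbb E[\exp(\rho\ln Z-\gamma^\star\mathbb E[V^\xi_T\mid\mathcal F^0_T])]$, so the directional derivative in Lemma~\ref{lemma:deriv} vanishes in every admissible direction; since $G$ is convex (Lemma~\ref{lemma:G}), vanishing of the one-sided directional derivatives at $\xi$ along all feasible directions implies $\xi$ is a global minimizer — here one should spell out that any $\zeta\in\dom G$ can be reached from $\xi$ along the segment $\xi+\lambda(\zeta-\xi)$, and convexity gives $G(\zeta)\ge G(\xi) + \frac{d}{d\lambda}\big|_{0^+}G(\xi+\lambda(\zeta-\xi))$, the derivative being $\ge0$ by the first-order condition (the direction $\zeta-\xi$ may not satisfy $|\zeta-\xi|\le\frac12\xi$, so one truncates and passes to the limit, or invokes convexity of $\lambda\mapsto G(\xi+\lambda(\zeta-\xi))$ to reduce to small $\lambda$).

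The main obstacle I anticipate is the passage from ``the derivative vanishes along all directions $\eta$ with $\mathbb E[\eta]=0$, $|\eta|\le\frac12\bar\xi$'' to ``the bracket is a.s.\ constant'': the admissible directions are restricted (bounded relative to $\bar\xi$, and mean-zero), so one cannot test against arbitrary indicators directly. This is handled by the truncation $\eta_\varepsilon = \bigl(\mathds 1_A - \mathbb E[\mathds 1_A\mid\ \cdot\ ]\bigr)\wedge(\tfrac12\bar\xi)$-type constructions together with $\bar\xi>0$ a.s.\ and dominated/monotone convergence, but it requires care to make rigorous. A secondary point worth stating cleanly is that the minimizer is in $\dom G$, so that $H(\bar\xi)$ and $L(\bar\xi)$ are finite and Lemma~\ref{lemma:deriv}'s hypotheses (in particular $\xi\in\mathcal C$) genuinely apply; this is inherited from $G(\bar\xi)\le G(\mathds 1)=L(\mathds 1)<+\infty$.
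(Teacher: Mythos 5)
Your proposal is correct and follows essentially the same route as the paper: the ``only if'' direction via the first-order condition from Lemma~\ref{lemma:deriv} applied to mean-zero perturbations dominated by $\tfrac12\bar\xi$ (forcing $\ln(\bar\xi/Z^\rho)+\gamma^\star\mathbb E[V^{\bar\xi}_T\mid\mathcal F^0_T]$ to be a.s.\ constant, then normalizing by $\mathbb E[\bar\xi]=1$), and the ``if'' direction via the convexity/subgradient inequality for $H+L$ combined with the constancy of that same bracket. The localization step you flag as the main obstacle is exactly the point the paper passes over with ``since the choice of $\eta$ is arbitrary,'' so your extra care there is welcome but not a departure in method.
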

\begin{proof}
    \textit{Step 1: The if part}. Let $\bar{\xi} \in \Xi$ be a solution to the fixed point problem \eqref{pb:fixed-point}. For any $\xi \in L^1_+(\mathcal F^0_T)$ with $\mathbb E[\xi]=1$, by convexity of $L$ and $H$, we have
\begin{equation*}
 G(\xi) - G(\bar{\xi})  \geq \mathbb E\left[(\xi - \bar{\xi})\left( \delta H(\bar{\xi}) + \delta L(\bar{\xi})\right)\right].
\end{equation*}
By Lemma \ref{lemma:deriv} and using that $\bar \xi$ is a solution to the fixed point equation \eqref{pb:fixed-point} yields that 
\begin{align*}
    \mathbb E\left[(\xi - \bar{\xi})\left( \delta H(\bar{\xi}) + \delta L(\bar{\xi})\right)\right] &= \mathbb E\left[(\xi - \bar{\xi})\left(  \ln \bar{\xi} - \rho \ln Z + \gamma^* \mathbb E[V^{\bar{\xi}}_T|\mathcal F^0_T] \right)\right] \\
    & = C \mathbb E\left[(\xi - \bar{\xi})\right],
\end{align*}
with $C \coloneqq \ln\mathbb E\left[\exp\left(\rho \ln Z - \gamma^* \mathbb E[V^{\bar{\xi}}_T|\mathcal F^0_T] \right)\right] > 0$.
Since $\mathbb E[\xi] = \mathbb E[\bar{\xi}] = 1$, we conclude that $G(\xi) - G(\bar{\xi})\geq 0$ and hence $\bar{\xi}$ is a minimizer of $G$. 

\noindent \textit{Step 2: The only if part}. Let $\bar \xi$ be the minimizer of $G$. 
Let $\eta \in L^1(\mathcal F^0_T)$ such that $|\eta|\leq \frac{1}{2}\bar\xi$ a.s.~and $\mathbb E[\eta]=0$. For any $\lambda\in (0,1)$, Lemma \ref{lemma:deriv} yields
\begin{align*} \nonumber
\lambda^{-1} \left(G(\bar\xi + \lambda \eta)-G(\bar \xi) \right) = \mathbb{E}\left[ \eta (\delta H(\xi) + \delta L(\xi))\right] \geq 0. \label{ineq}
\end{align*}
Since  the choice of $\eta \in L^1(\mathcal F^0_T)$ is arbitrary provided that $|\eta|\leq \frac{1}{2}\xi$ and $\mathbb E[\eta]=0$, and since $\bar \xi >0$  a.s.,  we conclude that necessarily 
$$
\ln \bar \xi -\rho\ln (Z) + \gamma^* \mathbb E[V^{\bar{\xi}}_T|\mathcal F^0_T]  = C,
$$
for some constant $C$. The value of this constant may be found from the condition $\mathbb E[\bar \xi]=1$, leading to the fixed point equations \eqref{pb:fixed-point}.
\end{proof}

\subsection{Proof of Theorem \ref{thm:convergence-result}}
\label{sec:proof-of-convergence}

The proof of Theorem \ref{thm:convergence-result} will be given after a series of preparatory lemmas. We start with the following simple representation that follows directly from the definition of $L$.
\begin{corollary} \label{corollary:reg-L}
    For all $\xi, \eta \in L^1_+(\mathcal{F}^0_T)$, we have that
    \begin{equation*}
        L(\eta) -  L(\xi)  =  \mathbb{E} \left[(\eta-\xi) \delta L(\xi) \right] + R(\xi,\eta),
    \end{equation*}
    with 
    $R(\xi,\eta) \coloneqq \frac{\gamma^\star}{2} \mathbb{E} \left[ \int_0^T \alpha_t (c^*)''\left(\psi^\xi_t\right) |\psi_t^{\eta} -\psi_t^{\xi}|^2 \dd t\right] \geq 0.$
\end{corollary}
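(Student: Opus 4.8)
The plan is to obtain the identity by a direct expansion of $L(\eta)-L(\xi)$ from the definition of $L$ given just before Theorem \ref{main.thm}, the only structural input being that $\xi\mapsto\psi^\xi$ is linear (immediate from \eqref{psiopt}), so that the quadratic part of $L$ induces a \emph{symmetric} bilinear form on $L^1_+(\mathcal F^0_T)$. First I would isolate the piece of $L$ that is linear in its argument: from the definition,
\[
L(\eta)-L(\xi)=\gamma^\star\,\mathbb{E}\bigl[(\eta-\xi)V\mathcal{E}_T\bigr]+\frac{\gamma^\star}{2}\,\mathbb{E}\!\left[\int_0^T\alpha_t\bigl(|\psi^\eta_t|^2-|\psi^\xi_t|^2\bigr)\,\dd t\right].
\]

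For the quadratic term I apply the polarization identity $|\psi^\eta_t|^2-|\psi^\xi_t|^2=2\psi^\xi_t(\psi^\eta_t-\psi^\xi_t)+|\psi^\eta_t-\psi^\xi_t|^2$. The contribution of the last summand, after multiplication by $\gamma^\star/2$ and integration, is precisely $R(\xi,\eta)$, and it is nonnegative because $\alpha_t>0$ a.s.\ (and $\psi^\eta-\psi^\xi=\psi^{\eta-\xi}$ by linearity). It then remains to evaluate the cross term $\gamma^\star\,\mathbb{E}\bigl[\int_0^T\alpha_t\psi^\xi_t(\psi^\eta_t-\psi^\xi_t)\,\dd t\bigr]$ and to add it to $\gamma^\star\mathbb{E}[(\eta-\xi)V\mathcal{E}_T]$.

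For the cross term I would use the two equivalent readings of the bilinear form $(\zeta_1,\zeta_2)\mapsto\mathbb{E}\bigl[\int_0^T\alpha_t\psi^{\zeta_1}_t\psi^{\zeta_2}_t\,\dd t\bigr]$. On one side, $\alpha_t\psi^\xi_t=c_t\,\mathbb{E}[\xi\mathcal{E}_{t,T}\vert\mathcal F_t]$ and $\psi^\eta_t-\psi^\xi_t=\psi^{\eta-\xi}_t$ is $\mathcal F_t$-measurable, so $\alpha_t\psi^\xi_t(\psi^\eta_t-\psi^\xi_t)=\mathbb{E}[\xi\,c_t\mathcal{E}_{t,T}(\psi^\eta_t-\psi^\xi_t)\vert\mathcal F_t]$; by Fubini and the tower property the cross term equals $\gamma^\star\mathbb{E}\bigl[\xi\int_0^T c_t\mathcal{E}_{t,T}(\psi^\eta_t-\psi^\xi_t)\,\dd t\bigr]$. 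On the other side, exchanging the roles of $\xi$ and $\eta-\xi$ in the same manipulation (now using $\alpha_t\psi^{\eta-\xi}_t=c_t\,\mathbb{E}[(\eta-\xi)\mathcal{E}_{t,T}\vert\mathcal F_t]$ and the $\mathcal F_t$-measurability of $\psi^\xi_t$) shows the cross term equals $\gamma^\star\mathbb{E}\bigl[(\eta-\xi)\int_0^T c_t\mathcal{E}_{t,T}\psi^\xi_t\,\dd t\bigr]$. Adding $\gamma^\star\mathbb{E}[(\eta-\xi)V\mathcal{E}_T]$ and using the explicit terminal value $V^\xi_T=V\mathcal{E}_T+\int_0^T c_t\mathcal{E}_{t,T}\psi^\xi_t\,\dd t$ from \eqref{eq:psi-V}, the linear-plus-cross part collapses to $\gamma^\star\mathbb{E}[(\eta-\xi)V^\xi_T]$; since $\eta-\xi\in L^1(\mathcal F^0_T)$, conditioning on $\mathcal F^0_T$ rewrites this as $\mathbb{E}[(\eta-\xi)\delta L(\xi)]$, which is the claimed decomposition.

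The computation is routine; the only thing to watch is the applicability of Fubini and of the tower property, i.e.\ that $\mathbb{E}\bigl[\int_0^T|\xi\,c_t\mathcal{E}_{t,T}(\psi^\eta_t-\psi^\xi_t)|\,\dd t\bigr]<\infty$ and likewise for the symmetric expression. When $\xi,\eta\in\mathcal{C}$ — the only case actually needed later, in the proof of Theorem \ref{thm:convergence-result} — this follows from the bounds of Lemma \ref{lemma:bounded-entropy} together with Cauchy--Schwarz; in the general case the identity is to be read in $[0,+\infty]$, with both sides simultaneously infinite when integrability fails. I do not expect a genuine obstacle here: the one conceptual point is recognizing the symmetry of the quadratic form, which is exactly what makes $\delta L(\xi)$ — and not $\delta L(\eta)$ — appear in the first-order term.
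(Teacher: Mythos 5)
Your proof is correct and is exactly the direct expansion the paper has in mind: the paper offers no proof beyond saying the identity ``follows directly from the definition of $L$,'' and your polarization-plus-tower-property computation (using the linearity of $\xi\mapsto\psi^\xi$ and the symmetry of the induced bilinear form) is precisely that argument made explicit, including the integrability caveat the paper leaves unaddressed. The only point worth flagging is that you rightly read $\delta L(\xi)$ as $\gamma^\star\,\mathbb{E}\bigl[V^\xi_T\,\vert\,\mathcal F^0_T\bigr]$, consistent with its use in Lemma \ref{iff.lm} and with what makes the corollary true, even though the displayed definition in Lemma \ref{lemma:deriv} omits the factor $\gamma^\star$.
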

For any $\xi \in \mathcal{C}$ we define the semilinearized criterion $h \colon \mathcal{C}\times  L^1(\mathcal{F}_T^0) \to \mathbb{R} \cup \{+\infty\}$, and the point $\eta \in L_+^1(\mathcal F^0_T)$ as follows
\begin{equation} \label{eq:eta-xi}
    \eta = \argmin_{\eta' \in L^1(\mathcal F^0_T)} h(\xi,\eta'), \quad
    h(\xi,\eta) \coloneqq \mathbb{E}\left[\eta \delta L(\xi) \right] + H(\eta) + \chi \left(\mathbb{E}\left[\eta \right] - 1 \right).
\end{equation}

\begin{lemma} \label{lemma:eta-argmin-h}
    We have \eqref{eq:eta-xi} holds if and only if,
    \begin{equation} \label{eq:eta-explicit}
        \eta = \frac{\exp\left(-\gamma^* \mathbb{E}[ V^\xi_T|\mathcal F^0_T]+ \rho \ln Z\right)}{\mathbb{E}\left[\exp\left(-\gamma^* \mathbb{E}[ V^\xi_T|\mathcal F^0_T] + \rho \ln Z\right)\right]}.
    \end{equation}
\end{lemma}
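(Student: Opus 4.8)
The statement is an instance of the Gibbs variational principle, and my plan is to treat it exactly like the firm/investor duality arguments already developed above. The starting observation is that the semilinearized functional $h(\xi,\cdot)$ is precisely the potential $G$ of Theorem \ref{main.thm} with the linear--quadratic term $L$ replaced by its linearization at $\xi$: for fixed $\xi\in\mathcal{C}$ one has $h(\xi,\eta)=H(\eta)+\mathbb{E}\!\left[\eta\,\delta L(\xi)\right]+\chi(\mathbb{E}[\eta]-1)$, where $\delta L(\xi)$ is a \emph{fixed}, nonnegative element of $L^1(\mathcal{F}^0_T)$ equal, as in Lemma \ref{lemma:deriv}, to $\gamma^{*}\mathbb{E}[V^{\xi}_T\vert\mathcal{F}^0_T]$; it is nonnegative because $V>0$, $c>0$ and $\psi^{\xi}\ge 0$ force $V^{\xi}_T>0$ a.s.\ via \eqref{eq:psi-V}. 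Strict convexity of $H$ (Lemma \ref{lemma:G}), together with convexity of the linear term and of $\eta\mapsto\chi(\mathbb{E}[\eta]-1)$, shows that $h(\xi,\cdot)$ is strictly convex, hence has \emph{at most one} minimizer over $L^1_+(\mathcal{F}^0_T)$; so it suffices to \emph{exhibit} the right-hand side of \eqref{eq:eta-explicit} as a minimizer, after which ``$\eqref{eq:eta-xi}\Leftrightarrow\eqref{eq:eta-explicit}$'' is immediate.

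To exhibit that minimizer I would mimic the proof of Lemma \ref{investor.lm}. Write $\eta$ for the candidate of \eqref{eq:eta-explicit} and $N\coloneqq\mathbb{E}[\exp(-\gamma^{*}\mathbb{E}[V^{\xi}_T\vert\mathcal{F}^0_T]+\rho\ln Z)]$ for its normalizer; then $N\in(0,1]$, the upper bound following from $V^{\xi}_T>0$ and $\mathbb{E}[Z^{\rho}]\le\mathbb{E}[Z]^{\rho}=1$ (as in \eqref{eq:ineq-H}), so that $\mathbb{E}[\eta]=1$ and $\ln(\eta/Z^{\rho})=-\gamma^{*}\mathbb{E}[V^{\xi}_T\vert\mathcal{F}^0_T]-\ln N=-\delta L(\xi)-\ln N$. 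Applying the pointwise Fenchel inequality \eqref{ineq:exponential} with $x=\eta'/Z^{\rho}$ and $y=\ln(\eta/Z^{\rho})$, multiplying by $Z^{\rho}>0$, rearranging and taking expectations (the left side is bounded below by the integrable $-Z^{\rho}$, the right side is integrable), one obtains, using $\mathbb{E}[\eta']=\mathbb{E}[\eta]=1$,
\[
 H(\eta')+\mathbb{E}\!\left[\eta'\,\delta L(\xi)\right]\;\ge\;-1-\ln N\qquad\text{for every }\eta'\in L^1_+(\mathcal{F}^0_T)\text{ with }\mathbb{E}[\eta']=1,
\]
hence $h(\xi,\eta')\ge-1-\ln N$ for every $\eta'$ (and $h(\xi,\eta')=+\infty$ otherwise). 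At $\eta'=\eta$ one has $x=\eta/Z^{\rho}=e^{y}$, the equality case of \eqref{ineq:exponential}; concretely $Z^{\rho}h(\eta/Z^{\rho})+\eta\,\delta L(\xi)=-\eta-\eta\ln N$ holds pointwise, a manifestly integrable quantity, so passing to expectations gives $h(\xi,\eta)=-1-\ln N<+\infty$. Thus $\eta$ is feasible and attains the lower bound, i.e.\ it minimizes $h(\xi,\cdot)$; by the uniqueness noted above it is \emph{the} minimizer, and the lemma follows. One could equally run the variational route of Lemmas \ref{lemma:posG}--\ref{lemma:deriv}: any minimizer $\eta$ is a.s.\ strictly positive, and computing $\lambda^{-1}(h(\xi,\eta+\lambda\zeta)-h(\xi,\eta))\to\mathbb{E}[\zeta(\ln(\eta/Z^{\rho})+\delta L(\xi))]$ over all perturbations $\zeta\in L^1(\mathcal{F}^0_T)$ with $\mathbb{E}[\zeta]=0$ and $|\zeta|\le\tfrac12\eta$ forces $\ln(\eta/Z^{\rho})+\delta L(\xi)$ to be a.s.\ constant, which is \eqref{eq:eta-explicit} after normalization.

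There is no single hard step here; the argument is essentially a transcription of machinery already in place, and the main points requiring care are the two flagged above — checking $N\in(0,1]$, so that the candidate is feasible and $h(\xi,\eta)$ is finite, and passing the pointwise Fenchel identity to expectations despite each of $Z^{\rho}h(\eta/Z^{\rho})$ and $\eta\,\delta L(\xi)$ being only one-sidedly bounded (this is handled cleanly by the explicit pointwise identity $Z^{\rho}h(\eta/Z^{\rho})+\eta\,\delta L(\xi)=-\eta-\eta\ln N$, which is integrable and therefore pins down both expectations). If, as will be needed downstream in the convergence proof, one additionally wants $\eta\in\mathcal{C}$ (equivalently $\eta\in\Xi$), the remaining estimate is a bound on $\mathbb{E}[Z^{\rho}V^{\xi}_T]$, obtained by expanding $V^{\xi}_T=V\mathcal{E}_T+\int_0^T c_t\mathcal{E}_{t,T}\psi^{\xi}_t\,\dd t$, using the tower property and Cauchy--Schwarz, and invoking $L(\mathds{1})<+\infty$, the uniform bounds of Lemma \ref{lemma:bounded-entropy} on $\mathcal{C}$, and the second standing hypothesis $\mathbb{E}\big[\int_0^T\tfrac{c_t^2}{\alpha_t}\mathbb{E}[Z^{\rho}\mathcal{E}_{t,T}\vert\mathcal{F}_t]^2\,\dd t\big]<+\infty$ of Theorem \ref{thm:convergence-result}.
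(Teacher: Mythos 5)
Your proof is correct, and its main line differs from the paper's in a meaningful way. The paper proves both implications through first-order conditions: for the ``only if'' direction it perturbs the minimizer by $\lambda\eta'$ with $\mathbb{E}[\eta']=0$ and argues, as in Step~2 of Lemma~\ref{iff.lm}, that $\delta L(\xi)+\delta H(\eta)$ must be a.s.\ constant; for the ``if'' direction it invokes the convexity (subgradient) inequality $h(\xi,\eta')-h(\xi,\eta)\geq\mathbb{E}\left[(\eta'-\eta)(\delta L(\xi)+\delta H(\eta))\right]$ and notes that the right-hand side vanishes because the gradient is constant on the constraint set. You instead run the Gibbs variational principle directly, mimicking Lemma~\ref{investor.lm}: the pointwise Fenchel inequality \eqref{ineq:exponential} yields the uniform lower bound $h(\xi,\cdot)\geq-1-\ln N$ over $\{\eta'\in L^1_+(\mathcal F^0_T):\mathbb{E}[\eta']=1\}$, the explicit candidate attains it, and strict convexity of $H$ upgrades ``a minimizer'' to ``the minimizer'', after which the equivalence is immediate. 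What your route buys: it is self-contained at the level of pointwise, elementary inequalities; your identity $Z^{\rho}h(\eta/Z^{\rho})+\eta\,\delta L(\xi)=-\eta-\eta\ln N$ settles the finiteness of $H(\eta)$ and of $\mathbb{E}[\eta\,\delta L(\xi)]$ cleanly (two one-sidedly bounded terms with integrable sum), whereas the paper's subgradient step quietly requires $\mathbb{E}[(\eta'-\eta)\delta H(\eta)]$ to be well defined; and your feasibility check $N\in(0,1]$ is made explicit. What the paper's route buys is brevity and direct reuse of the machinery of Lemmas~\ref{lemma:deriv} and~\ref{iff.lm}; the alternative variational argument you sketch at the end is essentially the paper's proof. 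Your side remarks ($\delta L(\xi)\geq 0$ via $V^{\xi}_T>0$, and the additional estimate needed to place $\eta$ in $\mathcal{C}$) are accurate and consistent with how the explicit form of $\eta$ is exploited later in Lemma~\ref{1iter.lm}.
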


\begin{proof} \textit{Step 1: The only if part.} Let $\lambda > 0$ and $\eta' \in L^1(\mathcal{F}_T^0)$ be such that $\mathbb{E}[\eta'] = 0$. Using that $\eta$ satisfies \eqref{eq:eta-xi}, we have that
    \begin{equation*}
        h(\xi,\eta + \lambda \eta') - h(\xi,\eta) =  \mathbb{E}\left[  \eta'\left( \delta L(\xi) + \delta H(\eta)  \right) \right] \geq 0.
    \end{equation*}
    Since the choice of $\eta$ is arbitrary, the conclusion follows as in step 2 in the proof of Lemma \ref{iff.lm}.

    \noindent \textit{Step 2: The if part.} For any $\eta' \in L_+^1(\mathcal{F}^0_T)$, by convexity of $h(\xi,\cdot)$, we have
    \begin{equation*}
        h(\xi,\eta') - h(\xi,\eta) \geq \mathbb{E}\left[ (\eta'-\eta)\left( \delta L(\xi) + \delta H(\eta)  \right) \right].
    \end{equation*}
    Using that $\eta$ is given by \eqref{eq:eta-explicit} yields that 
    \begin{align*} 
        h(\xi,\eta') - h(\xi,\eta) & \geq   \mathbb{E}\left[(\eta'-\eta) \left(\ln \bar{\eta} - \rho \ln Z + \gamma^* \mathbb E[V^{\eta}_T|\mathcal F^0_T] \right)\right] \\
        &\geq C \mathbb{E}\left[ (\eta'-\eta) \right] = 0,
    \end{align*}
    with $C \coloneqq \mathbb E\left[\exp\left( \delta L(\xi) + \delta H(\eta) \right)\right] > 0$ and where the last inequality holds since $\mathbb E[\eta'] = \mathbb E[\eta] = 1$. Then $\eta$ is a minimizer of $h(\xi,\cdot)$. 
\end{proof}

We define the primal gap $\varepsilon \colon \mathcal{C} \to \mathbb{R}$ and the primal gap certificate $g \colon \mathcal{C} \to \mathbb{R}$,
\begin{equation} \label{def:gaps}
    \varepsilon(\xi) = G(\xi) - G(\bar{\xi}), \quad 
    g(\xi) = h(\xi,\xi) - \inf_{\eta \in L^1(\mathcal{F}^0_T)} h(\xi,\eta).
\end{equation}

\begin{lemma} \label{lemma-gap-vareps}
    For any $\xi \in \mathcal{C}$ we have that $g(\xi) \geq 0$ and $g(\xi) \geq \varepsilon(\xi)$.
\end{lemma}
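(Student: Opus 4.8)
The plan is to run the standard duality-gap estimate of the (generalized) conditional gradient method, adapted to the splitting $G = L + H$ in which $L$ is linearized and $H$ is kept exact inside $h$. Throughout, let $\bar\xi$ denote the minimizer of $G$ from Lemma \ref{lemma:optG}; recall that $\bar\xi \in \mathcal{C}$, and that every element of $\mathcal{C}$ lies in $\Xi$ and satisfies $\mathbb{E}[\,\cdot\,] = 1$ (since $G$ is finite on $\mathcal{C}$, the term $\chi(\mathbb{E}[\,\cdot\,]-1)$ must vanish). The bound $g(\xi) \geq 0$ is then immediate: $\xi$ is admissible in the minimization defining $h(\xi,\cdot)$ because $\mathbb{E}[\xi]=1$, so $h(\xi,\xi) \geq \inf_{\eta} h(\xi,\eta)$.

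For $g(\xi) \geq \varepsilon(\xi)$, using $\mathbb{E}[\xi] = \mathbb{E}[\bar\xi] = 1$ to drop the $\chi$ terms we have $G(\cdot) = H(\cdot) + L(\cdot)$ at both $\xi$ and $\bar\xi$, as well as $h(\xi,\xi) = \mathbb{E}[\xi\,\delta L(\xi)] + H(\xi)$ and $h(\xi,\bar\xi) = \mathbb{E}[\bar\xi\,\delta L(\xi)] + H(\bar\xi)$. I would then apply Corollary \ref{corollary:reg-L} to the pair $(\xi,\bar\xi)$ and drop the nonnegative remainder $R(\xi,\bar\xi)$, obtaining the subgradient-type inequality $L(\xi) - L(\bar\xi) \leq \mathbb{E}[(\xi - \bar\xi)\,\delta L(\xi)]$. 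Adding $H(\xi)-H(\bar\xi)$ to both sides and regrouping yields
$$
\varepsilon(\xi) = \big(H(\xi) - H(\bar\xi)\big) + \big(L(\xi) - L(\bar\xi)\big) \;\leq\; \big(\mathbb{E}[\xi\,\delta L(\xi)] + H(\xi)\big) - \big(\mathbb{E}[\bar\xi\,\delta L(\xi)] + H(\bar\xi)\big) = h(\xi,\xi) - h(\xi,\bar\xi).
$$
Since $\mathbb{E}[\bar\xi] = 1$, $\bar\xi$ is also admissible for $h(\xi,\cdot)$, so $h(\xi,\bar\xi) \geq \inf_{\eta} h(\xi,\eta)$, hence $\varepsilon(\xi) \leq h(\xi,\xi) - \inf_{\eta} h(\xi,\eta) = g(\xi)$, which is the claim.

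The only delicate points are bookkeeping ones, and none of them is a real obstacle: one checks that the Gateaux derivative $\delta L(\xi)$ from Lemma \ref{lemma:deriv} is well defined and nonnegative (since $V^\xi_T \geq 0$ a.s.), and that $H(\xi)$, $H(\bar\xi)$, $\mathbb{E}[\xi\,\delta L(\xi)]$, $\mathbb{E}[\bar\xi\,\delta L(\xi)]$ and $\inf_{\eta} h(\xi,\eta)$ are all finite, so that the additions and subtractions in the display are legitimate. These follow from $\mathcal{C}\subseteq\Xi$, the estimates of Lemma \ref{lemma:bounded-entropy}, and the boundedness of $x\mapsto x e^{-x}$ on $\mathbb{R}_+$ (the latter controls $h(\xi,\eta)$ evaluated at the explicit minimizer $\eta$ of Lemma \ref{lemma:eta-argmin-h}, so $\inf_{\eta} h(\xi,\eta)$ is finite). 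The crux of the argument is simply \emph{convexity of $L$ together with optimality of the inner minimizer of $h(\xi,\cdot)$}.
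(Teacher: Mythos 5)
Your proof is correct and follows essentially the same route as the paper's: both arguments rest on the subgradient inequality $L(\eta)-L(\xi)\geq \mathbb{E}[(\eta-\xi)\,\delta L(\xi)]$ from Corollary \ref{corollary:reg-L}, which gives $h(\xi,\eta)-h(\xi,\xi)\leq G(\eta)-G(\xi)$; the paper then takes the infimum over all $\eta$, while you specialize to $\eta=\bar\xi$ and bound $h(\xi,\bar\xi)\geq\inf_\eta h(\xi,\eta)$, which is the same estimate. The additional finiteness bookkeeping you mention is harmless but not needed beyond what $\mathcal{C}\subseteq\Xi$ already provides.
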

\begin{proof}
    It is clear that the primal gap certificate $g$ is non-negative for all $\xi \in \mathcal{C}$. 
    By definition of $h$ we have that
    \begin{align*}
        h(\xi,\eta) - h(\xi,\xi) =   \mathbb{E}\left[(\eta - \xi) \delta L(\xi) \right] + H(\eta) + \chi \left(\mathbb{E}\left[\eta \right]- 1 \right)
           - \chi \left(\mathbb{E}\left[\xi \right] - 1 \right).
    \end{align*}
    By Corollary \ref{corollary:reg-L} we have that
    \begin{equation*}
        L(\eta) - L(\xi)  \geq \mathbb{E}\left[(\eta - \xi) \delta L(\xi) \right],
    \end{equation*}
    since $(c^*)''$ is non-negative, which yields that
    \begin{equation*} 
        h(\xi,\eta) - h(\xi,\xi) \leq G(\eta) - G(\xi).
    \end{equation*}
     Taking the infimum with respect to $\eta \in L^1(\mathcal{F}^0_T)$ both sides of the latter inequality concludes the proof. 
\end{proof}

\begin{lemma}\label{1iter.lm}Let the parameters be such that $L(\mathds{1})<+\infty$ and moreover 
    $$
    \mathbb{E} \left[ \int_0^T \frac{c_t^2}{\alpha_t} \mathbb{E} \left[ Z^\rho \mathcal{E}_{t,T}|\mathcal F_t \right]^2  \dd t\right]<+\infty.
    $$
    For any $\alpha \in [0,1]$, there exists a constant $C>0$, which does not depend on $\xi$, such that 
    $$
    G(\xi^\alpha) - G(\bar \xi) \leq (1-\alpha)(G(\xi) - G(\bar \xi)) + \alpha^2 C. 
    $$
    \end{lemma}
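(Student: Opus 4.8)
The plan is to establish the classical ``sufficient decrease'' inequality for one step of the generalized conditional gradient method and then to bound the quadratic remainder uniformly. Throughout I take $\xi \in \mathcal{C}$, write $\eta = \eta(\xi)$ for the point defined in \eqref{eq:eta-xi}--\eqref{eq:eta-explicit}, and set $\xi^\alpha \coloneqq \alpha \eta + (1-\alpha)\xi$. Since $\mathbb{E}[\eta] = \mathbb{E}[\xi] = 1$ we have $\mathbb{E}[\xi^\alpha] = 1$, so the indicator term $\chi(\mathbb{E}[\cdot]-1)$ vanishes at $\xi$, $\eta$ and $\xi^\alpha$ and $G = H+L$ on this family; moreover all of $H(\xi),H(\eta),L(\xi),h(\xi,\xi),h(\xi,\eta)$ are finite, the last two because $h(\xi,\eta) = \inf h(\xi,\cdot) \le h(\xi,\mathds{1}) = \mathbb{E}[V^\xi_T] + H(\mathds{1}) < \infty$ once one knows (see Step 2) that $\mathbb{E}[V^\xi_T]$ is bounded uniformly over $\mathcal{C}$.

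\emph{Step 1 (sufficient decrease).} The structural fact I would exploit is that $\zeta \mapsto \psi^\zeta_t = \tfrac{c_t}{\alpha_t}\mathbb{E}[\zeta\mathcal{E}_{t,T}\vert\mathcal{F}_t]$ is linear, so $\psi^{\xi^\alpha}_t - \psi^\xi_t = \alpha(\psi^\eta_t - \psi^\xi_t)$ and hence $R(\xi,\xi^\alpha) = \alpha^2 R(\xi,\eta)$, with $R$ as in Corollary \ref{corollary:reg-L}. Applying that corollary with $\xi^\alpha$ in place of $\eta$ gives $L(\xi^\alpha)-L(\xi) = \alpha\,\mathbb{E}[(\eta-\xi)\delta L(\xi)] + \alpha^2 R(\xi,\eta)$, while convexity of $H$ gives $H(\xi^\alpha)-H(\xi) \le \alpha(H(\eta)-H(\xi))$. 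Adding the two and regrouping the terms linear in $\eta$ into the semilinearized criterion $h(\xi,\cdot)$ of \eqref{eq:eta-xi} yields
\[
G(\xi^\alpha)-G(\xi) \le \alpha\bigl(h(\xi,\eta)-h(\xi,\xi)\bigr) + \alpha^2 R(\xi,\eta) = -\alpha\, g(\xi) + \alpha^2 R(\xi,\eta),
\]
the last equality because $\eta$ minimizes $h(\xi,\cdot)$ (Lemma \ref{lemma:eta-argmin-h}) and by the definition \eqref{def:gaps} of the gap certificate $g$. Lemma \ref{lemma-gap-vareps} then gives $g(\xi) \ge \varepsilon(\xi) = G(\xi)-G(\bar\xi)$, and substituting produces exactly $G(\xi^\alpha)-G(\bar\xi) \le (1-\alpha)(G(\xi)-G(\bar\xi)) + \alpha^2 R(\xi,\eta)$. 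It remains to bound $R(\xi,\eta)$ by a constant independent of $\xi\in\mathcal{C}$.

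\emph{Step 2 (uniform bound on $R(\xi,\eta)$).} Using $|\psi^\eta - \psi^\xi|^2 \le 2|\psi^\eta|^2 + 2|\psi^\xi|^2$, it suffices to bound $\mathbb{E}[\int_0^T \alpha_t|\psi^\xi_t|^2\dd t]$ and $\mathbb{E}[\int_0^T \alpha_t|\psi^\eta_t|^2\dd t]$. The first is $\le \tfrac{2}{\gamma^\star}L(\xi) \le \tfrac{2}{\gamma^\star}(G(\mathds{1})+1)$ by the estimate obtained in the proof of Lemma \ref{lemma:bounded-entropy} (using $\mathbb{E}[\xi V\mathcal{E}_T]\ge 0$). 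For the second I would first note $V^\xi_T = V\mathcal{E}_T + \int_0^T c_t\mathcal{E}_{t,T}\psi^\xi_t\dd t \ge 0$ a.s., whence $\exp(-\gamma^\star\mathbb{E}[V^\xi_T\vert\mathcal{F}^0_T]) \le 1$ and, by Jensen's inequality applied to the normalizing denominator in \eqref{eq:eta-explicit},
\[
\eta \le \frac{Z^\rho}{\mathbb{E}\left[\exp\left(-\gamma^\star\mathbb{E}[V^\xi_T\vert\mathcal{F}^0_T] + \rho\ln Z\right)\right]} \le \frac{Z^\rho}{\exp\left(-\gamma^\star\mathbb{E}[V^\xi_T] + \rho\,\mathbb{E}[\ln Z]\right)}.
\]
A two-fold Cauchy--Schwarz estimate (first conditionally on $\mathcal{F}_t$, then in the time integral) bounds $\mathbb{E}[\int_0^T c_t\mathcal{E}_{t,T}\psi^\xi_t\dd t]$ by $\bigl(\mathbb{E}[\int_0^T \tfrac{c_t^2}{\alpha_t}\mathbb{E}[\mathcal{E}_{t,T}\vert\mathcal{F}_t]^2\dd t]\bigr)^{1/2}\bigl(\mathbb{E}[\int_0^T \tfrac{c_t^2}{\alpha_t}\mathbb{E}[\xi\mathcal{E}_{t,T}\vert\mathcal{F}_t]^2\dd t]\bigr)^{1/2}$, where the first factor is finite by Lemma \ref{lemma:bounded-entropy} applied at $\mathds{1}\in\mathcal{C}$ and the second is $\le C$ for $\xi\in\mathcal{C}$ by the same lemma; together with $\mathbb{E}[V\mathcal{E}_T]<\infty$ (from $L(\mathds{1})<\infty$) and $\mathbb{E}[\ln Z] = -\tfrac12\mathbb{E}[\int_0^T|\lambda_s|^2\dd s]>-\infty$, this shows $\mathbb{E}[V^\xi_T]$ is bounded above uniformly over $\mathcal{C}$, so the denominator above is $\ge c_0$ for some $c_0>0$ independent of $\xi$. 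Then $\mathbb{E}[\eta\mathcal{E}_{t,T}\vert\mathcal{F}_t] \le c_0^{-1}\mathbb{E}[Z^\rho\mathcal{E}_{t,T}\vert\mathcal{F}_t]$, hence $\mathbb{E}[\int_0^T\alpha_t|\psi^\eta_t|^2\dd t] = \mathbb{E}[\int_0^T\tfrac{c_t^2}{\alpha_t}\mathbb{E}[\eta\mathcal{E}_{t,T}\vert\mathcal{F}_t]^2\dd t] \le c_0^{-2}\mathbb{E}[\int_0^T\tfrac{c_t^2}{\alpha_t}\mathbb{E}[Z^\rho\mathcal{E}_{t,T}\vert\mathcal{F}_t]^2\dd t] < +\infty$ by the second standing hypothesis. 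Collecting both bounds gives $R(\xi,\eta)\le C$ with $C$ independent of $\xi\in\mathcal{C}$, which finishes the proof.

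\emph{Main obstacle.} The only genuinely delicate point is the uniform positive lower bound $c_0$ on the normalizing constant in \eqref{eq:eta-explicit}: it relies both on the a priori nonnegativity of the terminal firm value $V^\xi_T$ (which bounds $\exp(-\gamma^\star\mathbb{E}[V^\xi_T\vert\mathcal{F}^0_T])$ from above) and on the moment estimates of Lemma \ref{lemma:bounded-entropy}, applied \emph{simultaneously} at the running $\xi$ and at $\mathds{1}$, to control $\mathbb{E}[V^\xi_T]$ uniformly on $\mathcal{C}$. Everything else is a direct consequence of the affine dependence of $\psi^\xi$, and hence of $V^\xi_T$ and of the quadratic part of $L$, on $\xi$ — precisely what forces the remainder in Step 1 to scale as $\alpha^2$ and makes the Frank--Wolfe-type inequality close.
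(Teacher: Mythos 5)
Your proof is correct and follows essentially the same route as the paper: the sufficient-decrease step via convexity of $H$, the exact quadratic expansion of $L$ from Corollary \ref{corollary:reg-L}, the gap certificate of Lemma \ref{lemma-gap-vareps}, and then the uniform bound $\eta \leq c_0^{-1}Z^{\rho}$ obtained by applying Jensen's inequality to the normalizing denominator in \eqref{eq:eta-explicit}. Your Step 2 in fact spells out, via Cauchy--Schwarz, why $\mathbb{E}[V^\xi_T]$ is bounded uniformly over $\mathcal{C}$, a point the paper passes over more quickly by invoking Lemma \ref{lemma:bounded-entropy} directly.
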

    
    \begin{proof}
    By convexity of the functional $H$, we have:
    \begin{align} \label{eq:convexity-H}
         H(\xi^\alpha) -  H(\xi) & \leq \alpha \left(H(\eta)  - H(\xi)  \right).
    \end{align}
    Combining \eqref{eq:convexity-H} with Corollary \ref{corollary:reg-L} yields 
    \begin{align*}
        G(\xi^\alpha) -  G(\xi) &\leq \alpha(H(\eta) - H(\xi)) +\alpha\mathbb E[(\eta-\xi)\delta L(\xi)] + \alpha^2 R(\xi,\eta).
    \end{align*}   
    By Lemma \ref{lemma-gap-vareps} it follows that
    \begin{align*}
        G(\xi^\alpha) -  G(\xi) &\leq \alpha(h(\xi,\eta) - h(\xi,\xi)) + \alpha^2 R(\xi,\eta)\\
        & \leq - \alpha g(\xi) + \alpha^2 R(\xi,\eta)\\
        &\leq \alpha(G(\bar \xi)-G(\xi)) + \alpha^2 R(\xi,\eta).
    \end{align*}   
    where in the last inequality we used the convexity of $L$. Rearranging the terms, we then get:
    $$
    G(\xi^\alpha) - G(\bar \xi) \leq (1-\alpha)(G(\xi) - G(\bar \xi)) + \alpha^2 R(\xi,\eta). 
    $$
    We now turn to the analysis of the residual term. Using that $(c^*)''$ is bounded, the Young inequality yields
    \begin{align*}
    R(\xi,\eta) & \leq {\gamma^\star} C \left( \mathbb{E} \left[ \int_0^T \frac{c_t^2}{\alpha_t} \mathbb{E} \left[ \xi \mathcal{E}_{t,T}|\mathcal F_t \right]^2  \dd t\right]+ \mathbb{E} \left[ \int_0^T \frac{c_t^2}{\alpha_t} \mathbb{E} \left[ \eta \mathcal{E}_{t,T}|\mathcal F_t \right]^2  \dd t\right]\right) \\
    & \leq C\left(1 + {\gamma^\star} \mathbb{E} \left[ \int_0^T \frac{c_t^2}{\alpha_t} \mathbb{E} \left[ \eta \mathcal{E}_{t,T}|\mathcal F_t \right]^2  \dd t\right]\right),
    \end{align*}
    where the second line follows by Lemma \ref{lemma:bounded-entropy}. Now by Lemma \ref{lemma:eta-argmin-h}, we have that $\eta$ is given by \eqref{eq:eta-explicit}.
    To bound the second term, we use the explicit form of $\eta$. 
    By definition of $\delta L$ and the Jensen's inequality, 
    \begin{align*}
    \mathbb{E}\left[\exp\left(-\gamma^* \mathbb{E}[ V^\xi_T|\mathcal F^0_T] + \rho \ln Z\right)\right] & =  \mathbb{E}\left[\exp\left(- \delta L(\xi) + \rho \ln Z\right)\right]  \\
    & \geq \exp\left( - \mathbb{E}\left[ \delta L(\xi) \right] + \rho \mathbb E[\ln Z]\right).
    \end{align*}
    By Lemma \ref{lemma:bounded-entropy} we have that $\mathbb{E}\left[ \delta L(\xi) \right] \leq C$. Since  $\rho \mathbb E[\ln Z]$ is bounded by assumption, there exists a constant $c>0$, which does not depend on $\xi$ such that,
    $$
    \mathbb{E}\left[\exp\left(-\gamma^* \mathbb{E}[ V^\xi_T|\mathcal F^0_T] + \rho \ln Z\right)\right] \geq c,
    $$
    and therefore $\eta \leq c^{-1} Z^\rho$. Substituting this into the expression for the second part of the residual term $R(\xi,\eta)$, we get:
    $$
    \mathbb{E} \left[ \int_0^T \frac{c_t^2}{\alpha_t} \mathbb{E} \left[ \eta \mathcal{E}_{t,T}|\mathcal F_t \right]^2  \dd t\right] \leq c^{-2}\mathbb{E} \left[ \int_0^T \frac{c_t^2}{\alpha_t} \mathbb{E} \left[ Z^\rho \mathcal{E}_{t,T}|\mathcal F_t \right]^2  \dd t\right],
    $$
    which is finite by assumption of the lemma. 
    \end{proof}
    
\begin{proof}[Proof of Theorem \ref{thm:convergence-result}.]
Let $k \in \mathbb{N}$ and let $\varepsilon_k = G(\xi_k)-G(\bar \xi)$. Assume that $\xi_k \in \mathcal C$, then by Lemma \ref{1iter.lm}, 
\begin{align*}
\varepsilon_{k+1} &\leq \frac{k+p-2}{k+p}\varepsilon_k + \frac{4C}{(k+p)^2} \\ &\leq G(\mathds{1})-G(\bar \xi) - \frac{2}{k+p}(G(\mathds{1})-G(\bar \xi) ) + \frac{4C}{(k+p)^2}\\
&\leq G(\mathds{1})-G(\bar \xi) + \frac{4C - 2(k+p)(G(\mathds{1})-G(\bar \xi))}{(k+p)^2}\leq G(\mathds{1}) - G(\bar \xi),
\end{align*}
by definition of $p$. Therefore, $\xi_{k+1} \in \mathcal C$, and by induction $\xi_k \in \mathcal C$ for all $k \in \mathbb{N}$. 
It follows that
$$
\varepsilon_{k+1} - \frac{4C}{k+p+1} \leq \frac{k+p-2}{k+p} \left(\varepsilon_{k} - \frac{4C}{k+p} \right)
$$
and therefore
$$
\varepsilon_{k} - \frac{4C}{k+p} \leq \left(\varepsilon_{0} - \frac{4C}{p} \right) \prod_{j=0}^{k-1} \frac{j+p-2}{j+p}. 
$$
The product can be estimated as follows:
\begin{align*}
\ln \left(\prod_{j=0}^{k-1} \frac{j+p-2}{j+p}\right) = \sum_{j=0}^{k-1} \ln\left(1-\frac{2}{j+p}\right) 
 \leq- 2 \sum_{j=0}^{k-1} \frac{1}{j+p}\leq -2\int_0^k \frac{\dd x}{x+p} = -2\ln(k+p).
\end{align*}
Gathering all estimates, we finally obtain:
$$
\varepsilon_k \leq \frac{4C}{k+p} + \left(\varepsilon_{0} - \frac{4C}{p} \right)^+\frac{1}{(k+p)^2}. 
$$
 \end{proof}

\section{Examples and illustrations}
\label{examples.sec}
Let $B^0$ be a $2$-dimensional Brownian motions with components $(B^{0,1},B^{0,2})$. %Aiming to have a more compact notation and without loss of generality, we assume that the emissions are normalized, so that the quadratic cost associated to emissions has the same form for each firm, given by 
%$$
%\frac{1}{2}\mathbb E\left[\int_0^T \psi_t^2 dt\right].
%$$
%This means that the parameter $\alpha$ is equal to $1$ at all times. 
We assume that the process $(e_t)$ describing the emission efficacy of production is constant in time for each firm, $e_t = E$,  and that the process $(\alpha_t)$ describing the emission penalty is the same for all firms (and $\mathbb F^0$-measurable), and that the cost function is quadratic: $c(x) = \frac{x^2}{2}$. Although the emission efficacy is constant in time, the emission schedules of firms can still vary stochastically, in response to changes in the emission penalty $\alpha_t$ and the financial value of emissions, as explained in Remark \ref{emission.rem}.

The differences between firms appear only due to different firm values and different values of the emission efficacy $E$. %which describes the added value created by each firm for each extra unit of emissions. %We call this process inverse normalized carbon intensity, with the understanding that it takes into account both the carbon intensity of production and the costs associated to emissions, due to normalization. 
We model the differences in initial firm values and emission efficacies by introducing $\mathcal F_0$-measurable random variables $V$ and $E$, which describe the distribution of these quantities. For green, carbon-efficient firms, $E$ is large, while for brown firms, $E$ is small. 
The emission penalty $(\alpha_t)$ is a stochastic process defined by
$$
\alpha_t = e^{\gamma B^{0,2}_t - \frac{\gamma^2}{2}t},
$$
where $\gamma \in \mathbb R$ is a constant, which measures the uncertainty associated to future emission penalty, in other words $\gamma$ is a climate transition risk parameter. To simplify notation and without loss of generality (by normalizing the emission efficacy $E$) we have assumed here that $\alpha_0 = 1$. 
Thus, under the real-world probability the emission penalty is a martingale, and has constant expectation. This process models the evolution of the environment in which all firms operate including carbon taxes, consumer preferences, etc.

The density of the change of measure of the green investor is defined by
$$
Z = e^{\lambda B^{0,2}_T - \lambda^2 T/2},
$$
where $\lambda \in \mathbb R$ is a constant. Under the probability measure of the green investor, $B^{0,g}_t = B^{0,2}_t - \lambda t$ is a Brownian motion and 
$$
\alpha_t = e^{\gamma B^{0,g}_t + \lambda \gamma t - \frac{\gamma^2}{2}t},
$$
has increasing expectation (assuming $\lambda$ and $\gamma$ are positive). Thus, green investors are concerned that costs and penalties associated to carbon emissions will grow. The parameter $\lambda$ modulates the strength of this effect, in other words, it can be interpreted as the environmental concern of the green investors. To give an example, assume that $\gamma = 0.3$. Then, the variance of $\alpha_T$ for $T=5$ years equals $\approx 57\%$. If $\lambda\gamma = 0.1$, then, under the green investor probability, the emission penalty will grow by $65\%$ over 5 years. 

The parameters of the firm value dynamics $\sigma$, $\sigma^0$ and $\mu$ are assumed to be constant, and we let 
$$
\overline{\mathcal E}_{t,T} \coloneqq \mathbb{E} \left[\mathcal E_{t,T} \vert \mathcal{F}_T^0 \right] =  e^{ \sigma^0 (B^{0,1}_T-B^{0,1}_t) + (\mu  - (\sigma^0)^2/2)(T-t)}.
$$
Since $\sigma$ is assumed to be constant, it does not appears in the expression of $(\overline{\mathcal E}_{t,T})_{t\in [0,T]}$.
The fixed point equation then writes
\begin{align*}
\xi & = \frac{\exp\left(-\gamma^*  V^\xi_T+ \rho (\lambda B^{0,2}_T - \lambda^2 T/2 )\right)}{\mathbb{E}\left[\exp\left(-\gamma^*  V^\xi_T + \rho (\lambda B^{0,2}_T - \lambda^2 T/2 )\right)\right]}, \\
V^\xi_T  & = \overline{V} \,\overline{\mathcal{E}}_T +  \int_0^T {{\overline{C^2}} e^{-\gamma B^{0,2}_t +|\gamma|^2 t/2}} \overline{\mathcal{E}}_{t,T} \, \mathbb{E}[\xi \overline{\mathcal{E}}_{t,T}|\mathcal F^0_t] \dd t,
\end{align*}
where we denote $\overline{E^2} = \mathbb E[E^2]$ and $\overline V = \mathbb E[V]$. 

%\todo[inline]{We need, in principle, to show that (i) the solution to the discretized problem $\xi^N$ converges to the solution to the original problem as $N\to \infty$ and (ii) the results of section 4 hold for the discretized problem}

\paragraph{Detailed implementation of Algorithm \ref{algorithm}}
We start by discretizing the integral: 
$$
\widehat V^\xi_T  = \overline{V} \,\overline{\mathcal{E}}_T +  h\sum_{k=0}^n w_k {\overline{E^2} e^{-\gamma B^{0,2}_{t_k} + |\gamma|^2 t_k/2}} \overline{\mathcal{E}}_{t_k,T} \, \mathbb{E}[\xi \overline{\mathcal{E}}_{t_k,T}|\widehat{\mathcal  F}_{k}],
$$
where $h = \frac{T}{n}$, $t_i = ih$, $w_0 = w_n = \frac{1}{2}$, $w_k = 1$ for $k=1,\dots,n-1$ and 
\begin{equation*}
    \widehat{\mathcal F}_{k} = \sigma\left(\varepsilon^1_j, \varepsilon^2_j, j\leq k \right), \quad \varepsilon^i_j = \frac{1}{\sqrt{h}}\left(B^{0,i}_{t_j} - B^{0,i}_{t_{j-1}}\right), \quad \forall i \in \{1,2\}.
\end{equation*}
This leads to the following algorithm:
\begin{enumerate}
\item Fix $p$ sufficiently large (different values may be tested until convergence is obtained). 
\item Simulate independent standard normal random increments: $(\varepsilon^{1,i}_j,\varepsilon^{2,i}_j)$ for $j=1,\dots,n$, $i=1,\dots,N$.  %To save space, we denote $\pmb \varepsilon^{1}_j = (\varepsilon^{1,1}_j,\dots\varepsilon^{1,N}_j)$ and similarly for $\pmb \varepsilon^{2}_j$. 
\item Let $\xi^{(0),i}=1$, for $i=1,\dots,N$. 
%\begin{align*}
%&\xi^{(1),i} = \frac{\exp\left(-\gamma^*  \widehat D^{(1),i}_T+ \rho (-\lambda \sqrt{h}\sum_{j=1}^n \varepsilon^{2,i}_j - \lambda^2 T/2 )\right)}{\frac{1}{N}\sum_{i=1}^N\exp\left(-\gamma^*  \widehat D^{(1),i}_T + \rho (-\lambda \sqrt{h}\sum_{j=1}^n \varepsilon^{2,i}_j - \lambda^2 T/2 )\right)} \\
%&\text{with} \quad \widehat D^{(1),i}_T  = \overline{V} \,\overline{\mathcal{E}}^i_{0,T} +  h\sum_{k=0}^n 
% w_k \frac{\overline{C^2_0} e^{2\gamma \sqrt{h}\sum_{j=1}^k \varepsilon^{2,i}_j - |\gamma|^2 t_k}}{\alpha} \overline{\mathcal{E}}^i_{t_k,T} \, e^{\mu(T-t_k)}\\
%&\overline{\mathcal E}^i_{t_k,T} = e^{ \sigma^0 \sqrt{h}\sum_{j=k+1}^n \varepsilon^{1,i}_j + (\mu  - (\sigma^0)^2/2)t_k},
%\end{align*}
\item For $q=0,\dots, N_{iter}$:
\begin{itemize}
\item For $k=0,\dots,n$, compute 
$$
v^{(q)}_k = \arg\inf_{v \in \mathcal C_{2k}} \sum_{i=1}^N \left(v(\varepsilon^{1,i}_1,\varepsilon^{2,i}_1,\dots,\varepsilon^{1,i}_k,\varepsilon^{2,i}_k) - \overline{\mathcal E}^i_{t_k,T} \xi^{(q),i}\right)^2.
$$
\item Compute
\begin{align*}
\eta^{(q+1),i} = \frac{\exp\left(-\gamma^*  \widehat V^{(q+1),i}_T+ \rho (\lambda \sqrt{h}\sum_{j=1}^n \varepsilon^{2,i}_j - \lambda^2 T/2 )\right)}{\frac{1}{N}\sum_{i=1}^N\exp\left(-\gamma^*  \widehat V^{(q+1),i}_T + \rho (\lambda \sqrt{h}\sum_{j=1}^n \varepsilon^{2,i}_j - \lambda^2 T/2 )\right)}.
\end{align*}
where the terminal conditional expectation is given by
\begin{align*}
    \widehat V^{(q+1),i}_T  =  \overline{V} \,\overline{\mathcal{E}}^i_{0,T} +  h\sum_{k=0}^n w_k {\overline{E^2} e^{-\gamma \sqrt{h}\sum_{j=1}^k \varepsilon^{2,i}_j + |\gamma|^2 t_k/2}} \overline{\mathcal{E}}^i_{t_k,T} \, v^{(q)}_k. 
\end{align*}
\item Actualize $\xi^{(q+1),i} = \alpha_q \eta^{(q),i} + (1-\alpha_q)\xi^{(q),i},\quad \alpha_q = \frac{2}{p+q}$.
\end{itemize}
\end{enumerate}
The computation of $v^{(q)}_k$, $k=1,\dots,n-1$ is implemented with a deep neural network using Keras-Tensorflow framework \citep{tensorflow2015-whitepaper}. Here,  $\mathcal C_{2k}$ denotes the set of continuous functions from $\mathbb R^{2k}$ to $\mathbb R$. 

\begin{remark}
    Since we show the convergence of Algorithm \ref{algorithm} in the continuous setting, we follow a ``first optimize then discretize" approach. That is to say, we derive an algorithm to compute the solution in the infinite dimensional setting and then discretize the optimality conditions.  A complete numerical analysis of the proposed algorithm should contain a detailed error analysis (which is out of the scope of the article) and show that the discrete solution tends to the continuous solution. Let us mention that approximation results via neural networks of conditional expectation can be found in \cite{cheridito2021computation}.
\end{remark}
\paragraph{Outputs of the algorithm} In addition to the quantities listed above, the algorithm allows to compute the following economically relevant quantities:
\begin{itemize}
\item The total average emissions, given by
$$
\overline \Psi_T = \int_0^T \mathbb E[\psi_t|\mathcal F^0_T] \dd t = \int_0^T {\overline E e^{-\gamma B^{0,2}_t + \gamma^2 t/2}} \mathbb E[\xi \overline {\mathcal E}_{t,T}|\mathcal F^0_t] \dd t. 
$$
The discretized version after $q$ iterations writes
\begin{align*}
\overline \Psi^{q+1}_T &=  h\sum_{k=0}^n w_k {\overline{E} e^{-\gamma \sqrt{h}\sum_{j=1}^k \varepsilon^{2,i}_j + |\gamma|^2 t_k/2}} v^{(q)}_k(\varepsilon^{1,i}_1,\varepsilon^{2,i}_1,\dots,\varepsilon^{1,i}_k,\varepsilon^{2,i}_k).
\end{align*}
\item The expected emissions of the representative company at date $t$, given by
$$
\mathbb E[\psi_t|\mathcal F_0] = E\mathbb E\left[\xi e^{-\gamma B^{0,2}_t + \frac{\gamma^2 t}{2}} \overline{\mathcal E}_{t,T}\right].
$$
The discretized version of this quantity is:
$$
\frac{E}{N}\sum_{i=1}^N \xi^{(q),i}  e^{-\gamma \sqrt{h} \sum_{j=1}^k \varepsilon^{2,i}_j + |\gamma|^2 t_k/2} \overline{\mathcal E}^i_{t_k,T}.
$$
%The expected emissions of the representative company are proportional to the emission efficacy $C_0$ and a factor which depends on the impact of the green investors (through the stochastic discount factor $\xi$). We therefore plot the second factor as function of the proportion and stringency of the green investors to illustrate how they impact the emissions of the representative company. 

\item The initial stock price of the representative company, given by
\begin{align}
S^\xi_0 &= V \mathbb E[\xi \mathcal E_T|\mathcal F_0] + \mathbb E\left[\int_0^T \frac{E^2}{\alpha_s} \mathbb E^2[\xi \mathcal E_{s,T}|\mathcal F_s]ds\Big|\mathcal F_0\right]\notag\\
& = V \mathbb E[\xi \overline{\mathcal E}_T] + E^2 \mathbb E\left[\int_0^T {e^{-\gamma B^{0,2}_t + \gamma^2 t/2}}\mathbb E^2[\xi \overline{\mathcal E}_{t,T}|\mathcal F_t] \dd t\right].\label{price.eq}
\end{align}
The discretized version of this formula writes:
$$
V v^{q}_0 + \frac{E^2 h}{N} \sum_{i=1}^N\sum_{k=0}^n w_k e^{-\gamma \sqrt{h}\sum_{j=1}^k \varepsilon^{2,i}_j + |\gamma|^2 t_k/2}\left(v^{(q)}_k(\varepsilon^{1,i}_1,\varepsilon^{2,i}_1,\dots,\varepsilon^{1,i}_k,\varepsilon^{2,i}_k)\right)^2.
$$
The first term in the above formula corresponds to the fraction of the market price that is due to the initial value of the firm, and the second term corresponds to the fraction of the price that is due to the value created by the firm through emissions. The second term is proportional to the squared emission efficacy of production; the higher the carbon intensity the lower will be the stock price and the higher the cost of capital. This effect is present in the market even if there are no green investors, because of the emission penalty, which is always present in the model. However, this effect will be stronger in the presence of green investors, or if their environmental concern is higher. 
\end{itemize}

We implement the algorithm to illustrate the impact of the climate risk and of the fraction and the environmental concern of green investors on the decarbonization dynamics and the cost of capital of firms with varying environmental performance. The parameters $\rho$ (proportion of green investors), $\gamma$ (volatility of emissions penalty, a proxy for climate risk) and $\lambda$ (proportional to environmental stringency of green investors) are changing throughout the tests and are described below. The other parameters are kept constant and given in the Table \ref{params.tab}. To fix the units, we take an average initial firm value of 1 billion US dollars, and an average emission efficacy of production to be 0.7 US dollars per kg of CO2. 

\begin{table}[]
    \centering
    \begin{tabular}{llp{0.5\textwidth}}
\textbf{Variable} & \textbf{Value} & \textbf{Description}\\
$T$ & 5 & Time horizon, years\\
$\gamma^*$ & $0.5$ & Risk aversion parameter \\
$\sigma_0$ & $10\%$ & Volatility of the common noise part of firm value dynamics\\
$\mu$ & $5\%$ & Drift of the firm value dynamics \\
$\overline V$ & 1 & Average initial firm value, \$billion \\
$\overline{E^2}$ & 1 & Average squared emission efficacy of production \\
 $\overline{E}$ & 0.7 & Average emission efficacy of production, \$/kgCO2 \\
$n$ & 20 & Number of discretization steps \\
$N$ & 50\,000 & Number of sample trajectories \\
$p$ & 2 & Weight in the fixed-point algorithm\\
\end{tabular}
    \caption{Table of parameters: value and description.}
    \label{params.tab}
\end{table}

\paragraph{Convergence} We first illustrate the convergence of the algorithm. Figure \ref{convergence.fig} plots the convergence of the distribution of $\xi^q$ and of the total average emissions $\overline \Psi^{q}_T$ as $q$ increases from $0$ to $9$. Here we took $\gamma = 0.3$ and $\lambda = 0$ (no green investors). The curves have been smoothed with a Gaussian kernel density estimator. We see that convergence is obtained starting from 5-6 iterations of the algorithm; in the numerical tests below we perform 10 iterations. 

\begin{figure}[h!]
\centerline{\includegraphics[width=\textwidth]{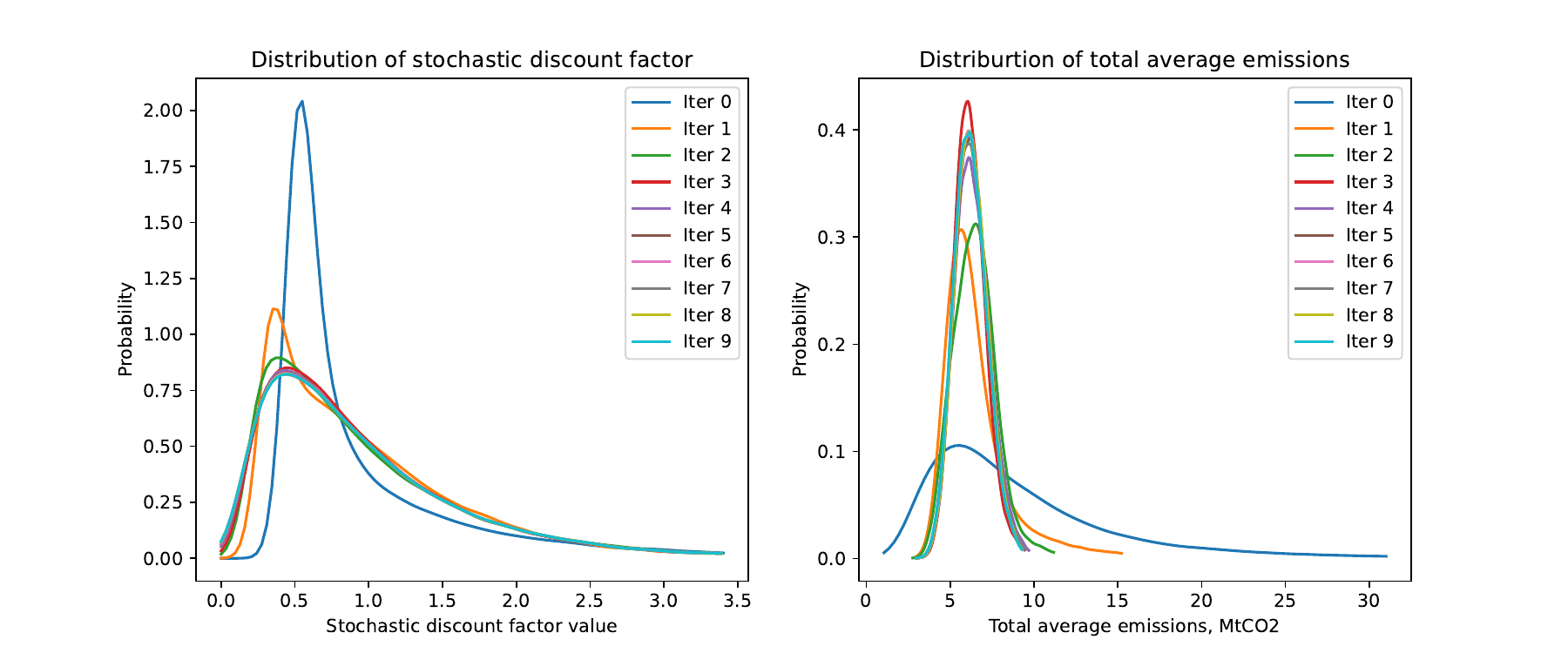}}
\caption{Convergence of the distribution of stochastic discount factor $\xi$ (left graph) and of the total average emissions. } 
\label{convergence.fig}
\end{figure}

\paragraph{Impact of climate risk on the decarbonization dynamics} In the first series of tests, we assume that there are no green investors ($\lambda=0$) and study the impact of climate risk on the decarbonization dynamics by changing the parameter $\gamma$ (volatility of the emission penalty).  Figure \ref{impact_gamma.fig} shows the distribution of total average emissions (left graph) and expected emissions of the representative company (with carbon intensity of production equal to 1) per unit of time (right graph), for different values of the volatility of emission penalty $\gamma$.  We see that as the intensity of climate risk/uncertainty grows, every company and the entire market increase their carbon emissions. This phenomenon has already been observed in \cite{de2022climate}: lax or uncertain climate policies do not encourage companies to reduce emissions. 

\paragraph{Impact of green investors on the decarbonization dynamics}
We now illustrate the impact of the proportion and the environmental concern of the green investors on the total average emissions of all companies. 
Figure \ref{impact_lambda.fig} shows the distribution of total average emissions (left graph) and expected emissions of the representative company  per unit of time (right graph), for different values of the environmental concern of green investors $\lambda$, with $\gamma = 0.3$ and $\rho = 0.5$. We see that as the environmental concern grows, the distribution of total average emissions shifts to the left and narrows down, while the expected emissions decrease for all dates. Figure \ref{impact_rho.fig} illustrates the impact of the proportion of green investors $\rho$, for $\gamma = 0.3$ and $\lambda = 0.4$, and we see that increasing the proportion of green investors leads to a considerable reduction of carbon emissions. 

This negative effect of climate policy uncertainty on the emission rates is thus partially reversed in the presence of environmentally concerned investors, whose impact on the cost of capital spurs companies to reduce emissions. However, as seen from Figure \ref{impact_rho.fig}, where for a relatively large value of volatility of emission penalty $\gamma$, the emission rates are increasing even for $\rho = 0.75$, if future climate policies are uncertain, even a large fraction of green-minded investors is unable to bring down the emission curve:  clear and predictable climate policies are an essential ingredient to allow green investors to decarbonize the economy. 

\begin{figure}[h!]
\centerline{\includegraphics[width=\textwidth]{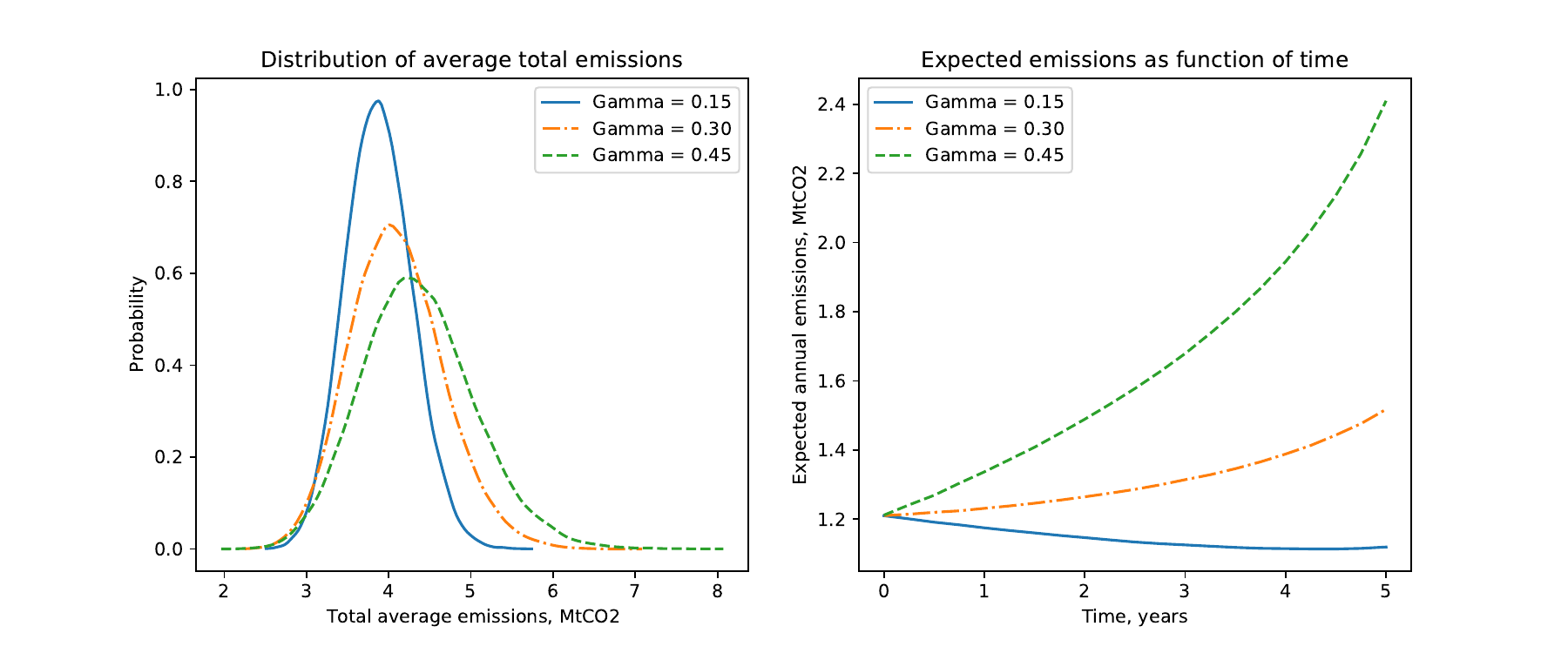}}
\caption{Distribution of total average emissions (left) and expected emissions of the representative company per unit of time (right), for different values of the volatility of emission penalty $\gamma$. } 
\label{impact_gamma.fig}
\end{figure}
\begin{figure}[h!]
\centerline{\includegraphics[width=\textwidth]{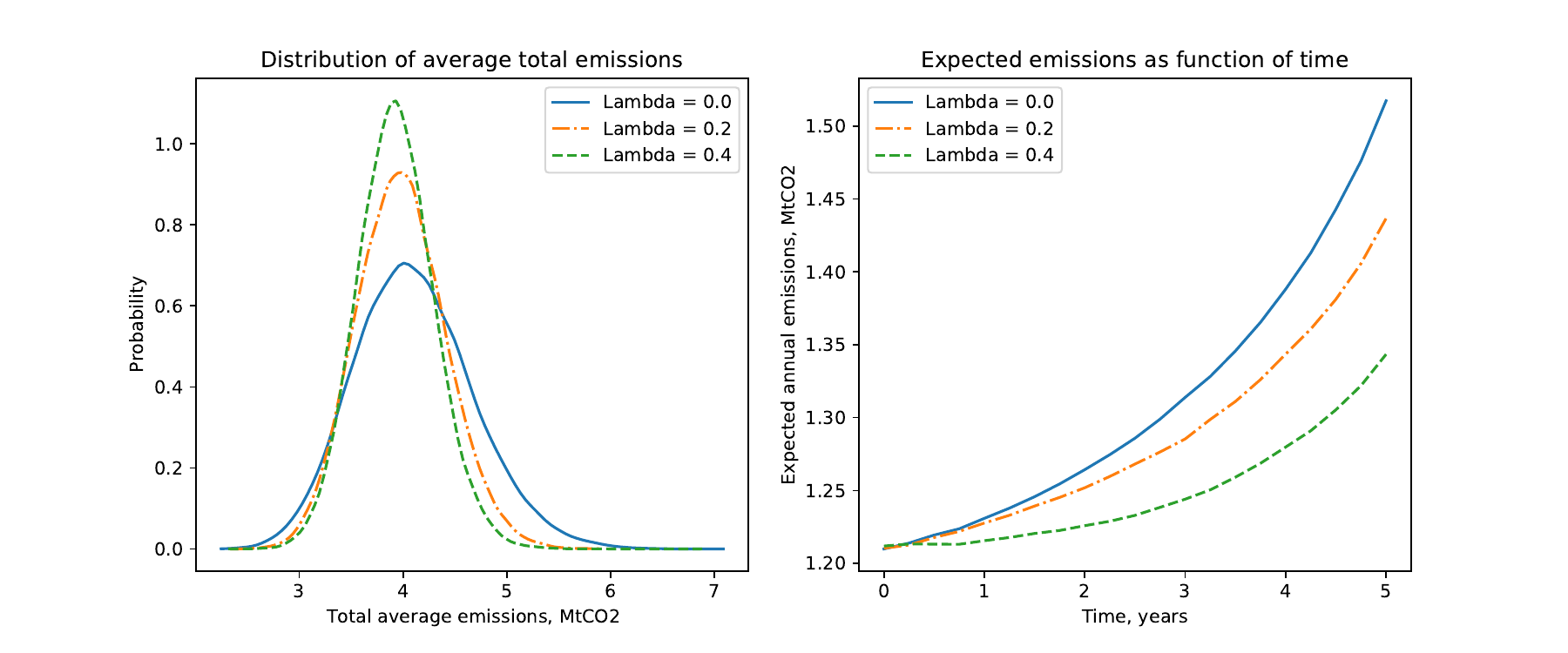}}
\caption{Distribution of total average emissions (left) and expected emissions of the representative company per unit of time (right), for different values of the environmental concern of green investors $\lambda$. } 
\label{impact_lambda.fig}
\end{figure}
\begin{figure}[h!]
\centerline{\includegraphics[width=\textwidth]{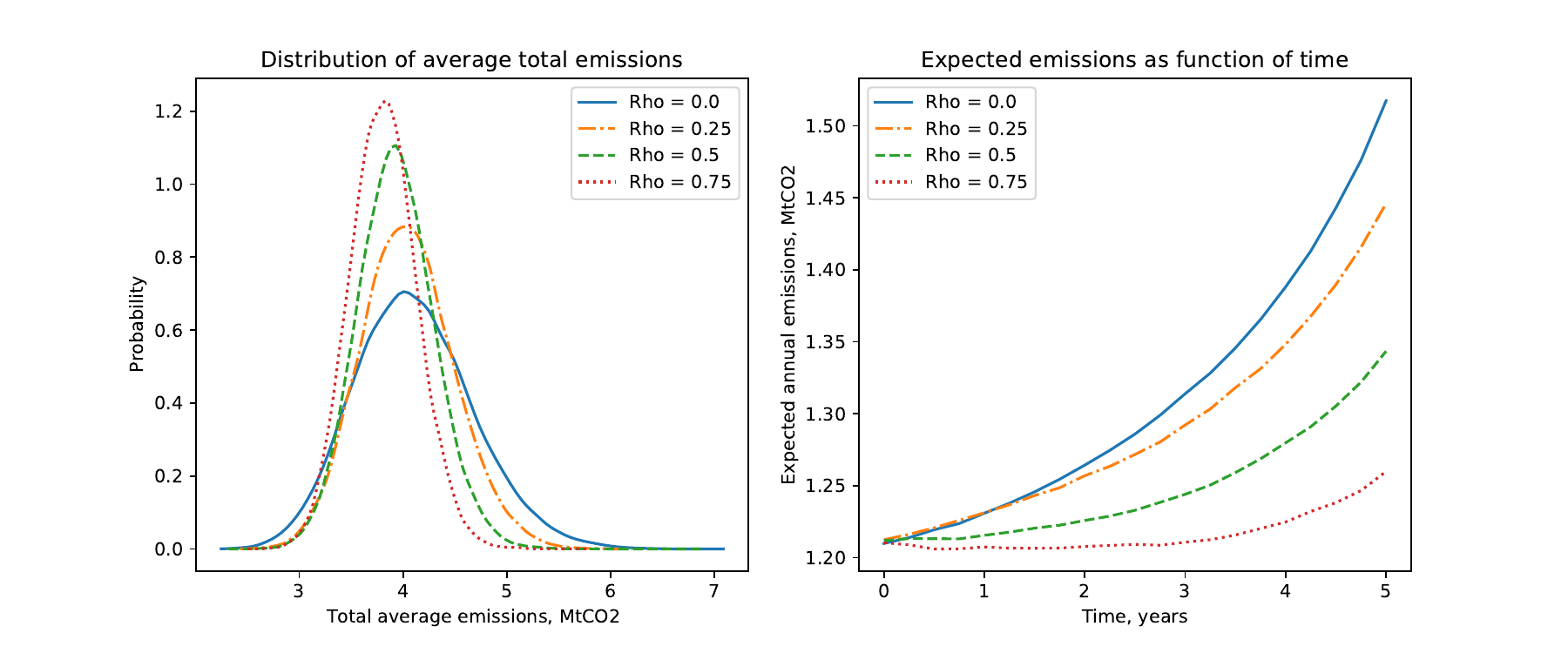}}
\caption{Distribution of total average emissions (left) and expected emissions of the representative company per unit of time (right), for different values of the proportion of green investors $\rho$. } 
\label{impact_rho.fig}
\end{figure}

\paragraph{Impact of the green investors on share prices}
We now discuss the impact of the climate risk, the fraction and the environmental concern of green investors on the share price of the representative company. Table \ref{price.tab} shows the impact of various parameters on the two components of the price formula \eqref{price.eq}: $P_1$ is the coefficient multiplied by the firm value $V$, and $P_2$ is the coefficient multiplied by the squared emission efficacy $E^2$. We see that the first component (sensitivity to firm value) is quite stable. Indeed, the quantity $\mathbb E[\xi \overline{\mathcal E}_T]$ only weakly depends on $\xi$, because $\mathbb E[\xi] = 1$ and the volatility $\sigma_0$ is not very high in our examples. On the other hand, the sensitivity to emission efficacy $P_2$ changes considerably between different tests. First, this sensitivity is always positive: all things being equal, carbon-efficient companies will enjoy higher stock prices than carbon-intensive companies. Second, this sensitivity increases when there is more climate risk / uncertainty (i.e., when $\gamma$ is high). This happens because, when climate uncertainty increases, all companies will emit more carbon, create more value and therefore pay higher dividends. However, carbon efficient companies create more value with one unit of extra emissions than carbon intensive companies, and are less exposed to climate risk, so that the spread between green and brown company share prices will grow. Finally, the sensitivity of share price to emission efficacy appears to decrease when the proportion of green investors and their environmental stringency grow. Indeed, in a market with many green-minded investors, all companies will emit less carbon, pay lower dividends and thus have lower share prices. 

\begin{table}
\begin{center}\begin{tabular}{ccccc}
$\gamma$ & $\lambda$ & $\rho$ & $P_1$ & $P_2$\\\hline
\multicolumn{5}{c}{Impact of $\gamma$}\\\hline
$0.15$ & $0$ & $0.5$ &$1.2102\pm 0.0004$&$6.260\pm 0.011$\\
$0.3$ &$0$ & $0.5$ &$1.2112\pm 0.0003$&$6.928\pm 0.014$\\
$0.45$ &$0$ & $0.5$ &$1.2128\pm 0.0003$&$8.143\pm 0.010$\\\hline
\multicolumn{5}{c}{Impact of $\lambda$}\\\hline
$0.3$ &$0$ & $0.5$ &$1.2112\pm 0.0003$&$6.928\pm 0.014$\\
$0.3$ &$0.2$ & $0.5$ &$1.2103\pm 0.0003$&$6.834 \pm0.017$\\
$0.3$ &$0.4$ & $0.5$ &$1.2115\pm 0.0003$&$6.730\pm 0.019$\\\hline
\multicolumn{5}{c}{Impact of $\rho$}\\\hline
$0.3$ &$0.4$ & $0.0$ &$1.2112\pm 0.0003$&$6.928\pm 0.014$\\
$0.3$ &$0.4$ & $0.25$ &$1.2108\pm 0.0005$&$6.825\pm 0.020$\\
$0.3$ &$0.4$ & $0.5$ &$1.2115\pm 0.0003$&$6.730\pm 0.019$\\
$0.3$ &$0.4$ & $0.75$ &$1.2119\pm 0.0004$&$6.710 \pm 0.020$\\
$0.3$ &$0.4$ & $1.0$ &$1.2113\pm 0.0003$&$6.595 \pm 0.021$
\end{tabular}
\end{center}
\caption{Two components of the price formula \eqref{price.eq}: $P_1$ is the sensitivity to the firm value $V$, and $P_2$ is the sensitivity to the  squared emission efficacy $E^2$. The standard errors quantify the Monte Carlo error only and were computed by running the test 10 times. }
\label{price.tab}
\end{table}

\section{Conclusion}
\label{conclusion.sec}
In this paper we develop a model for the decarbonization of a large financial market, arising from an equilibrium dynamics involving companies and investors, and built using the analytical framework of mean-field games. In our model, emission schedules are stochastic and may change dynamically depending on the firm's own information, the materialization of climate risk and other market-wide or economy-wide factors. The model allows us to study the impact of climate risk and of the environmental concern of green investors on the market decarbonization dynamics in a stochastic framework: we can, for example, compute the distribution of aggregate emissions depending on the realization of common risk factors. Here, we suppose that there are two representative investors (regular and green) who have access to the same information set. In future research, the model could be extended to include many investors with varying degree of greenness and with access to information. Equilibrium dynamics will then need to be determined both at the level of companies and at the level of investors, leading to a two-stage mean-field game.

\begin{APPENDICES}

\section{Link with a $n$-firm financial market model}
\label{app:nfirm.sec} 

In this section we present a $n$-firm market model, with two investors and show that the mean field model studied in the article provides an $\varepsilon$-Nash equilibrium for the $n$-agent setting. 
This game is a leader-follower (Stackelberg) game, where the leaders are the firms.

Fix $n\geq 1$ and denote $\mathcal{N} = \left\{1,\ldots, n \right\}$.  Consider a filtered probability space $(\Omega^n,\mathcal F^n,\mathbb F^n,\mathbb P^n)$. We assume that on this probability space there exists a $d$-dimensional standard Brownian motion $B^0$ and an independent $n$-dimensional standard Brownian motion $(B^1,\dots,B^n)$. The space $(\Omega^n,\mathcal F^n,\mathbb F^n,\mathbb P^n)$ describes a financial market with $n$ companies.
To lighten the notations, we assume that $\alpha^i = \alpha$ for every $i \in \mathcal{N}$.

\paragraph{The financial market}
 We now present the two investors. Compared to the mean field situation, the only change concerns the structure of the information and thus the probability measures $\mathbb P^n$ (defined in the beginning of the appendix) and $\mathbb P^{n,g}$ defined below.
The green investor solves the optimization problem
$$
\min_{W^g} \mathbb{E}^{n,g} [e^{-\gamma^g W^g}]\quad \text{subject to}\quad \mathbb{E}^n[\xi^n W^g]\leq \mathbb{E}^n[\xi^n X^g],
$$
and the regular investor solves 
$$
\min_{W^r} \mathbb{E}^n [e^{-\gamma^r W^r}]\quad \text{subject to}\quad \mathbb{E}^n[\xi^n W^r]\leq \mathbb{E}^n[\xi^n X^r].
$$
Here, $\mathbb{E}^n$ denotes the expectation with respect to the probability measure $\mathbb P^n$, and $\mathbb{E}^{n,g}$ denotes the expectation with respect to the probability measure of the green investor $\mathbb P^{n,g}$, defined by
$$
\frac{\dd \mathbb P^{n,g}}{\dd \mathbb P^{n}} = Z,\quad Z \in \mathcal F^0_T.%= e^{-\int_0^T \lambda_s \dd B^0_s - \frac{1}{2}\int_0^T |\lambda_s|^2 \dd s}.
$$
To simplify the analysis, we assume that $Z$ is bounded,
\begin{equation*}
    \|Z\|_{L^{\infty}(\mathcal{F}_T^0)} < +\infty.
\end{equation*}
We still assume that the density of the measure change depends only on the market-wide risk (common noise).  By an analogous reasoning to Lemma \ref{investor.lm}, the solutions to the optimization problems are of the form
$$
W^r = \mathbb{E}\left[ \xi^n X^r\right] + \frac{1}{\gamma^r} \mathbb E[\xi^n \ln \xi^n]- \frac{1}{\gamma^r} \ln \xi^n ,\qquad W^g = \mathbb{E}\left[ \xi^n X^g\right]+ \frac{1}{\gamma^g} \mathbb E[\xi^n \ln (\xi^n/Z)] - \frac{1}{\gamma^g} \ln (\xi^n/Z).
$$
The market clearing condition now writes
$$
W^r + W^g = \frac{1}{n}\sum_{i = 1}^n V^i_T,
$$
for firm values $(V^i_T)_{i \in \mathcal{N}}$, chosen in the next paragraph.
Solving for the stochastic discount factor, we find 
\begin{align} \label{eq:xi-n}
\xi^n = \frac{\exp\left(-\gamma^* \frac{1}{n}\sum_{i = 1}^n V^i_T  + \rho \ln Z\right)}{\mathbb{E}^n\left[\exp\left(-\gamma^*  \frac{1}{n}\sum_{i = 1}^n V^i_T  + \rho \ln Z\right)\right]}.
\end{align}

\paragraph{Individual firm dynamics and optimization}
We assume that \emph{the firm value} of $i$-th company follows the dynamics below:
\begin{align}
\dd  V^i_t = V^i_t(\mu^i_t \dd t + \sigma^i_t \dd B^i_t + \sigma^{i,0}_t \dd B^0_t) + e^i_t \psi^i_t \dd t,\quad V^i_0 = v_i.\label{nfirm}
\end{align}

The solution to Equation \eqref{nfirm} is given explicitly by
\begin{align*}
 V^i_t &= v_i \mathcal{E}^i_t + \mathcal{E}^i_t \int_0^t \mathcal{E}^{i,-1}_s e^i_s\psi^i_s \dd s,
\\
\mathcal{E}^i_t &= \exp\left(\int_0^t \sigma^i_s \dd B^i_s + \int_0^t \sigma^{i,0}_s \dd B^0_s  + \int_0^t \left(\mu^i_s - \frac{(\sigma^i)^2}{2} - \frac{(\sigma^{i,0})^2}{2} \right)\dd s\right).
\end{align*}
To be consistent with the limiting framework, we assume that each company has exactly $1/n$ shares outstanding. Further, we assume the existence of a complete and arbitrage-free financial market, where the shares of all firms are traded, and the stochastic discount factor $\xi^n$ is defined by \eqref{eq:xi-n}: the price of $i$-th company's stock at time $t\leq T$ is given by
\begin{align*}
S^i_t = \frac{1}{\xi^n_t}\mathbb{E}^n \left[ \left. \xi^n_T V^i_T \right | \mathcal F_t \right].
\end{align*}
In the $n$-agent setting, we consider two possible formulations for the individual firm problem, the \emph{price taker} setting and the \emph{price maker} setting. 

\paragraph{Price taker firm} 
In the price taker setting, akin to the open-loop formulation of the $n$-player games, each firm optimizes its emission strategy by assuming that the stochastic discount factor is fixed. This corresponds to the standard microeconomic equilibrium formulation. To determine its emission schedule, given a stochastic discount factor $\xi^n$, each company realizes the following program, 
\begin{align*}
\sup_{\psi^i \in L^2(\mathbb{F}^n)} J^i[\xi^n](\psi^i) ,\quad J^i[\xi^n](\psi^i) \coloneqq \mathbb{E}^n \left[\xi^n V^{i,\psi^i}_T  - \int_0^T \alpha_s c(\psi_s^i) \dd s \right],
\end{align*}
where we used the notation $\psi^{-i} = (\psi^{j})_{j \in \mathcal{N},j \neq i}$.

We call a Nash equilibrium for the $n$-player game in the price taker setting, any vector $(\xi^n, (\psi^i)_{i\in \mathcal N})$ satisfying
\begin{equation} \label{nash-eq-price-taker}
    \psi^i_t := I\left(\frac{e^i_t}{\alpha_t} \mathbb{E}^n[\xi^n \mathcal{E}^i_{t,T}|\mathcal F_t]\right),\qquad 
\xi^n = \frac{\exp\left(- \gamma^*  \frac{1}{n}\sum_{i = 1}^n V^{i,\psi^i}_T  + \rho \ln Z\right)}{\mathbb{E}^n\left[\exp\left(- \gamma^*  \frac{1}{n}\sum_{i = 1}^n V^{i,\psi^i}_T  + \rho \ln Z\right)\right]}.
\end{equation}
Using the same approach as in the mean field framework, we define the potential
\begin{equation*}
    G^n(\xi) = H(\xi) + L^n(\xi) + \chi(\mathbb{E}^n \left[\xi \right] - 1),
\end{equation*}
where 
\begin{equation*}
    L^n(\xi) = \frac{\gamma^{\star}}{n} \sum_{i = 1}^n \mathbb{E}^n\left[\xi V\mathcal{E}^i_T +\int_0^T \alpha_t c^*\left(\frac{e^i_t}{\alpha_t} \mathbb{E}^n[\xi \mathcal{E}^i_{t,T}|\mathcal F_t]\right) \dd t   \right].
\end{equation*}

\begin{proposition} \label{prop:existence-uniq-Nash-price-taker}
    There exists a unique solution $(\xi^n,(\psi^i)_{i\in \mathcal N})$ to the fixed-point problem \eqref{nash-eq-price-taker} satisfying $\xi^n = \argmin_{\xi \in \mathcal{C}} G^n(\xi)$. In addition one can verify that, as in the mean field setting,
\begin{equation*}
    \psi^i \in \mathcal{C}^n \coloneqq \{ \psi \in L_{\alpha}^2(\mathbb{F}^n), \; \|\psi\|_{L^2(\mathbb{F}^n)} \leq C,  \; \psi \geq 0, \; a.s, \, a.e.\},
\end{equation*}
for some fixed constant $C>0$, for every $i \in \mathcal{N}$.
\end{proposition}

\begin{proof}
The proof is similar to that of Theorem \ref{main.thm} with $G$ and $\mathcal{F}_T^0$ being respectively replaced by $G^n$ and $\mathcal{F}_T^{0,n}$. 
Notice that $G$ and $G^n$ only differ from $L$ being replaced by $L^n$ and $\mathbb{E}[\xi]$ being replaced by $\mathbb{E}^n[\xi]$. The latter is not difficult to handle. Then we only justify that $L^n$ is non-negative, convex, stronly convex and lower semi-continuous. The non-negativity of $L^n$ is direct and the other properties can be derived as in Lemma \ref{lemma:G}. Finally, the $L^2$ bound on the vector of strategies $(\psi^i)_{i \in \{1,\ldots,n\}}$ follows from the strong convexity of $c^*$.
\end{proof}

\begin{definition}
    Let $\varepsilon>0$. We say that a vector of strategies $(\bar{\psi}^i)_{i\in \mathcal N}$ with $\bar \psi^i \in \mathcal{C}^n$ for all $i$ is a $\varepsilon$-Nash equilibrium for the $n$-player price taker game if
    \begin{equation*}
        J^i[\bar \xi^n](\bar{\psi}^{i})  \geq \sup_{\psi \in \mathcal{C}^n} J^i[\bar \xi^n](\psi) - \varepsilon,
    \end{equation*}
    for all $i \in\mathcal N$, where 
    \begin{equation} \label{def:xi-n}
        \bar \xi^n = \frac{\exp\left(- \gamma^* \frac{1}{n}\sum_{i = 1}^n  V^{i,\bar \psi^i}_T  + \rho \ln Z\right)}{\mathbb{E}^n\left[\exp\left(- \gamma^* \frac{1}{n}\sum_{i = 1}^n V^{i,\bar \psi^i}_T  + \rho \ln Z\right)\right]}.
    \end{equation}
\end{definition}

Let $(\bar{\xi},\bar{\psi})$ be a mean field game equilibrium. Let $(\bar{\psi}^i)_{i \in \mathcal{N}}$ be conditionally i.i.d copies of $\bar{\psi}$ with respect to the common noise, and let $\bar \xi^n$ be the associated stochastic discount factor given by \eqref{def:xi-n}.

\begin{proposition} \label{prop:price-taker-eq}
     There exists $\varepsilon(n)$ such that, for every player $i$ and every admissible deviation $\psi^i\in\mathcal C^n$,
\begin{equation*}
    J^i[\bar\xi^n](\bar\psi^i)
\;\ge\;
J^i[\bar\xi^n](\psi^i)
\;-\;
\varepsilon(n),
\end{equation*}
i.e.\ the symmetric price-taker profile $\bar\psi$ forms an $\varepsilon(n)$-Nash equilibrium of the $n$-player game.
\end{proposition}

\begin{proof}
By definition of the mean field Nash equilibrium, we recall that 
\begin{equation} \label{definition:Nash-eq}
    J[\bar\xi](\bar\psi) \geq J[\bar\xi](\psi), 
\end{equation}
for all $\psi \in \mathcal{C}$.
By definition of the criterion $J^i$, we have
\begin{equation*}
     J[\bar\xi](\psi) =  J^i[\bar\xi](\psi) = J^i[\bar \xi^n](\psi) + \mathbb{E}^n\left[ V^{i,\psi}_T  (\bar \xi - \bar \xi^n) \right].
\end{equation*}
Plugging into the Nash equlibirum definition \eqref{definition:Nash-eq} we obtain
\begin{equation*}
    J^i[\bar \xi^n](\bar \psi) \geq J^i[\bar  \xi^n](\psi) + \mathbb{E}^n\left[ V^{i,\psi}_T  (\bar \xi - \xi^n) \right] - \mathbb{E}^n\left[ V^{i,\bar{\psi}}_T  (\bar \xi - \bar \xi^n) \right].
\end{equation*}
Using that  $(\bar{\psi}^i)_{i \in \mathcal{N}}$ are conditional i.i.d copies of $\bar{\psi}$ we have
\begin{equation} \label{ineq:pre-eps-nash-eq-price-taker}
    J^i[\bar \xi^n](\bar \psi^i) \geq J^i[ \bar  \xi^n](\psi) + \mathbb{E}^n\left[ V^{i,\psi}_T  (\bar \xi - \bar  \xi^n) \right] - \mathbb{E}^n\left[ V^{i,\bar{\psi}^i}_T  (\bar \xi - \bar \xi^n) \right].
\end{equation}
Now we justify that 
\begin{equation} \label{eq:SLLN}
    \lim_{n \to +\infty} \frac{1}{n}\sum_{i = 1}^n  V^{i,\bar \psi^i}_T = \mathbb{E} \left[ \left. V^{\bar \psi}_T \right \vert \mathcal{F}^0_T \right],
\end{equation}
almost surely. 
By the conditional i.i.d. assumption there exists a regular conditional probability
\begin{equation*}
    \mathbb{P}^0_{\omega}(\cdot) = \mathbb{P}(\cdot \vert \mathcal{F}_T^0)(\omega), \quad \forall \omega \in \Omega,
\end{equation*}
such that $(V^{i,\bar \psi^i}_T)_{i \in \mathcal{N}}$ are i.i.d random variables with mean 
\begin{equation*}
    m(\omega) = \mathbb{E}_{\mathbb{P}^0_{\omega}}\left[V^{i,\bar \psi^i}_T\right] = \mathbb{E}\left[\left. V^{i,\bar \psi^i}_T  \right \vert \mathcal{F}_T^0\right](\omega).
\end{equation*}
Then by the strong law of large numbers, we have that 
\begin{equation*}
    \mathbb{P}^0_{\omega}\left(A \right) = 1, \quad A = \left\{ \omega \in \Omega, \; \lim_{n \to \infty}\frac{1}{n}\sum_{i = 1}^n  V^{i,\bar \psi^i}_T(\omega) = m(\omega) \right\}.
\end{equation*}
By the tower property of the expectation we have that 
\begin{equation*}
    \mathbb{P}(A) = \mathbb{E}\left[ \mathbb{P}\left(A \vert \mathcal{F}_T^0 \right)\right] = 
    \mathbb{E}\left[ \mathbb{P}^0_{\omega}\left(A \right)\right] = 1,
\end{equation*}
 and thus \eqref{eq:SLLN} holds. 
Then by continuity of the exponential, 
$$
 \lim_{n \to +\infty}\exp\left(- \gamma^* \frac{1}{n}\sum_{i = 1}^n  V^{i,\bar \psi^i}_T  + \rho \ln Z\right) = \exp\left(- \gamma^* \mathbb{E} \left[ \left. V^{\bar \psi}_T \right \vert \mathcal{F}^0_T \right]  + \rho \ln Z\right),
$$
almost surely and thus 
$$
 \lim_{n \to +\infty}\xi^n = \bar{\xi},
$$
almost surely. By H{\"o}lder inequality, we have that
\begin{equation*}
    \left|\mathbb{E}^n\left[ V^{i,\psi}_T  (\bar \xi - \bar  \xi^n) \right]\right| \leq \|V^{i,\psi}_T\|_{L^2(\mathcal{F}_T)} \|\bar{\xi} - \bar  \xi^n\|_{L^2(\mathcal{F}_T)}
\end{equation*}
Taking the supremum on both sides with respect to the strategy, we obtain by definition of the the set $\mathcal{C}^n$,
\begin{equation*}
    \sup_{\psi \in \mathcal{C}^n}\left|\mathbb{E}^n\left[ V^{i,\psi}_T  (\bar \xi - \bar  \xi^n) \right]\right| \leq C \|\bar{\xi} - \bar  \xi^n\|_{L^2(\mathcal{F}_T)},
\end{equation*}
since there exists $C>0$ such that $\mathbb{E}^n[|V^{i,\psi}_T|^2]<C$ for all $\psi^i \in \mathcal{C}^n$ by Proposition \ref{prop:existence-uniq-Nash-price-taker}.
%\todo[inline]{One needs to show that $\mathbb E^n[(V^{i,\psi}_T)^2]<C$ for all $\psi$ in the set of admissible deviations, this is not immediately clear.}

Since $Z$ is assumed to be bounded, the familly of random variables $|\bar \xi - \bar{\xi}^n|^2$ is uniformly integrable. Indeed, for any $\varepsilon>0$ we have
\begin{equation*}
\mathbb{E}\left[|\bar{\xi}^n|^{2+\varepsilon}\right] \leq C \mathbb{E}\left[Z^{\rho(2 + \varepsilon)}\right] < +\infty.
\end{equation*}
Finally, by uniform integrability and almost sure convergence of $(\bar{\xi}^n)_{n \in \mathcal{N}}$, we have that 
\begin{equation*}
    \lim_{n \to +\infty} \sup_{\psi \in \mathcal{C}^n} \left|\mathbb{E}^n\left[ V^{i,\psi}_T  (\bar \xi - \bar  \xi^n) \right]\right| \leq   \lim_{n \to +\infty} \|\bar{\xi} - \bar  \xi^n\|_{L^2(\mathcal{F}_T)} = 0.
\end{equation*}
Taking first the supremum and then the limit both sides in \eqref{ineq:pre-eps-nash-eq-price-taker} concludes the proof. 
\end{proof}

\paragraph*{Price maker} 
In the price maker setting, akin to the closed-loop formulation of the $n$-player games, each firm optimizes its emission strategy taking into account the impact of its strategy on the stochastic discount factor. 
In the following we denote $\psi = (\psi^1,\ldots,\psi^n)$ the vector of strategies and $\psi^{-i} = (\psi^1,\ldots,\psi^{i-1},\psi^{i+1},\ldots,\psi^n)$. Given a vector of emissions of the other companies $\psi^{-i} \in \mathcal{C}^{n-1}$, the $i$-th company realizes the following program, 
\begin{align*}
\sup_{\psi^i \in \mathcal{C}} J^i(\psi^{-i},\psi^i), \quad  J^i(\psi^{-i},\psi^i) =  \mathbb{E}^n\left[ \xi^n(\psi) V^{i,\psi^i}_T - \frac{1}{2}\int_0^T \alpha_t |\psi^i_t|^2\dd t\right],
\end{align*}
where 
$$
\xi^n(\psi^1,\dots,\psi^n) = \frac{\exp\left(-  \gamma^*  \frac{1}{n}\sum_{i = 1}^n V^{i,\psi^i}_T  + \rho \ln Z\right)}{\mathbb{E}^n\left[\exp\left(- \gamma^* \frac{1}{n}\sum_{i = 1}^n V^{i,\psi^i}_T  + \rho \ln Z\right)\right]}.
$$
We show that the mean field game equilibrium 
may approximate the equilibrium of the $n$-player game in the price maker setting. 
\begin{definition}
    Let $\varepsilon>0$. We say that a vector of strategies $\bar{\psi} \in \mathcal{C}^n$ is a $\varepsilon$-Nash equilibrium for the $n$-player price maker game if
    \begin{equation*}
        J^i(\bar{\psi}^{- i},\bar{\psi}^{i})  \geq \sup_{\psi \in \mathcal{C}} J^i(\bar{\psi}^{- i},\psi) - \varepsilon,
    \end{equation*}
    for all $i \in \mathcal N$.
\end{definition}
Let $(\bar \xi^n, (\bar \psi^i)_{i\in \mathcal N})$ be the pair defined in the previous paragraph :
$(\bar{\psi}^i)_{i \in \mathcal{N}}$ is a vector of conditionally i.i.d copies of $\bar{\psi}$ with respect to the common noise, and $\bar \xi^n$ is the associated stochastic discount factor given by \eqref{def:xi-n}.
To save space and better highlight deviations from the equilibrium of player $i \in\mathcal N$ in the following, we denote
\begin{align*}
\xi^i (\psi^i) & =  \xi^n (\bar\psi^1,\dots,\bar\psi^{i-1}, \psi^i, \bar\psi^{i+1},\dots,\bar\psi^n),\\
S^i(\psi^i) & = -\frac{\gamma^*}{n} \sum_{j\neq i} {V}^{j,\bar \psi^j}_T  -\frac{\gamma^*}{n}{V}^{i,\psi^i}_T  + \rho \ln Z,
\end{align*}
for any $\psi^i \in \mathcal{C}$.

\begin{theorem} \label{prop:vareps-nash-eq-price-maker}
    The vector $(\bar \xi^n,(\bar{\psi}^i)_{i \in \mathcal{N}})$ is a $\varepsilon(n)$-Nash equilibrium for the price maker game, where $\varepsilon(n)$ is a sequence converging to zero as $n$ tends to infinity.
\end{theorem}

\begin{proof}
By definition of the $\varepsilon(n)$-Nash equilibrium for the price taker setting, we have 
\begin{align}  \label{eq:Nash-equilibrium-price-taker}
    J^i[\bar \xi^n](\bar \psi^i) \geq J^i[\bar \xi^n](\psi^i) - \varepsilon(n),
\end{align}
for any $\psi^i \in \mathcal{C}$ and all $i \in \mathcal N$. 
Let us first recall that by definition of the criteria, we have 
\begin{equation*} 
    J^i[\bar \xi^n](\bar \psi^i) = J^i(\bar \psi^{-i},\bar \psi^i).
\end{equation*}
Then we have 
\begin{align*}
    J^i[\bar \xi^n](\psi^i) & = \mathbb E^n[\xi^i(\bar \psi^i) V^{i,\psi^i}_T  ] - \mathbb{E}^n\left[ \frac{1}{2}\int_0^T \alpha^i_t |\psi^i_t|^2\dd t\right] \\
    & = J^i(\bar \psi^{-i},\psi^i)  + 
    \mathbb E^n[ ( \xi^i(\bar \psi^i) -  \xi^i(\psi^i)) V^{i,\psi^i}_T],
\end{align*}
where we used that $\bar \xi^n = \xi^i(\bar \psi^i)$ by notational convention. Combining the last equalities with \eqref{eq:Nash-equilibrium-price-taker} yields that 
\begin{align} \label{eq:pre-nash-ineq}
    J^i(\bar \psi^{-i},\bar \psi^i) \geq J^i(\bar \psi^{-i},\psi^i)  + 
    \mathbb E^n[( \xi^i(\bar \psi^i) -  \xi^i(\psi^i)) V^{i,\psi^i}_T ] - \varepsilon(n).
\end{align}
Then it remains to estimate the second term in the right hand side. We start with 
\begin{align*}
    ( \xi^i(\bar \psi^i) -  \xi^i(\psi^i)) & = \frac{e^{S^i(\bar \psi^i)}}{\mathbb E^n[e^{S^i(\bar \psi^i)}]} - \frac{e^{S^i(\psi^i)}}{\mathbb E^n[e^{S^i(\psi^i)}]} \\
    & = \frac{e^{S^i(\bar \psi^i)} - e^{S^i(\psi^i)}}{\mathbb E^n[e^{S^n_i(\psi^i)}]} + \frac{e^{S^i(\psi^i)}}{\mathbb E^n[e^{S^i(\psi^i)}]} \left(\frac{\mathbb E^n[e^{S^i(\psi^i)}]}{\mathbb E^n[e^{S^i(\bar \psi^i)}]} - 1 \right).
\end{align*}
Using that $V^{i,\psi^i}_T \geq 0$ since $\psi \geq 0$ almost surely, almost everywhere, we have
\begin{align} \label{estiamtes:V-delta-xi}
    |\mathbb E^n[( \xi^i(\bar \psi^i) -  \xi^i(\psi^i)) V^{i,\psi^i}_T ]|  \leq  &\mathbb E^n[|\xi^i(\bar \psi^i) -  \xi^i(\psi^i)| V^{i,\psi^i}_T ] \leq a + b,
\end{align}
where 
\begin{equation*}
    a = \frac{1}{\mathbb E^n[e^{S^i(\psi^i)}]}\mathbb E^n\left[ \left| e^{S^i(\bar \psi^i)} - e^{S^i(\psi^i)} \right| V^{i, \psi^i}_T \right], \quad b = \frac{1}{\mathbb E^n[e^{S^i(\psi^i)}]} \left|\frac{\mathbb E^n[e^{S^i(\psi^i)}]}{\mathbb E^n[e^{S^i(\bar \psi^i)}]} - 1 \right| \mathbb E^n\left[e^{S^i(\psi^i)} V^{i, \psi^i}_T \right].
\end{equation*}
We now estimate the terms $a$ and $b$. We start with a lower bound on the denominator terms.
By Jensen's inequality, 
\begin{align} \label{ineq:exp-S-i-bar-below}
    \mathbb E^n[e^{S^i(\bar \psi^i)}] & \geq e^{\mathbb E^n\left[S^i(\bar \psi^i)\right]} = \exp\left(-\frac{\gamma^*}{n} \sum_{j\neq i} \mathbb E^n[{V}^{j,\bar \psi^j}_T] + \rho \mathbb E^n[\ln Z]\right), \\
    \mathbb E^n[e^{S^i(\psi^i)}] & \geq e^{\mathbb E^n\left[S^i(\psi^i)\right]} = \exp\left(-\frac{\gamma^*}{n} \sum_{j\neq i} \mathbb E^n[{V}^{j,\bar \psi^j}_T]  -\frac{\gamma^*}{n}\mathbb E^n[{V}^{i,\psi^i}_T]  + \rho \mathbb E^n[\ln Z]\right). \label{ineq:exp-S-i-below}
\end{align}
We now argue that ${V}^{i,\bar \psi^i}_T$ and ${V}^{i, \psi^i}_T$ enjoy finite second order moments, uniformly in $i \in \mathcal{N}$. Concerning ${V}^{i,\bar \psi^i}_T$, this is a direct consequence of the definition of $\mathcal{C}^n$.  
Concerning ${V}^{i, \psi^i}_T$, we can assume without loss of generality that $\psi^i \in \mathcal{C}$ is bounded in $L^2(\mathbb{F})$, uniformly in $i \in \mathcal N$ since the optimal strategy $\psi \in \mathcal{C}$ is bounded in $L^2(\mathbb{F})$ in the mean field regime. 
%\todo[inline]{This is not immediately clear, can you add a proof especially for $V^{i,\psi^i}_T$, which depends on an admissible deviation.}
Then there exists $C>0$ independent on $i \in \mathcal{N}$, such that 
$\mathbb E^n[{V}^{i,\bar \psi^i}_T] \leq C$ and $\mathbb E^n[{V}^{i,\psi^i}_T] \leq C$. Then we have that 
\begin{equation*}
    \mathbb E^n[e^{S^i(\bar \psi^i)}] \geq c, \quad \mathbb E^n[e^{S^i(\psi^i)}] \geq c.
\end{equation*}
We now estimate the difference $e^{S^i(\bar \psi^i)} - e^{S^i(\psi^i)}$.
By the fundamental theorem of calculus, we have that 
\begin{align*}
    e^{S^i(\bar \psi^i)} & = e^{S^i(\psi^i)} +  \int_0^1 e^{\theta S^i(\bar \psi^i) + (1- \theta) S^i(\psi^i)} \dd \theta (S^i(\bar \psi^i) - S^i(\psi^i) \\[0.5em]
    & = e^{S^i(\psi^i)} +  \int_0^1 e^{\theta S^i(\bar \psi^i) + (1- \theta) S^i(\psi^i)} \dd \theta \frac{\gamma^*}{n}(V^{i,\bar \psi^j}_T- V^{i,\psi^i}_T  ),
\end{align*}
almost surely. 
Since ${V}_T^{i,\bar \psi^i}, {V}_T^{i,\psi^i} \geq 0$ for each $i \in \mathcal N$, the random variable $e^{\theta S^i(\bar \psi^i) + (1- \theta) S^i(\psi^i)}$ is always lower or equal to $Z^{\rho}$ which is assumed to be bounded, for any $\theta \in [0,1]$. Thus 
\begin{align} \label{ineq:exp-decrease}
    \left\vert e^{S^i(\bar \psi^i)} -  e^{S^i(\psi^i)} \right \vert \leq   \frac{C}{n} \left\vert V^{i,\bar \psi^j}_T- V^{i,\psi^i}_T \right \vert,
\end{align}
almost surely. 
Returning to the numerator of $a$, combining \eqref{ineq:exp-S-i-below}, \eqref{ineq:exp-decrease} and using Young's inequality we have
\begin{align*}
\mathbb E\left[ \left| e^{S^i(\bar \psi^i)} - e^{S^i(\psi^i)} \right| V^{i, \psi^i}_T \right] & \leq \frac{C}{n} \mathbb E\left[ \left| V^{i,\bar \psi^j}_T- V^{i,\psi^i}_T \right| V^{i, \psi^i}_T \right]
    \\[0.5em]
    & \leq \frac{C}{n}  \left(\mathbb E^n\left[|V^{i, \psi^i}_T |^2 \right] + \mathbb E^n\left[ \left|V^{i,\bar \psi^j}_T- V^{i,\psi^i}_T  \right|^2 \right] \right) \\[0.5em]
    & \leq \frac{C}{n},
\end{align*}
where the constant $C>0$ might have increased from line to line, but is independant on $i \in \mathcal{N}$.
We can now estimate $a$, 
\begin{align*}
    a \leq \frac{C}{n}.
\end{align*}
We can also estimate $b$, using \eqref{ineq:exp-S-i-bar-below} and \eqref{ineq:exp-decrease} we have
\begin{align*}
    b \leq C \left|\frac{\mathbb E^n[e^{S^i(\bar \psi^i)}]}{\mathbb E^n[e^{S^i(\psi^i)}]} - 1 \right| \leq C \left|\mathbb E^n[e^{S^i(\bar \psi^i)}-e^{S^i(\psi^i)}] \right| 
     \leq \frac{C}{n}.
\end{align*}
Finally combining the last estimates on $a$ and $b$ with (\ref{eq:pre-nash-ineq}-\ref{estiamtes:V-delta-xi}) yields 
\begin{align*}
     J^i[\bar \psi^{-i}](\bar \psi^i) \geq J^i[\bar \psi^{-i}](\psi^i) - C/n - o(1/n) -  \varepsilon(n).
\end{align*}
Taking the supremum over $\psi^i \in \mathcal{C}$ both sides yields the desired result. 
\end{proof}

We conclude this appendix, combining the two main results we obtained. 
Let $(\bar \xi^n, (\bar \psi^i)_{i\in \mathcal N})$ be the $n$-player price taker Nash equilibrium
(which exists by Proposition \ref{prop:existence-uniq-Nash-price-taker}).
\begin{corollary}
    We have that $(\bar \xi^n, (\bar \psi^i)_{i\in \mathcal N})$ is a $\varepsilon(n)$-Nash equilibrium for the $n$-player price maker game.
\end{corollary}

\begin{proof}
    By definition of the $n$-player price taker Nash equilibrium we have 
    \begin{align}  \label{eq:Nash-equilibrium-price-taker-new}
        J^i[\bar \xi^n](\bar \psi^i) \geq J^i[\bar \xi^n](\psi^i).
    \end{align}
    The conclusion follows by the same arguments as in the proof of Theorem \ref{prop:vareps-nash-eq-price-maker} replacing \eqref{eq:Nash-equilibrium-price-taker}  with \eqref{eq:Nash-equilibrium-price-taker-new}.
\end{proof}

%Since the optimal dividend of each company $V^i_T$, given by \eqref{divn} depends on the stochastic discount factor, it is tempting to interpret equation  \eqref{xin} as a fixed-point equation and look for the equilibrium stochastic discount factor. 

% However in the $n$-agent setting, the solution \eqref{psin} is not optimal in equilibrium, since the stochastic discount factor $\xi^n$ depends on the emission strategy of the $i$-th firm. Nevertheless, this fixed-point argument becomes valid in the mean-field game setting that we consider in the rest of this paper. First, in the following section, we define the suitable notion of financial market, where large investors invest into a continuum of small companies. 

\end{APPENDICES}

\section*{Acknowledgments.}
The research of Pierre Lavigne was supported by ADEME (Agency for Ecological Transition). The research of Peter Tankov was supported by the FIME (Finance for Energy Markets) research initiative of the Institut Europlace de Finance. Part of this research was carried out when Peter Tankov was visiting  Long-Term Investors@UniTO; the hospitality of this institution is gratefully acknowledged. We thank Olivier David Zerbib and St\'ephane Voisin for insightful comments on an earlier version of this paper.

% References here (outcomment the appropriate case) 

% CASE 1: BiBTeX used to constantly update the references 
%   (while the paper is being written).
%\bibliographystyle{informs2014} % outcomment this and next line in Case 1
%\bibliography{biblio} % if more than one, comma separated

\begin{thebibliography}{53}
\providecommand{\natexlab}[1]{#1}
\providecommand{\url}[1]{\texttt{#1}}
\providecommand{\urlprefix}{URL }

\bibitem[{Abadi et~al.(2015)}]{tensorflow2015-whitepaper}
Abadi M, et~al. (2015) {TensorFlow}: Large-scale machine learning on
  heterogeneous systems. Software available from tensorflow.org.

\bibitem[{Achdou et~al.(2022)Achdou, Han, Lasry, Lions, \protect\BIBand{}
  Moll}]{achdou2022income}
Achdou Y, Han J, Lasry JM, Lions PL, Moll B (2022) Income and wealth
  distribution in macroeconomics: A continuous-time approach. \emph{The Review
  of Economic Studies} 89(1):45--86.

\bibitem[{Ahuja(2016)}]{ahuja2016wellposedness}
Ahuja S (2016) Wellposedness of mean field games with common noise under a weak
  monotonicity condition. \emph{SIAM Journal on Control and Optimization}
  54(1):30--48.

\bibitem[{Alekseev et~al.(2022)Alekseev, Giglio, Maingi, Selgrad,
  \protect\BIBand{} Stroebel}]{alekseev2022quantity}
Alekseev G, Giglio S, Maingi Q, Selgrad J, Stroebel J (2022) A quantity-based
  approach to constructing climate risk hedge portfolios. Technical report,
  National Bureau of Economic Research.

\bibitem[{Anderson \protect\BIBand{} Raimondo(2008)}]{anderson2008equilibrium}
Anderson RM, Raimondo RC (2008) Equilibrium in continuous-time financial
  markets: Endogenously dynamically complete markets. \emph{Econometrica}
  76(4):841--907.

\bibitem[{Andersson et~al.(2016)Andersson, Bolton, \protect\BIBand{}
  Samama}]{andersson2016hedging}
Andersson M, Bolton P, Samama F (2016) Hedging climate risk. \emph{Financial
  Analysts Journal} 72(3):13--32.

\bibitem[{Ang \protect\BIBand{} Chen(2007)}]{ang2007capm}
Ang A, Chen J (2007) {CAPM} over the long run: 1926--2001. \emph{Journal of
  Empirical Finance} 14(1):1--40.

\bibitem[{Avramov et~al.(2022)Avramov, Cheng, Lioui, \protect\BIBand{}
  Tarelli}]{avramov2022sustainable}
Avramov D, Cheng S, Lioui A, Tarelli A (2022) Sustainable investing with {ESG}
  rating uncertainty. \emph{Journal of Financial Economics} 145(2):642--664.

\bibitem[{Barrasso \protect\BIBand{} Touzi(2022)}]{barrasso2022controlled}
Barrasso A, Touzi N (2022) Controlled diffusion mean field games with common
  noise and {M}c{K}ean--{V}lasov second order backward {SDE}s. \emph{Theory of
  Probability \& Its Applications} 66(4):613--639.

\bibitem[{Berk \protect\BIBand{} van Binsbergen(2021)}]{berk2021impact}
Berk J, van Binsbergen JH (2021) The impact of impact investing .

\bibitem[{Bolton \protect\BIBand{} Kacperczyk(2021)}]{bolton2021investors}
Bolton P, Kacperczyk M (2021) Do investors care about carbon risk?
  \emph{Journal of financial economics} 142(2):517--549.

\bibitem[{Bourgey et~al.(2022)Bourgey, Gobet, \protect\BIBand{}
  Jiao}]{bourgey2022bridging}
Bourgey F, Gobet E, Jiao Y (2022) Bridging socioeconomic pathways of {CO2}
  emission and credit risk. \emph{Annals of Operations Research} 1--22.

\bibitem[{Bredies et~al.(2009)Bredies, Lorenz, \protect\BIBand{}
  Maass}]{bredies2009generalized}
Bredies K, Lorenz DA, Maass P (2009) A generalized conditional gradient method
  and its connection to an iterative shrinkage method. \emph{Computational
  Optimization and Applications} 42(2):173--193.

\bibitem[{Brezis(2010)}]{brezis}
Brezis H (2010) \emph{Functional Analysis, Sobolev Spaces and Partial
  Differential Equations}. Universitext (Springer New York), ISBN
  9780387709130.

\bibitem[{Cardaliaguet et~al.(2019)Cardaliaguet, Delarue, Lasry,
  \protect\BIBand{} Lions}]{cardaliaguet2019master}
Cardaliaguet P, Delarue F, Lasry JM, Lions PL (2019) \emph{The master equation
  and the convergence problem in mean field games} (Princeton University
  Press).

\bibitem[{Cardaliaguet et~al.(2022)Cardaliaguet, Seeger, \protect\BIBand{}
  Souganidis}]{cardaliaguet2022mean}
Cardaliaguet P, Seeger B, Souganidis P (2022) Mean field games with common
  noise and degenerate idiosyncratic noise. \emph{arXiv preprint
  arXiv:2207.10209} .

\bibitem[{Carmona \protect\BIBand{}
  Delarue(2018{\natexlab{a}})}]{carmonadelaruev2}
Carmona R, Delarue F (2018{\natexlab{a}}) \emph{Probabilistic theory of mean
  field games with applications}, volume~2 (Springer).

\bibitem[{Carmona \protect\BIBand{}
  Delarue(2018{\natexlab{b}})}]{carmona2018probabilistic}
Carmona R, Delarue F (2018{\natexlab{b}}) \emph{Probabilistic Theory of Mean
  Field Games with Applications II: Mean Field Games with Common Noise and
  Master Equations}, volume~84 (Springer).

\bibitem[{Carmona et~al.(2016)Carmona, Delarue, \protect\BIBand{}
  Lacker}]{carmona2016mean}
Carmona R, Delarue F, Lacker D (2016) Mean field games with common noise.
  \emph{The Annals of Probability} 44(6):3740--3803.

\bibitem[{Casgrain \protect\BIBand{} Jaimungal(2020)}]{casgrain2020mean}
Casgrain P, Jaimungal S (2020) Mean-field games with differing beliefs for
  algorithmic trading. \emph{Mathematical Finance} 30(3):995--1034.

\bibitem[{Cheridito \protect\BIBand{} Gersey(2021)}]{cheridito2021computation}
Cheridito P, Gersey B (2021) Computation of conditional expectations with
  guarantees. \emph{arXiv preprint arXiv:2112.01804} .

\bibitem[{Chowdhry et~al.(2019)Chowdhry, Davies, \protect\BIBand{}
  Waters}]{chowdhry2019investing}
Chowdhry B, Davies SW, Waters B (2019) Investing for impact. \emph{The Review
  of Financial Studies} 32(3):864--904.

\bibitem[{De~Angelis et~al.(2022)De~Angelis, Tankov, \protect\BIBand{}
  Zerbib}]{de2022climate}
De~Angelis T, Tankov P, Zerbib OD (2022) Climate impact investing.
  \emph{Management Science} (published online).

\bibitem[{Djete(2021)}]{djete2021large}
Djete MF (2021) Large population games with interactions through controls and
  common noise: convergence results and equivalence between $ open $--$ loop $
  and $ closed $--$ loop $ controls. \emph{arXiv preprint arXiv:2108.02992} .

\bibitem[{Duffie(1986)}]{duffie1986stochastic}
Duffie D (1986) Stochastic equilibria: Existence, spanning number, and the 'no
  expected financial gain from trade' hypothesis. \emph{Econometrica}
  54:1161--1183.

\bibitem[{Duffie \protect\BIBand{} Huang(1985)}]{duffie1985implementing}
Duffie D, Huang CF (1985) Implementing {A}rrow-{D}ebreu equilibria by
  continuous trading of few long-lived securities. \emph{Econometrica}
  53:1337--1356.

\bibitem[{Dunford \protect\BIBand{} Schwartz(1988)}]{dunford1988linear}
Dunford N, Schwartz JT (1988) \emph{Linear operators, part 1: general theory},
  volume~10 (John Wiley \& Sons).

\bibitem[{Engle et~al.(2020)Engle, Giglio, Kelly, Lee, \protect\BIBand{}
  Stroebel}]{engle2020hedging}
Engle RF, Giglio S, Kelly B, Lee H, Stroebel J (2020) Hedging climate change
  news. \emph{The Review of Financial Studies} 33(3):1184--1216.

\bibitem[{F{\'e}ron et~al.(2022)F{\'e}ron, Tankov, \protect\BIBand{}
  Tinsi}]{feron2022price}
F{\'e}ron O, Tankov P, Tinsi L (2022) Price formation and optimal trading in
  intraday electricity markets. \emph{Mathematics and Financial Economics}
  16(2):205--237.

\bibitem[{Frank \protect\BIBand{} Wolfe(1956)}]{frank1956algorithm}
Frank M, Wolfe P (1956) An algorithm for quadratic programming. \emph{Naval
  research logistics quarterly} 3(1-2):95--110.

\bibitem[{Fu et~al.(2021)Fu, Graewe, Horst, \protect\BIBand{}
  Popier}]{fu2021mean}
Fu G, Graewe P, Horst U, Popier A (2021) A mean field game of optimal portfolio
  liquidation. \emph{Mathematics of Operations Research} 46(4):1250--1281.

\bibitem[{Fujii \protect\BIBand{}
  Takahashi(2022{\natexlab{a}})}]{fujii2022mean}
Fujii M, Takahashi A (2022{\natexlab{a}}) A mean field game approach to
  equilibrium pricing with market clearing condition. \emph{SIAM Journal on
  Control and Optimization} 60(1):259--279.

\bibitem[{Fujii \protect\BIBand{}
  Takahashi(2022{\natexlab{b}})}]{fujii2022strong}
Fujii M, Takahashi A (2022{\natexlab{b}}) Strong convergence to the mean field
  limit of a finite agent equilibrium. \emph{SIAM Journal on Financial
  Mathematics} 13(2):459--490.

\bibitem[{Gomes et~al.(2021)}]{gomes2021mean}
Gomes DA, et~al. (2021) A mean-field game approach to price formation.
  \emph{Dynamic Games and Applications} 11(1):29--53.

\bibitem[{Green \protect\BIBand{} Roth(2021)}]{green2021allocation}
Green D, Roth B (2021) The allocation of socially responsible capital.
  \emph{Available at SSRN 3737772} .

\bibitem[{Heinkel et~al.(2001)Heinkel, Kraus, \protect\BIBand{}
  Zechner}]{heinkel2001effect}
Heinkel R, Kraus A, Zechner J (2001) The effect of green investment on
  corporate behavior. \emph{Journal of financial and quantitative analysis}
  36(4):431--449.

\bibitem[{Huang(1987)}]{huang1987intertemporal}
Huang CF (1987) An intertemporal general equilibrium asset pricing model: The
  case of diffusion information. \emph{Econometrica: Journal of the Econometric
  Society} 55:117--142.

\bibitem[{Huang et~al.(2018)Huang, Kerstein, \protect\BIBand{}
  Wang}]{huang2018impact}
Huang HH, Kerstein J, Wang C (2018) The impact of climate risk on firm
  performance and financing choices: An international comparison. \emph{Journal
  of International Business Studies} 49(5):633--656.

\bibitem[{Huang et~al.(2006)Huang, Malham{\'e}, \protect\BIBand{}
  Caines}]{huang2006large}
Huang M, Malham{\'e} RP, Caines PE (2006) Large population stochastic dynamic
  games: closed-loop {M}c{K}ean-{V}lasov systems and the {N}ash certainty
  equivalence principle. \emph{Communications in Information \& Systems}
  6(3):221--252.

\bibitem[{Hugonnier et~al.(2012)Hugonnier, Malamud, \protect\BIBand{}
  Trubowitz}]{hugonnier2012endogenous}
Hugonnier J, Malamud S, Trubowitz E (2012) Endogenous completeness of diffusion
  driven equilibrium markets. \emph{Econometrica} 80(3):1249--1270.

\bibitem[{Ilhan et~al.(2021)Ilhan, Sautner, \protect\BIBand{}
  Vilkov}]{ilhan2021carbon}
Ilhan E, Sautner Z, Vilkov G (2021) Carbon tail risk. \emph{The Review of
  Financial Studies} 34(3):1540--1571.

\bibitem[{Jagannathan et~al.(1995)Jagannathan, McGrattan
  et~al.}]{jagannathan1995capm}
Jagannathan R, McGrattan ER, et~al. (1995) The {CAPM} debate. \emph{Federal
  Reserve Bank of Minneapolis Quarterly Review} 19(4):2--17.

\bibitem[{Krueger et~al.(2020)Krueger, Sautner, \protect\BIBand{}
  Starks}]{krueger2020importance}
Krueger P, Sautner Z, Starks LT (2020) The importance of climate risks for
  institutional investors. \emph{The Review of Financial Studies}
  33(3):1067--1111.

\bibitem[{Lacker \protect\BIBand{} Soret(2020)}]{lacker2020many}
Lacker D, Soret A (2020) Many-player games of optimal consumption and
  investment under relative performance criteria. \emph{Mathematics and
  Financial Economics} 14(2):263--281.

\bibitem[{Landier \protect\BIBand{} Lovo(2020)}]{landier2020esg}
Landier A, Lovo S (2020) {ESG} investing: How to optimize impact? \emph{HEC
  Paris Research Paper No. FIN-2020-1363} .

\bibitem[{Lasry \protect\BIBand{} Lions(2007)}]{lasry2007mean}
Lasry JM, Lions PL (2007) Mean field games. \emph{Japanese journal of
  mathematics} 2(1):229--260.

\bibitem[{Luttmer(2007)}]{luttmer2007selection}
Luttmer EGJ (2007) Selection, growth, and the size distribution of firms.
  \emph{The Quarterly Journal of Economics} 122(3):1103--1144.

\bibitem[{{Morningstar Manager Research}(2022)}]{morningstar}
{Morningstar Manager Research} (2022) Investing in times of climate change.
  Technical report, Morningstar.

\bibitem[{Oehmke \protect\BIBand{} Opp(2022)}]{oehmke2022theory}
Oehmke M, Opp MM (2022) A theory of socially responsible investment.
  \emph{Swedish House of Finance Research Paper} (20-2).

\bibitem[{P{\'a}stor et~al.(2021)P{\'a}stor, Stambaugh, \protect\BIBand{}
  Taylor}]{pastor2021sustainable}
P{\'a}stor L, Stambaugh RF, Taylor LA (2021) Sustainable investing in
  equilibrium. \emph{Journal of Financial Economics} 142(2):550--571.

\bibitem[{Poussin(1915)}]{poussin1915integrale}
Poussin CDLV (1915) Sur l'integrale de lebesgue. \emph{Transactions of the
  American Mathematical Society} 16(4):435--501.

\bibitem[{Protter(2005)}]{protter2005stochastic}
Protter PE (2005) Stochastic differential equations. \emph{Stochastic
  integration and differential equations}, 249--361 (Springer).

\bibitem[{Riedel \protect\BIBand{} Herzberg(2013)}]{riedel2013existence}
Riedel F, Herzberg F (2013) Existence of financial equilibria in continuous
  time with potentially complete markets. \emph{Journal of Mathematical
  Economics} 49(5):398--404.

\end{thebibliography}

% CASE 2: BiBTeX used to generate mypaper.bbl (to be further fine tuned)
%\input{mypaper.bbl} % outcomment this line in Case 2

\end{document}